\newtheorem{theorem}{Theorem}[section]
\newtheorem{claim}[theorem]{Claim}
\newtheorem{lemma}[theorem]{Lemma}
\newtheorem{definition}[theorem]{Definition}
\newtheorem{corollary}[theorem]{Corollary}
\newcommand{\F}{\mathbb{F}}
\newcommand{\N}{\mathbb{N}}
\newcommand{\R}{\mathbb{R}}
\DeclareMathOperator*{\E}{\mathbb{E}}
\newcommand{\1}{ \mathbbm{1}}
\newcommand{\eps}{\varepsilon}
\newcommand{\set}[2]{\left\{#1 \: : \: #2\right\}}
\newcommand{\ip}[2]{\left\langle#1 ,#2 \right\rangle}
\newcommand{\norm}[1]{\left\|#1 \right\|}
\newcommand{\paren}[1]{\left(#1 \right)}
\newcommand{\ind}[1]{\1 \! \! \paren{#1}}
\newcommand{\bits}{\{0,1\}}
\newcommand{\CI}{\text{CI}}
\newcommand{\IP}{\text{IP}}
\newcommand{\Hom}{\text{Hom}}
\newcommand{\Mon}{\text{Mon}}
\newcommand{\utk}[1]{U_{2,k}\left( #1 \right)}
\newcommand{\cH}{\mathcal{H}}
\newcommand{\col}{\text{Col}}
\newcommand{\cols}{\text{Cols}}
\newcommand{\CIC}{\overline{CI}}
\newcommand{\exactly}{\mathsf{ExactlyN}}
\newcommand{\ignore}[1]{{}}
\begin{document}

\title{Explicit separations between randomized and deterministic Number-on-Forehead communication}

\author{
Zander Kelley\thanks{
Department of Computer Science, University of Illinois at Urbana-Champaign. Supported by NSF CAREER award 2047310. Email: \texttt{awk2@illinois.edu}
}
\and
Shachar Lovett\thanks{
Department of Computer Science and Engineering, University of California, San Diego. Supported by NSF DMS award 1953928, NSF CCF award 2006443, and a Simons investigator award. Email: \texttt{slovett@ucsd.edu}
}
\and
Raghu Meka\thanks{Department of Computer Science, University of California, Los Angeles. Supported by NSF AF 2007682 and NSF Collaborative Research Award 2217033. Email: \texttt{raghum@cs.ucla.edu}
}}

\maketitle

\begin{abstract}
We study the power of randomness in the Number-on-Forehead (NOF) model in communication complexity. We construct an explicit 3-player function $f:[N]^3 \to \{0,1\}$, such that: (i) there exist a randomized NOF protocol computing it that sends a constant number of bits; but (ii) any deterministic or nondeterministic NOF protocol computing it requires sending about $(\log N)^{1/3}$ many bits. This exponentially improves upon the previously best-known such separation. At the core of our proof is an extension of a recent result of the first and third authors on sets of integers without 3-term arithmetic progressions into a non-arithmetic setting.
\end{abstract}

\section{Introduction}

Number-on-Forehead (NOF) communication introduced by Chandra, Furst, and Lipton \cite{chandra1983multi} is a central model in communication complexity. In its basic form, there are three parties, Alice, Bob, and Charlie, who each have an input $x,y,z$ respectively written on their forehead, and their goal is to communicate and compute a function $f(x,y,z)$. The twist, of course, is that Alice only knows the inputs $y,z$, Bob the inputs $x,z$, and Charlie the inputs $x,y$. The main goal is to understand how much communication is needed to compute the function $f$. 

Since its introduction, NOF communication complexity has been extensively studied and is known to have connections to circuit lower bounds, data structure lower bounds, and additive combinatorics \cite{chandra1983multi,raz2000bns,beame2006strong,beame2007lower,david2009improved,lee2009disjointness,beame2010separating,beame2012multiparty,lee2016hellinger,linial2018communication,shraibman2018note,alon2020number,linial2021larger,linial2021improved,alman2023matrix}, among others. In this work, we study the relative power of randomized vs. deterministic and non-deterministic protocols in the NOF model. 

Like standard two-party communication, given a function $f: X \times Y \times Z \rightarrow \{0,1\}$, one can study NOF communication complexity under several models: deterministic, non-deterministic and randomized protocols. We formally define these later, and note here that they are the natural analogs of the two-party definitions, where for randomized protocols we assume access to public randomness. 

The (complement of) set-disjointness function provides an explicit example of a function which is easy for nondeterministic protocols but is hard for deterministic or randomized protocols, see \cite{chattopadhyay2010story} for a survey on the history of this problem. Our focus in this paper is on the other direction; namely explicit functions which are easy for randomized protocols, but are hard for deterministic and non-deterministic protocols.

Despite the rich literature on NOF protocols, the relative power of randomness for NOF protocols is poorly understood. Beame {et al.}  \cite{beame2010separating} showed that there are functions $f:[N]^3 \rightarrow \{0,1\}$ whose randomized communication complexity is $O(1)$ but whose deterministic (or even non-deterministic) communication complexity is as large as possible, namely $\Omega(\log N)$. However, this separation is existential (based on a counting argument), and no explicit function is known to strongly separate deterministic and randomized NOF protocols. This is in stark contrast to the two-party case, where the equality function\footnote{The equality function is $EQ:[N]^2 \rightarrow \{0,1\}$, $EQ(x,y) = \1[x=y]$.} has randomized complexity $O(1)$ and deterministic complexity $\Omega(\log N)$.

The best known explicit separation is given by the \emph{Exactly-N} problem, already considered in the seminal work \cite{chandra1983multi} that defined the NOF model: $\exactly:[N]^3 \rightarrow \{0,1\}$ is defined by $\exactly(x,y,z) = 1$ if and only if $x+y + z = N$. The randomized communication complexity of $\exactly$ is $O(1)$ while the best known lower bound on its deterministic communication complexity is $\Omega(\log \log N)$ \cite{beigel2006multiparty,linial2021larger}.
The interest in this function stems from the fact that deterministic lower bounds for it are equivalent to the famous corners problem in additive combinatorics, which asks for the largest subset of $[N]^2$ without a corner\footnote{A corner is a triple of points $(x,y),(x+h,y),(x,y+h)$.}.
The conjectured bounds in additive combinatorics would imply a lower bound of $\Omega(\sqrt{\log N})$ for the deterministic communication complexity of $\exactly$; but as mentioned, the known lower bounds are exponentially far from it.

Our main result is an explicit construction of a 3-party function $D:[N]^3 \to \bits$, with constant randomized NOF communication complexity, and with deterministic (and even non-deterministic) NOF communication complexity polynomial in the input length, concretely $\Omega((\log N)^{1/3})$. 

\begin{theorem}\label{main}
Let $q$ be a prime power and $k$ a large enough constant ($k=34$ suffices). Let $N=q^k$, and identify $[N]$ with $\F_q^k$. Consider the following 3-player function:
$$
D(x,y,z) = \1[\ip{x}{y}=\ip{x}{z}=\ip{y}{z}],
$$
where $\ip{\cdot}{\cdot}$ denotes the standard inner-product in $\F_q^k$. Then:
\begin{enumerate}
    \item The randomized NOF communication complexity of $D$ (with public randomness) is $O(1)$.
    \item The deterministic or non-deterministic NOF communication complexity of $D$ is $\Omega((\log N)^{1/3})$.
\end{enumerate}
\end{theorem}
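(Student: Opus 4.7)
Each pairwise inner product is visible to exactly one player: Alice (seeing $y,z$) knows $\ip{y}{z}$, Bob knows $\ip{x}{z}$, and Charlie knows $\ip{x}{y}$. So $D$ reduces to $3$-party equality of values in $\F_q$, solvable in $O(1)$ bits by the standard public-coin hashing protocol: apply a shared random $h:\F_q\to\bits^{O(1)}$ to each player's value, broadcast, and accept iff all three hashes agree. A union bound gives constant one-sided error, which can be driven down by constantly many repetitions.

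\textbf{Lower bound, reduction to cylinder-intersection density.} A nondeterministic NOF protocol of cost $c$ covers $D^{-1}(1)$ with at most $2^c$ cylinder intersections $\cH=\set{(x,y,z)}{A(y,z)\wedge B(x,z)\wedge C(x,y)}$, so by averaging it suffices to prove
$$
\max_{\cH\subseteq D^{-1}(1)}\ \frac{|\cH|}{|D^{-1}(1)|}\ \le\ \exp\paren{-\Omega\paren{(\log N)^{1/3}}}.
$$
Guided by the abstract, the plan is to view this as a non-arithmetic analogue of the Kelley--Meka density bound for $3$-AP-free sets. For each $z$ the fiber $\cH_z=\set{(x,y)}{(x,y,z)\in\cH}$ sits inside the rectangle $B_z\times A_z$ intersected with the relation $C$, and every $(x,y)\in\cH_z$ satisfies $x-y\in z^\perp$ together with $\ip{y}{z}=\ip{x}{y}$. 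From these fibers I would extract a combinatorial core $S$ in a low-dimensional ambient space whose density is comparable to $|\cH|/|D^{-1}(1)|$, and use the inner-product constraints together with the cylinder-intersection closure to force $S$ to avoid a non-arithmetic ``triangle''---for instance, three vectors with all pairwise inner products coinciding---which plays the role of a $3$-AP in this setting.

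\textbf{Main obstacle.} The crux is then a Kelley--Meka-style density bound for such non-arithmetic triangle-free sets: any such subset of $\F_q^k$ has density $\exp(-(\log N)^{\Omega(1)})$. I expect the argument to follow the density-increment skeleton of the first and third authors' $3$-AP work (iterated Cauchy--Schwarz manipulations leading to correlation with a low-complexity structured object, then passage to a denser subset), adapted to a Fourier framework suited to the bilinear form---e.g.\ slicing on the level sets of $\ip{x}{y}$, running additive Fourier analysis per slice, and recombining the per-slice increments. The main difficulty is iterating the density increment in this non-arithmetic regime without destroying the cylinder-intersection structure or the underlying domain geometry (dilations in the bilinear setting do not mirror dilations of $\F_q^k$). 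The final exponent $1/3$ should emerge from combining the Kelley--Meka rate with polynomial losses from extracting $S$, and the constraint $k\ge 34$ is there to absorb those losses.
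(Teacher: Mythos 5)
Your upper bound matches the paper's: each player sees exactly one of the three pairwise inner products, so $D$ reduces to a $3$-party equality test over $\F_q$, solved in $O(1)$ bits with a public-coin hash.

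Your reduction of the lower bound to bounding the relative density of cylinder intersections inside $D^{-1}(1)$ is also correct, but the route you sketch to that bound has a genuine gap, and one part of it is framed backwards. You propose to extract a set $S$ that \emph{avoids} a ``non-arithmetic triangle'' (three vectors with all pairwise inner products equal). But $D$ \emph{is} the set of such triples; any cylinder intersection $\mathcal{H}\subseteq D^{-1}(1)$ is made entirely of such triangles, so there is no triangle-free set to bound. What the paper actually proves is an abstract statement: if a sparse set $D\subseteq[N]^3$ is \emph{pseudorandom against cubes} --- its density and second moments such as $\E_{x,y,z,z'}[D(x,y,z)D(x,y,z')]$ are essentially unchanged on every cube $X\times Y\times Z$ of size $\ge \gamma N^3$ --- then no cylinder intersection of density $\le 2^{-t}$ can capture more than a $2^{-\Omega(t^{1/3})}$ fraction of $D$. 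The inner-product construction only enters through a Lindsey-lemma expander-coloring bound that certifies this cube-pseudorandomness; no arithmetic structure of $\F_q^k$ survives into the main lemma.

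The missing technical engine, which your plan does not identify and which the ``non-arithmetic setting'' of the abstract refers to, is the following: if $A:X\times Z\to[0,1]$ and $B:Y\times Z\to[0,1]$ are dense, \emph{spread} (not noticeably denser on any large rectangle), and left-lower-bounded, then the normalized product $A\circ B$ is near-uniform. This is proved via grid norms $U_{\ell,k}$, a decoupling inequality, and the odd-moments lemma, in analogy with the sifting and spectral-positivity steps of Kelley--Meka, but it is a statement about matrix products with no group action anywhere. The density increment that locates the right structure runs on cubes $C\subseteq[N]^3$ against a potential $(|T\cap C|/|D\cap C|)\cdot|C|^{\eta}$, tracking spreadness and lower-boundedness of the marginals of $T$ inside $C$; there is no per-slice Fourier analysis and no passage to a low-dimensional arithmetic core. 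Once a good cube is found, near-uniformity of products forces $\E_{x,y,z}f(x,z)g(y,z)h(x,y)\approx\E f\,\E g\,\E h$ for the marginal functions of $T$, hence $|\CIC(T)|$ is large, which is the contradiction. Without a concrete surrogate for that matrix-product lemma, I do not see how your ``slice on level sets of $\ip{x}{y}$ and run additive Fourier analysis'' plan could close.
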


One of the challenges in showing a result as above is the dearth of techniques for proving lower bounds in the NOF communication complexity. By and large, the main technique for showing lower bounds in the 3-party NOF model is the discrepancy method as introduced in \cite{babai1989multiparty}. However, these techniques typically also lower bound the randomized communication complexity. As such, they are not immediately useful for separating deterministic and randomized communication complexity. 

Inspired by recent progress on the \emph{three-term arithmetic progressions} problem \cite{kelley2023strong}, we introduce a new technique for lower bounding deterministic NOF communication complexity. We are optimistic that this technique will find potential applications (for instance for the $\exactly$ problem), beyond the application in this paper. Below we first give a high-level overview of the main ideas, and then delve into the technical details.

\section{Proof overview}
We next give a high-level description of the proof of \Cref{main} highlighting the new techniques we introduce. 
The proof involves two modular steps:
\begin{itemize}
    \item We introduce a notion of \emph{pseudorandomness against cubes} for sets $D \subset [N]^3$. We show that computing membership in any \emph{sparse} pseudorandom set $D$ (i.e., $|D| < N^{3-c}$ for a constant $c > 0$) is hard for deterministic and non-deterministic NOF protocols. 
    \item We then construct sparse pseudorandom sets as above, which are in addition easy for randomized protocols. 
\end{itemize}

The first step above is more intricate, and we describe it first. 

\subsection{Pseudorandomness against cubes}

A cube $C \subset [N]^3$ is a set of the form $C=X \times Y \times Z$ for some subsets $X, Y, Z \subseteq [N]$.
At a high level, we say $D \subset [N]^3$ is pseudorandom if for any large enough cube $C$, the intersection $D \cap C$ \emph{looks random} in two ways: 

\begin{itemize}
    \item Its  density when restricted to the cube is comparable to its overall density.
    \item The marginals of $D$ on the faces of $C$ are close to uniform. 
\end{itemize}

Below we use the following convention: we identify a set $D \subset [N]^3$ with its indicator function $D:[N]^3 \to \bits$. The density of a set is $\E[D]=|D|/N^3$.

\begin{definition}[Pseudorandom set with respect to large cubes]
Let $D \subset [N]^3$ of density $\mu=\E[D]$, and let $\gamma>0$. Given a cube $C=X \times Y \times Z \subset [N]^3$, we say that $D$ is $\gamma$-pseudorandom with respect to $C$ if the following conditions hold:
    \begin{enumerate}
        \item $\E_{x,y,z \in X\times Y\times Z} [D(x,y,z)] = \mu (1 \pm \gamma)$.
        \item $\E_{x,y,z,z' \in X \times Y \times Z \times Z} [D(x,y,z) D(x,y,z')] = \mu^2(1 \pm \gamma)$,
        as well as the analogous conditions involving $(x,x',y,z)$ and $(x,y,y',z)$.
    \end{enumerate}
We say that $D$ is $\gamma$-pseudorandom with respect to large cubes if $D$ is $\gamma$-pseudorandom with respect to any cube $C$ of size $|C| \ge \gamma N^3$.
\end{definition}

We show that the above notion of pseudorandomness along with \emph{sparsity} suffices for hardness against non-deterministic NOF protocols (which include as a special case also deterministic NOF protocols). 

\begin{theorem}
\label{nondet-lb-psd-sparse}
Let $D \subset [N]^3$ be a set of size $|D| \le N^{3-c}$ which is  $(N^{-c})$-pseudorandom with respect to large cubes, for some constant $c>0$.
Then any non-deterministic NOF protocol which computes $D$ must send $\Omega((\log N)^{1/3})$ communication.
\end{theorem}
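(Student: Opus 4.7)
The plan is to combine the standard cylinder-intersection characterization of non-deterministic NOF protocols with a density-increment argument driven by pseudorandomness against cubes.

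First, I would reduce to bounding the size of a single cylinder intersection. A non-deterministic NOF protocol of cost $c$ bits for $D$ induces a cover of $D$ by at most $2^c$ cylinder intersections $T = \{(x,y,z) : a(y,z) b(x,z) e(x,y) = 1\}$, each contained in $D$. Hence it suffices to show that every cylinder intersection $T \subseteq D$ satisfies $|T|/N^3 \le \mu \cdot 2^{-\Omega((\log N)^{1/3})}$, where $\mu = |D|/N^3$.

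My approach would be a density-increment lemma of the following shape: if a cylinder intersection $T = abe \subseteq D$ has density at least $\mu/2^s$ on some cube $K \subseteq [N]^3$, then there is a sub-cube $K' \subseteq K$ with $|K'| \ge |K| \cdot 2^{-s^{O(1)}}$ on which $\E[D] \ge \mu(1 + \Omega(1))$. To prove such a lemma, I would apply iterated Gowers-style Cauchy--Schwarz, one round per NOF coordinate, to the identity $\E_K[D \cdot abe] = \E_K[abe] \ge \mu/2^s$ that follows from $T \subseteq D$. Each round unfolds one factor of $T$ into a pair of copies at the cost of a marginal factor, which one pays for by passing to a sub-cube where the relevant marginal of $a$, $b$, or $e$ is dense (a ``popular row/column'' averaging step, in the spirit of the sifting lemma of Kelley--Meka \cite{kelley2023strong}).

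Granting the lemma, iterating on the identified sub-cube drives the density of $D$ above $1$ after at most $O(\log(1/\mu)) = O(\log N)$ rounds, which is absurd. The total cube shrinkage across all rounds must remain above the pseudorandomness threshold $\gamma = N^{-c}$, and balancing the per-round size-loss against the total number of rounds is what produces the exponent $(\log N)^{1/3}$ in the final bound on $s$.

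The main obstacle will be organizing the Cauchy--Schwarz tower so that each round shrinks the cube by only a modest factor in $s$, with exactly the right exponent to balance the iteration at $(\log N)^{1/3}$, and ensuring that the weak two-point pseudorandomness condition of the hypothesis is strong enough to close the argument when the three rounds of box-norm unfolding introduce higher-order correlation terms involving $D$. Conceptually this amounts to transporting the structure-theorem / sifting approach of Kelley--Meka \cite{kelley2023strong} from the arithmetic (convolution-over-$\F_p$) setting to the non-arithmetic cube setting imposed by the NOF model.
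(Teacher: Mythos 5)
Your opening reduction is exactly right and matches the paper: a non-deterministic protocol of cost $b$ covers $D$ by $2^b$ cylinder intersections $F_i \subseteq D$, so it suffices to show each one has $|F_i| \le 2^{-\Omega((\log N)^{1/3})}|D|$. The paper proves this (via \cref{psd-ci}/\cref{psd-cic}) by showing that any $T \subseteq D$ with $|T| \ge 2^{-d}|D|$ has cylinder intersection closure of size at least $2^{-O(d^3)}N^3$. Your proposed route is genuinely different, and I think it has a real gap in its central step.

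The density-increment lemma you state --- ``find a sub-cube $K'$ with $|K'| \ge |K|\cdot 2^{-s^{O(1)}}$ on which $\E[D] \ge \mu(1+\Omega(1))$'' --- cannot hold in the regime you would need it: $D$ is assumed $\gamma$-pseudorandom, so on \emph{every} cube of size at least $\gamma N^3$ we have $\E[D] \le (1+\gamma)\mu$, and $\gamma = N^{-c}$ is tiny. Thus any such $K'$ would have to be smaller than $\gamma N^3$, where the pseudorandomness hypothesis you are relying on to control the CS-unfolding no longer applies. The paper's argument is density-increment-like, but the incremented quantity is not $\E[D]$; it is the potential $\phi(C) = \frac{|T\cap C|}{|D\cap C|}\cdot |C|^\eta$, which trades off the \emph{relative} density of $T$ inside $D$ against the cube size. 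Crucially this is used as a one-shot extremal argument (choose the cube maximizing $\phi$ and read off spreadness and left-lower-boundedness of the marginal ratio functions $f,g,h$ from maximality), not as an iteration.

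The iteration you propose also does not balance. Since $\mu = N^{-\Theta(1)}$, pushing $\E[D]$ above $1$ by constant-factor increments takes $\Theta(\log N)$ rounds, each costing a cube shrinkage of $2^{-s^{O(1)}}$; the total shrinkage $2^{-\Theta(\log N)\cdot s^{O(1)}}$ can stay above $\gamma N^3 = N^{3-c}$ only if $s = O(1)$, which is the wrong conclusion. In the paper the $(\log N)^{1/3}$ exponent arises non-iteratively: \cref{find-cube} loses a factor $2^{-O(dr/\eps)}$ with $r = O(d^2)$ (the spreadness parameter needed so that \cref{spread-to-near-uniform} produces near-uniformity with $k=O(d)$ moments, enough to beat the $2^{-d}$ density), giving total loss $2^{-O(d^3)}$ and hence $d = \Theta(t^{1/3})$.

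Finally, you correctly flag the worry about higher-order correlations, but the proposal does not resolve it. The pseudorandomness hypothesis is only a second-moment condition on $D$, which the paper exploits via Chebyshev (\cref{approx-h}) to approximate the marginal ratio functions $f(x,z) = \E_y T / \E_y D$, etc. A Cauchy--Schwarz tower that unfolds copies of $D$ (e.g., $\E D(x,y,z)D(x,y',z')D(\cdots)$) would require fourth or higher moments of $D$, which the hypothesis does not supply. The paper avoids this entirely by routing through those ratio functions and the new matrix-product theorem \cref{spread-to-near-uniform} --- which is the main technical innovation of the paper and does not appear in your outline.
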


The above result follows as a corollary to a more general result on how pseudorandom sets as defined above interact with \emph{cylinder intersections} which we next describe.

\subsection{From pseudorandomness against cubes to cylinder intersections}

We use the well-known connection between NOF protocols and \emph{cylinder intersections}. 

\begin{definition}[Cylinder intersection]
A set $T \subset [N]^3$ is a cylinder intersection, if there are sets $S_1,S_2,S_3 \subset [N]^2$ such that
$$
T = \{(x,y,z): (x,y) \in S_1, (x,z) \in S_2, (y,z) \in S_3\}.
$$
Equivalently, using the function notation, we have the indicator function equality $$T(x,y,z) \equiv S_1(x,y) S_2(x,z) S_3(y,z).$$ 
We denote $T=\CI(S_1,S_2,S_3)$.
\end{definition}

It is well-known that if a function $D$ has a non-deterministic NOF protocol which sends $b$ bits, then $D$ can be expressed as the union of $2^b$ cylinder intersections. It thus suffices to focus on the structure of cylinder intersections, and their relation to pseudorandom sets $D$. Our main technical theorem in this context shows that any sparse cylinder intersection must also be sparse with respect to a pseudorandom set $D$.

\begin{theorem}[Pseudorandom sets for cubes are pseudorandom for cylinder intersections]
\label{psd-ci}
Let $D \subset [N]^3$ be $\gamma$-pseudorandom with respect to large cubes. Then, $D$ is pseudorandom with respect to cylinder intersections in the following one-sided sense. Let $t \le \log(1/\gamma)$. For any cylinder intersection $F$ of density
$$
\E_{(x,y,z) \in [N]^3} F(x,y,z) \le 2^{-t},
$$
it holds
$$
\E_{(x,y,z) \in D} F(x,y,z) \leq 2^{-c t^{1/3}}
$$
for some (absolute) constant $c>0$.
\end{theorem}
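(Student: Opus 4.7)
The plan is a structure-versus-pseudorandomness decomposition of $F = S_1(x,y) S_2(x,z) S_3(y,z)$, combining a Frieze--Kannan-style cut-norm approximation of each cylinder face with direct pseudorandomness on combinatorial cubes, and a Cauchy--Schwarz chain to handle the residual. Write $\delta = \E_{(x,y,z) \in D} F(x,y,z)$ and $g = D - \mu$, where $\mu = \E[D]$. Since $\mu \delta = \E[D F] = \mu \E[F] + \E[gF]$ and $\E[F] \le 2^{-t}$, it suffices to show $|\E[gF]| \le \mu \cdot 2^{-c t^{1/3}}$.

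First, I would apply a Frieze--Kannan-style cut-norm approximation to each $S_i$, writing $S_i = \tilde S_i + R_i$, where $\tilde S_i = \sum_{j \le K} \alpha_{i,j}\, \1(x \in A_{i,j}, y \in B_{i,j})$ is a signed sum of $K = \mathrm{poly}(1/\eta)$ rank-one cut matrices and $\|R_i\|_\square \le \eta$, for a parameter $\eta > 0$ to be tuned later. Expanding $\tilde F := \tilde S_1 \tilde S_2 \tilde S_3$ gives a signed combination of at most $K^3$ indicator functions of combinatorial cubes of the form $(A_{1,j_1} \cap A_{2,j_2}) \times (B_{1,j_1} \cap A_{3,j_3}) \times (B_{2,j_2} \cap B_{3,j_3})$. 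For each cube $C$ of relative volume $\ge \gamma$, cube-pseudorandomness yields $|\E[g \cdot \1_C]| \le O(\mu \gamma) \cdot |C|/N^3$; cubes of smaller relative volume contribute at most $O(\mu \gamma)$ each to either side. Summing, $|\E[g \tilde F]| \le O(\mu K^3 \gamma)$.

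The bulk of the technical work lies in bounding the residual $|\E[g (F - \tilde F)]|$, which expands as seven terms, each containing at least one $R_i$ factor. For each such term $\E[g R_i G]$, where $G$ is the product of two other bounded face functions, I would apply three iterated Cauchy--Schwarz inequalities, doubling the variables $z$, $y$, $x$ in turn, to produce a schematic inequality
\[
|\E[g R_i G]|^8 \le (\text{product of cut/box norms of } R_i \text{ and } G) \cdot \|g\|_{U_{2,3}}^8,
\]
where $\|g\|_{U_{2,3}}^8 = \E_{x,x',y,y',z,z'} \prod_{\epsilon \in \bits^3} g(x_{\epsilon_1}, y_{\epsilon_2}, z_{\epsilon_3})$ is the three-dimensional box norm. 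The $R_i$-factor is controlled by $\|R_i\|_\square \le \eta$ together with the trivial bounds on $G$, while $\|g\|_{U_{2,3}}^8$ is bounded via cube-pseudorandomness by an iterative argument that partitions each of the outer averaging ranges into large subcubes and applies the pairwise hypothesis on each piece. This yields a per-residual-term bound of the form $\mu \cdot \eta^{c'}$ for some constant $c' > 0$.

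Setting $\eta = 2^{-\alpha t^{1/3}}$ for a sufficiently small constant $\alpha > 0$ balances the structured error $\mu K^3 \gamma = \mu \eta^{-O(1)} \gamma$ against the residual error $\mu \eta^{c'}$; using the assumption $t \le \log(1/\gamma)$ gives $\gamma \le 2^{-t}$, and both errors become $\mu \cdot 2^{-c t^{1/3}}$ as required. The principal obstacle is the final reduction: passing from the eight-point box-norm $\|g\|_{U_{2,3}}^8$ produced by the three Cauchy--Schwarz doublings down to the two-point cube-pseudorandomness hypothesis. This step is the most delicate, requiring telescoping over large subcubes with careful error accounting, since the cube-pseudorandomness axiom only controls pairwise correlations on individual cubes and the desired bound is a simultaneous three-fold doubling. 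The cube-root exponent $t^{1/3}$ in the conclusion is exactly the balance point between the three Cauchy--Schwarz losses and the polynomial cost of cut-norm regularization.
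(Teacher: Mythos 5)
Your proposed route---cut-norm regularize the three cylinder faces, control the structured part via cube-pseudorandomness, and bound the residual by iterated Cauchy--Schwarz against the three-dimensional box norm of $g = D - \mu$---is genuinely different from what the paper does, but it has a fatal flaw at exactly the point you flag as ``most delicate.'' The quantity
$$
\E_{x_0,x_1,y_0,y_1,z_0,z_1} \prod_{(i,j,k) \in \{0,1\}^3} g(x_i,y_j,z_k)
$$
is not controlled by the pseudorandomness hypothesis (which constrains only $4$-point correlations of the form $\E_C[D(x,y,z)D(x,y,z')]$ on individual large cubes), and---more decisively---it is provably \emph{large} for the sets the theorem is meant to apply to. Take $D = D(\col)$ from \Cref{pseudorandomness-from-expander-coloring}. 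The leading ($g \mapsto D$ in all eight slots) term is the probability that all eight triples $(x_i,y_j,z_k)$ lie in $D$, i.e.\ that all eight triangles on the six vertices $x_0,x_1,y_0,y_1,z_0,z_1$ are monochromatic; a connectivity argument on the twelve edges forces them all to share one color, giving probability $\approx M^{-11}$. Since $\mu \approx M^{-2}$, this is $M^5 \cdot \mu^8$. So the box norm of $g$ is around $\mu^{11/16}$, which is far larger than $\mu$, and the residual bound you need, $|\E[g R_i G]| \lesssim \mu\,\eta^{c'}$, is false no matter how the ``telescoping over large subcubes'' is carried out. It is a deliberate feature, not a shortcoming, of the construction that $D - \mu$ is not box-norm pseudorandom: a sparse explicit set whose centered indicator had genuinely small box norm would be a much harder object.

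The paper sidesteps this obstruction entirely. It reformulates \Cref{psd-ci} as the contrapositive statement \Cref{psd-cic} about the cylinder intersection closure, and then (\Cref{find-cube}) density-increments along the potential $\phi(C) = \frac{|T \cap C|}{|D\cap C|}|C|^{\eta}$ to land on a large cube $C$ on which the three \emph{conditional density} functions $f(x,z) = \E_y T / \E_y D$, etc., are dense, $(r,\eps)$-spread, and left-lower-bounded. Note these are two-variable, $[0,1]$-bounded functions, not $D-\mu$ itself. Then \Cref{spread-to-near-uniform}, the Kelley--Meka sifting argument applied to the grid norms $U_{2,k}$ of these two-variable functions, shows $f \circ g$ is near-uniform, hence $\E[fgh] \approx \E f \,\E g\,\E h$ and $\CIC(T)$ is large. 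The crucial difference is that the sparsity of $D$ is divided out at the start when forming $f,g,h$ (since $T \subset D$, the ratios are $1$-bounded and dense), so the argument never needs any higher-order pseudorandomness of $D$ beyond the $4$-point condition, and never pays the cut-norm regularization cost that your plan incurs and that is known to be incompatible with a bound as strong as $2^{-c t^{1/3}}$.
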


\Cref{psd-ci} shows that any set $D$ which is sparse and pseudorandom for large cubes, must be difficult for non-deterministic NOF protocols. \Cref{nondet-lb-psd-sparse} follows easily from the above theorem.

\begin{proof}[Proof of \Cref{nondet-lb-psd-sparse} from \Cref{psd-ci}] Assume that a non-deterministic NOF protocol which sends $b$ bits decides $D(x,y,z)$. This implies that $D$ can be expressed as the union of $2^b$ cylinder intersections $F_1,\ldots,F_{2^b} \subset [N]^3$. For each cylinder intersection $F_i$, since $F_i \subset D$ and we assume $|D| \le N^{3-c}$, we have $|F_i| \le N^{3-c}$. Since we assume that $\gamma=N^{-c}$, we may apply \Cref{psd-ci} for $t=c\log N$. This implies that for some constant $c'>0$,
$$
|F_i \cap D| \le 2^{-c' (\log N)^{1/3}} |D|.
$$
Since $F_1,\ldots,F_{2^b}$ cover $D$, their number must satisfy $2^b \ge 2^{c' (\log N)^{1/3}}$, from which the theorem follows.
\end{proof}

We next describe the main ideas in the proof of \Cref{psd-ci}. Its proof relies on a new technical result, that shows that the product of pseudorandom matrices is close to uniform.

\subsection{Uniformity from spreadness}
\label{subsec:uniformity-from-spreadness}

We consider matrices with bounded entries. Given a matrix $A$ with rows indexed by $X$, columns indexed by $Y$ and entries bounded in $[0,1]$, it will be convenient for us to identify it with the function $A:X \times Y \to [0,1]$. We study three properties of matrices each of which measures how close a matrix is to a constant matrix. The first property is \emph{spreadness}, which means that the density of a matrix cannot be significantly increased by restricting to a large rectangle.

\begin{definition}[Spreadness]
Let $A:X \times Y \to [0,1]$, and let $r \ge 1, \eps \in (0,1)$.
We say that $A$ is $(r,\eps)$-spread if for any rectangle $R=X' \times Y' \subset X \times Y$ of size $|R| \ge 2^{-r} |X \times Y|$, it holds that
$$
\E_{(x,y) \in R} [A(x,y)] \le (1+\eps) \E[A].
$$
\end{definition}

The second property is \emph{left lower-bounded}, which in matrix notation means that row averages are not much lower than the global average of the matrix.

\begin{definition}[Left lower-bounded]
Let $A:X \times Y \to [0,1]$, and let $\eps \in (0,1)$.
We say that $A$ is $\eps$-left lower-bounded if
$$
\E_{y \in Y} [A(x,y)] \ge (1-\eps) \E[A] \quad \forall x \in X.
$$
\end{definition}

The third property is \emph{uniformity}, which means that nearly all the entries of a matrix are multiplicatively close to some fixed value.

\begin{definition}[Uniformity]
Let $A:X \times Y \to [0,1]$, and let $k \ge 1, \alpha, \eps \in (0,1)$.
We say that $A$ is $(\alpha, k,\eps)$-uniform if
$$
\Pr_{(x,y) \in X \times Y}[(1-\eps) \alpha \le A(x,y) \le (1+\eps) \alpha] \ge 1-2^{-k}.
$$
\end{definition}

Given $A:X \times Z \to [0,1]$, $B:Y \times Z \to [0,1]$, define their normalized product  $A \circ B:X \times Y \to [0,1]$ as
$$
(A \circ B)(x,y) = \E_{z \in Z} A(x,z) B(y,z).
$$
If we consider $A,B$ as matrices, then $A \circ B = \frac{1}{|Z|} A B^\top$. 
The following theorem, which can be considered as the main new technical step in the proof, shows that if $A,B$ are both spread and left lower-bounded, then their normalized product $A \circ B$ is uniform, where almost all its elements are close to the value expected in the random case, namely $\E[A] \E[B]$.

\begin{theorem}[Product of spread matrices is uniform] \label{spread-to-uniform}
Let $A:X \times Z \to [0,1], B:Y \times Z \to [0,1]$. Let $d,k \ge 1$ and $\eps \in (0,1/80)$. Assume that
\begin{enumerate}
    \item $\E[A], \E[B] \ge 2^{-d}$.
    \item $A,B$ are $(r,\eps)$-spread for $r \ge \Omega(dk/\eps)$.
    \item $A,B$ are $\eps$-left lower-bounded.
\end{enumerate}
Then $A \circ B$ is $(\E[A] \E[B], k, 80 \eps)$ uniform.
\end{theorem}

We note that a corollary of \Cref{spread-to-uniform} is that $\E[A \circ B] \approx \E[A] \E[B]$. Explicitly, $\E[A \circ B] = (1 \pm 80 \eps) \E[A] \E[B] \pm 2^{-k}$.

\subsection{Cylinder intersection closure}
\label{subsec:cylinder-intersection-closure}

We now turn back to prove \Cref{psd-ci}. It will be more convenient to prove an equivalent version of it, which relates to the \emph{cylinder intersection closure} of sets.

\begin{definition}[Cylinder intersection closure]
Let $T \subset [N]^3$. Its cylinder intersection closure, denoted $\CIC(T)$, is the smallest cylinder intersection containing $T$. Explicitly, if we denote the marginals of $T$ by
$$
T_{XY} = \{(x,y): (x,y,z) \in T\}, \quad T_{XZ} = \{(x,z): (x,y,z) \in T\}, \quad T_{YZ} = \{(y,z): (x,y,z) \in T\},
$$
then its cylinder intersection closure is 
$$\CIC(T) = \CI(T_{XY},T_{XZ},T_{YZ}).$$
In other words, $\CIC(T)$ is the set of all points $(x,y,z)$ such that $(x,y,z'),(x,y',z),(x',y,z) \in T$ for some $x',y',z' \in [N]$.
\end{definition}

The following theorem is equivalent to \Cref{psd-ci}, and more convenient for us to prove (see \Cref{psd-ci-cic-equivalent} for a formal proof of equivalence). 

\begin{theorem}
\label{psd-cic}
Let $D \subset [N]^3$ be $\gamma$-pseudorandom with respect to large cubes. Let $T \subset D$ of size $|T| \ge 2^{-d} |D|$, and assume $\gamma \le 2^{-O(d^3)}$. Then
$$
|\CIC(T)| \ge 2^{-O(d^3)} N^3.
$$
\end{theorem}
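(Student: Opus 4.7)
The plan is to locate a sub-cube $X' \times Y' \times Z' \subseteq [N]^3$ of size $\geq 2^{-O(d^3)} N^3$ on which $T$ is ``spread,'' and then invoke Theorem~\ref{spread-to-near-uniform} on the 2D marginal matrices of $T$ to conclude that a positive fraction of the sub-cube lies in $\CIC(T)$. Throughout, denote $\mu := |D|/N^3$; note that the hypothesis $\gamma \leq 2^{-O(d^3)}$ forces $\mu \geq 2^{-O(d^3)}$ for the pseudorandomness to be non-vacuous.

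\textbf{Step 1 (Density increment to a spread sub-cube).} I would iteratively refine, starting from $[N]^3$. At each step, if there is an axis-aligned thin sub-cube (obtained by restricting one of the three coordinates) of relative size $\geq 2^{-r}$ on which the relative density of $T$ within $D$ increases by a factor $\geq (1+\eps)$, pass to it. Since $D$ is $\gamma$-pseudorandom with $\gamma \leq 2^{-O(d^3)}$, the density of $D$ on every large cube is $\mu(1 \pm o(1))$, so the relative density of $T$ inside $D$ is bounded by $1$ everywhere, forcing termination in $O(d/\eps)$ iterations. Choosing $r = \Theta(d^2/\eps)$ so that the parameter demand of Step~3 is met, the final sub-cube has size $\geq 2^{-O(d^3)} N^3$, and on it $T$ is spread: no thin sub-cube of relative size $\geq 2^{-r}$ has density $> (1+\eps)\alpha$, where $\alpha := |T \cap (X' \times Y' \times Z')|/(|X'||Y'||Z'|) \gtrsim 2^{-d}\mu$.

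\textbf{Step 2 (Spread, left-lower-bounded marginals).} Define the $\{0,1\}$ matrices $A(x,z) := T_{XZ}(x,z)$ on $X' \times Z'$ and $B(y,z) := T_{YZ}(y,z)$ on $Y' \times Z'$. The second-moment conditions in the pseudorandomness of $D$ imply that for all but an $o(1)$ fraction of $(x,z)$ one has $\sum_y D(x,y,z) = \mu |Y'|(1 \pm o(1))$; since $T \subseteq D$, the same upper bound controls the columns of $T$, and a direct counting argument gives $\E[A] \geq 2^{-d-O(1)}$. The 3D-cube spreadness of $T$ from Step~1, combined with this uniform column-size control, yields $(r,\eps)$-spreadness of $A$ and $B$ as 2D matrices; removing a small fraction of rows with unusually low density renders them $\eps$-left lower-bounded. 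The analogous statement holds for the third marginal, giving density $\geq 2^{-d-O(1)}$ for $T_{XY}$ on $X' \times Y'$.

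\textbf{Step 3 (Applying Theorem~\ref{spread-to-near-uniform} and counting).} I then apply Theorem~\ref{spread-to-near-uniform} with $k = \Theta(d)$, and hence $r = \Omega(dk/\eps) = \Omega(d^2/\eps)$, matching Step~1. The theorem gives that $A \circ B$ is $(k, O(\eps))$-near-uniform with mean $\gtrsim (\E A)(\E B) \geq 2^{-2d-O(1)}$. Thus for all but a $2^{-k}$ fraction of pairs $(x,y) \in X' \times Y'$ there are $\gtrsim 2^{-2d}|Z'|$ values of $z$ simultaneously satisfying $(x,z) \in T_{XZ}$ and $(y,z) \in T_{YZ}$. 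Intersecting with the density $\geq 2^{-d-O(1)}$ of $T_{XY}$ on $X' \times Y'$ (Step~2), every such $(x,y,z)$ lies in $\CIC(T)$, so $|\CIC(T) \cap (X' \times Y' \times Z')| \geq 2^{-3d-O(1)} |X'||Y'||Z'| \geq 2^{-O(d^3)} N^3$.

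\textbf{Main obstacle.} The tight part is Step~1: the spread parameter $r \sim d^2/\eps$ required by Step~3 forces a per-step shrinkage of $2^{-r}$, and with $O(d/\eps)$ iterations the total shrinkage $2^{-O(d^3)}$ exhausts the entire budget permitted by $\gamma$; any looseness in either the spread parameter or the iteration count would immediately break the $d^3$ bound. A secondary difficulty lies in Step~2, namely transferring 3D-cube spreadness of $T$ into 2D-matrix spreadness of the marginals $A, B$; this relies crucially on the \emph{uniform} (not merely average) column-size bounds supplied by the second-moment conditions of $D$-pseudorandomness.
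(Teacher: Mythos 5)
Your overall architecture matches the paper's: a density-increment to a good cube, then an application of \Cref{spread-to-near-uniform}, then a counting step. But Step 2 contains a genuine gap, and it is precisely the point where the paper makes a different (and essential) choice.

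You define $A := T_{XZ}$ and $B := T_{YZ}$ as the $\{0,1\}$ indicator matrices of the marginals and claim that 3D-cube spreadness of $T$, plus the (approximately) uniform upper bound on column sums $\sum_y T(x,y,z) \le \sum_y D(x,y,z)$, yields $(r,\eps)$-spreadness of $A$. This implication is false. The 3D spreadness controls the \emph{count} $|T \cap (R \times Y)|$ for rectangles $R \subset X' \times Z'$, while $A = T_{XZ}$ records only the \emph{support} of the marginal. Pseudorandomness of $D$ gives you only a one-sided (upper) bound on column sums of $T$; nothing prevents $T$ from having, on some large rectangle $R$, exactly one $y$ per $(x,z)$, in which case $T_{XZ} \equiv 1$ on $R$ while $|T \cap (R \times Y)|$ is tiny and the 3D density of $T$ on $R \times Y$ is small. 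Concretely, the only inequality you get from 3D spreadness is
$$
\frac{|T_{XZ} \cap R|}{|R|} \;\le\; \frac{|T \cap (R\times Y)|}{|R|} \;\le\; (1+\eps)\,\frac{|T \cap C|}{|X'||Z'|},
$$
whereas the density $\E[T_{XZ}]$ you are comparing against can be as small as $|T\cap C|/(2\mu|Y'| \cdot |X'||Z'|)$, so the implied spreadness ratio is loose by a factor of order $\mu|Y'|$, which is enormous. Your example in the ``main obstacle'' paragraph frames the tightness as a budget-accounting concern, but the issue in Step 2 is a logical failure, not a constant-factor crunch.

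The paper resolves exactly this by replacing the indicator marginals with the \emph{ratio functions}
$$
f(x,z) = \frac{\E_{y} T(x,y,z)}{\E_{y} D(x,y,z)}, \qquad
g(y,z) = \frac{\E_{x} T(x,y,z)}{\E_{x} D(x,y,z)}, \qquad
h(x,y) = \frac{\E_{z} T(x,y,z)}{\E_{z} D(x,y,z)},
$$
which are $[0,1]$-valued, are supported on the marginals of $T$ (so still give a pointwise lower bound $\CIC(T)(x,y,z) \ge f(x,z)g(y,z)h(x,y)$), and crucially normalize by the $D$-column size so that 3D density increments of $T$ relative to $D$ translate directly into 2D density increments of $f,g,h$. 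That is the content of \Cref{find-cube}, whose proof uses the pseudorandomness of $D$ (via \Cref{approx-h}) to relate $\E[h]$ on sub-rectangles to the 3D ratio $|T\cap C'|/|D\cap C'|$, which is what the potential function in the density increment actually controls. Your Step 1 is essentially the paper's density-increment (phrased iteratively rather than as maximizing $\phi(C) = \frac{|T\cap C|}{|D\cap C|}|C|^{\eta}$), and your Step 3 counting is sound given spread, dense, left-lower-bounded $f,g$ and dense $h$; the missing piece is that those witnesses must be the ratio functions, not the marginal indicators.
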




We now focus on proving \Cref{psd-cic}. We first fix some notations and conventions.
We will consider sets within cubes, $T \subset X \times Y \times Z$. Given such a set, and a function $f:X \times Y \to [0,1]$, we say that the function $f$ is supported on the XY marginal of $T$ if $\text{supp}(f) \subset T_{XY}$. We similarly define when a function is supported on the XZ, YZ marginals of $f$.

We define a notion of a ``well behaved'' set $T$ inside a cube -- a property which depends only on the marginals of $T$. We will show that for such sets, their cylinder intersection closure must be large.

\begin{definition}[Well-behaved sets]
\label{def:well-behaved}
Let $T \subset X \times Y \times Z$, and let $d \ge 1,r \ge 1,\eps \in (0,1)$. We say that $T$ is $(d,r,\eps)$-well behaved if there exist bounded functions $f: X \times Z \to [0,1]$, $g: Y \times Z \to [0,1]$, $h: X \times Y \rightarrow [0,1]$, supported on the $XZ, YZ$, and $XY$-marginals of $T$, respectively, such that the following conditions hold:
\begin{enumerate}
    \item $\E[f],\E[g],\E[h] \geq 2^{-d}$.
    \item $f,g$ are $(r,\eps)$-spread.
    \item $f,g$ are $\eps$-left-lower bounded.
\end{enumerate}
\end{definition}

The following lemma shows that the cylinder intersection closure of well-behaved sets is large. Its proof is a direct application of \Cref{spread-to-uniform}.

\begin{lemma}[Well behaved sets have large cylinder intersection closure]
\label{well-behaved-large-closure}
Let $T \subset X \times Y \times Z$. Assume that $T$ is $(d,r,\eps)$-well behaved for $d \ge 1$, $r=\Omega(d^2), \eps=O(1)$. Then
$$
|\CIC(T)| \ge 2^{-O(d)} |X||Y||Z|.
$$
\end{lemma}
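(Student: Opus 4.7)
The plan is to lower-bound the quantity $\E_{x,y,z} f(x,z) g(y,z) h(x,y)$ by $2^{-O(d)}$ and then conclude by pointwise comparison. Since $f, g, h$ are supported on the marginals $T_{XZ}, T_{YZ}, T_{XY}$ respectively, the product $f(x,z)g(y,z)h(x,y)$ is pointwise bounded by $\1[(x,y,z) \in \CIC(T)]$, so such a bound directly yields $|\CIC(T)| \ge 2^{-O(d)}|X||Y||Z|$.

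I would first apply \Cref{spread-to-near-uniform} to the pair $(f, g)$ with parameter $k = d+1$; the hypothesis $r = \Omega(d^2)$ together with $\eps = O(1)$ gives $r \ge \Omega(dk/\eps)$, and the other required properties are built into the definition of $(d,r,\eps)$-well-behavedness. The conclusion is that $f \circ g$ is $(d+1, 320\eps)$-near uniform. Rewriting the target quantity as $\E_{x,y} h(x,y)\,(f \circ g)(x,y)$ and letting $G = \{(x,y) : (f \circ g)(x,y) \ge (1-320\eps)\,\E[f \circ g]\}$, near-uniformity gives $\Pr[G^c] \le 2^{-(d+1)}$. Using $h \le 1$ and $\E[h] \ge 2^{-d}$, we get $\E[h \cdot \1_G] \ge \E[h] - 2^{-(d+1)} \ge \E[h]/2$, so
\[
\E_{x,y,z} f(x,z)g(y,z)h(x,y) \;\ge\; \tfrac{1}{2}(1-320\eps)\,\E[h]\,\E[f \circ g].
\]

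It then remains to lower-bound $\E[f \circ g] = \E_z p(z) q(z)$, where $p(z) = \E_x f(x,z)$ and $q(z) = \E_y g(y,z)$. Here I would invoke the spread condition on $f$ directly: if $B = \{z : p(z) < \E[f]/2\}$ were appreciably large, then averaging $f$ over the rectangle $X \times (Z \setminus B)$ would produce a value strictly exceeding $(1+\eps)\E[f]$, contradicting $(r,\eps)$-spreadness (the rectangle has relative size well above $2^{-r}$ since $r = \Omega(d^2) \gg d$). A short averaging calculation tightens this to $\Pr_z[p(z) < \E[f]/2] \le 2\eps$, with the analogous bound for $q$ following from spreadness of $g$. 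Union-bounding, with probability at least $1 - 4\eps$ we have $p(z) q(z) \ge \E[f]\E[g]/4$, yielding $\E[f \circ g] \ge \E[f]\E[g]/8 \ge 2^{-2d-3}$. Combining with the previous display gives $\E_{x,y,z} fgh \ge 2^{-3d - O(1)} = 2^{-O(d)}$, finishing the proof. The main subtlety is that \Cref{spread-to-near-uniform} is a purely \emph{relative} statement---it says most values of $f \circ g$ lie near its mean, without pinning down what that mean is---so one must separately extract a lower bound on $\E[f \circ g]$; fortunately spreadness of $f$ and $g$ already forces their column-marginals $p, q$ to concentrate, which is all that is needed.
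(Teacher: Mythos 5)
Your proof is correct, and it follows the same high-level route as the paper: bound $\1_{\CIC(T)}$ pointwise from below by $f(x,z)g(y,z)h(x,y)$, then lower-bound $\E[fgh]$ using near-uniformity of $f \circ g$. The paper packages this second step as a black-box invocation of \Cref{three-funcs}, whereas you unpack it directly and only keep the one-sided inequality you need.

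Where you genuinely diverge --- and this is worth noting --- is in how you pin down $\E[f \circ g]$. You correctly observe that \Cref{spread-to-near-uniform}, as stated, only says the values of $f \circ g$ concentrate around their own mean $\E[f \circ g]$; it does not, on its face, tell you that this mean is close to $\E[f]\E[g]$. The paper's proof of \Cref{three-funcs} actually elides this: it defines the set $S$ by comparison to $\E[f]\E[g]$ and cites \Cref{spread-to-near-uniform} for the measure bound, which implicitly relies on the intermediate estimate $|\E[f\circ g] - \E[f]\E[g]| \leq O(\eps)$ established \emph{inside} the proof of \Cref{lem:regularity-to-uniformity} (the $\theta$-step) rather than on the stated conclusion. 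Your substitute --- using spreadness of $f$ and $g$ to show the column marginals $p(z) = \E_x f(x,z)$ and $q(z) = \E_y g(y,z)$ are each $\geq \E[f]/2$ (resp.\ $\geq \E[g]/2$) except on an $O(\eps)$-fraction of $z$, hence $\E_z[pq] \geq \Omega(\E[f]\E[g])$ --- is an elementary, self-contained argument that gives a weaker (but entirely sufficient) constant-factor lower bound. One small thing to watch: the spreadness contradiction argument requires the rectangle $X \times (Z \setminus B)$ to have relative size at least $2^{-r}$; this needs $|B|/|Z| \leq 1 - 2^{-r}$, which you should note holds automatically here because $r = \Omega(d^2)$ is large compared to what $\E[f] \geq 2^{-d}$ forces (if $B$ were almost all of $Z$, one would get $\E[f] < 2^{1-r}$, contradicting $\E[f]\geq 2^{-d}$ once the $\Omega$-constant in $r$ is large enough). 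With that caveat handled, the argument is sound.
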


Typically, given a set $T \subset [N]^3$, it will not be well-behaved. However, if there exists a large cube $C$ such that $T \cap C$ (considered as a subset of the cube $C$) is well-behaved, then we may apply \Cref{well-behaved-large-closure} to $T \cap C$ and obtain that its cylinder intersection closure is large, and hence the same holds for $T$. The next lemma guarantees that such a cube exists.

\begin{lemma}[Finding well-behaved sets]
\label{find-cube}
Let $D \subset [N]^3$ be $\gamma$-pseudorandom with respect to large cubes. Let $d \ge 1, r \ge 1, \eps \in (0,1)$, and assume $\gamma \le 2^{-\Omega(dr/\eps)}$. Let $T \subset D$ of size $|T| \ge 2^{-d} |D|$. Then there is a cube $C \subset [N]^3$ of size $|C| \ge 2^{-O(dr/\eps)} N^3$ such that $T \cap C$ (considered as a subset of the cube $C$) is $(d+2,r,\eps)$-well behaved.
\end{lemma}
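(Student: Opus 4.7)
My plan is to prove the lemma by a density-increment argument, iteratively refining the cube $C$ until well-behavedness holds. For a cube $C = X \times Y \times Z$, I will take as witnesses the \emph{normalized} slice densities
\[
f_C(x,z) = \frac{|\{y \in Y : (x,y,z) \in T\}|}{|\{y \in Y : (x,y,z) \in D\}|}, \qquad
g_C(y,z) = \frac{|\{x \in X : (x,y,z) \in T\}|}{|\{x \in X : (x,y,z) \in D\}|},
\]
and similarly $h_C(x,y)$, with the convention $0/0 = 0$. Since $T \subseteq D$, each takes values in $[0,1]$, and each is supported on the corresponding marginal of $T \cap C$ by construction. The pseudorandomness of $D$---specifically its second-moment consequences---implies that the denominators are concentrated near $\mu|Y|$, $\mu|X|$, $\mu|Z|$ respectively (up to a negligible fraction of points), so that $\E[f_C], \E[g_C], \E[h_C] \approx \rho(C)/\mu$, where $\rho(C) = |T \cap C|/|C|$ and $\mu = |D|/N^3$. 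Starting from $C_0 = [N]^3$ with $\rho(C_0) = |T|/N^3 \ge 2^{-d}\mu$, all three expectations start at roughly $2^{-d}$.

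At each step I will check whether $f_C$ and $g_C$ satisfy spread and left-lower-bounded. If $f_C$ is not $(r,\eps)$-spread, a witness rectangle $X' \times Z' \subset X \times Z$ of relative size $\ge 2^{-r}$ satisfies $\E_{X' \times Z'}[f_C] > (1+\eps)\E[f_C]$, and I restrict to $C' = X' \times Y \times Z'$. A short pseudorandomness calculation shows $\E_{X' \times Z'}[f_C] \approx \rho(C')/\mu$, so this produces a genuine density increment $\rho(C') \ge (1+\Omega(\eps))\rho(C)$. If $f_C$ fails $\eps$-left-lower-bounded, I restrict $X \to X^* := \{x \in X : \E_z f_C(x,z) \ge (1-\eps)\E[f_C]\}$; this removes only rows whose slice density pulls $\rho$ down, yielding $\rho(C')/\rho(C) \ge 1 + \Omega(\eps \beta/(1-\beta))$, where $\beta = |X \setminus X^*|/|X|$. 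Symmetric treatments handle failures for $g_C$.

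Termination is forced by the ceiling $\rho(C) \le (1+\gamma)\mu$ coming from $T \subseteq D$ together with pseudorandomness. Spread steps multiply $\rho$ by at least $1+\Omega(\eps)$, so there are at most $O(d/\eps)$ of them; each shrinks $|C|$ by a factor of $2^{-r}$, contributing a total spread-induced shrinkage of $2^{-O(rd/\eps)}$. The left-lower-bounded steps' $\beta$-contributions satisfy $\sum_i \beta_i/(1-\beta_i) = O(d/\eps)$ by the same accounting, so their total cube-shrinkage factor $\prod_i(1-\beta_i)$ is bounded below by $2^{-O(d/\eps)}$. Combining, the final cube satisfies $|C| \ge 2^{-O(rd/\eps)}N^3$, and $|C| \ge \gamma N^3$ throughout (by the hypothesis on $\gamma$), keeping pseudorandomness applicable. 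Since $\rho(C)$ is monotone non-decreasing along the iteration, at termination $\E[f_C], \E[g_C], \E[h_C] \ge 2^{-(d+2)}$ with room to spare, and $(f_C, g_C, h_C)$ witness $(d+2, r, \eps)$-well-behavedness of $T \cap C$.

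The main obstacle I anticipate is maintaining the key approximation $\E[\cdot] \approx \rho(C)/\mu$ uniformly across all subcubes visited. Each restriction redefines $f_C, g_C, h_C$, since their denominators depend on the current ranges $X, Y, Z$---for instance a spread step for $g_C$ restricts $Y$ and so alters the denominator of $f_C$. Verifying that the density-increment chain really holds at every one of the $O(d/\eps)$ intermediate cubes requires uniform second-moment control on $D$ over every subcube encountered, which is exactly the quantitative reason the hypothesis demands $\gamma \le 2^{-\Omega(dr/\eps)}$. A secondary bookkeeping point is confirming that restrictions by one condition (say left-lower-bounded for $f_C$) do not permanently destroy another already-established property (say spreadness of $g_C$) in a way that breaks the overall termination count; since every restriction strictly increases the single invariant $\rho(C)$, the iteration cannot loop.
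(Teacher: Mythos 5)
Your overall strategy matches the paper's: you introduce the same ratio functions $f,g,h$ (slice density of $T$ normalized by that of $D$), you observe that pseudorandomness makes $\E[f],\E[g],\E[h]$ track $|T\cap C|/|D\cap C|$, and you run a density increment on cubes, arguing that a spreadness or lower-boundedness failure yields a better cube. The paper organizes this as a maximization of the potential $\phi(C) = \frac{|T\cap C|}{|D\cap C|}\cdot|C|^\eta$ with $\eta=O(\eps/r)$, which is equivalent to your iteration but bakes the size/density tradeoff into a single quantity and so avoids the explicit shrinkage bookkeeping you carry out.

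There is, however, a genuine gap in the left-lower-bounded step, and it is exactly the issue you half-identify at the end. You remove all bad rows $X\setminus X^*$ at once and claim $\rho(C')/\rho(C)\ge 1+\Omega(\eps\beta/(1-\beta))$. Two problems. First, this inequality is not exact: the rows you remove have small $f_C$-average (a $T/D$ \emph{ratio}), which is not the same as small $T$-slice-density; the two only agree up to the $O(\gamma^{1/3})$ error coming from \cref{approx-h}. So when the bad fraction $\beta$ is smaller than roughly $\gamma^{1/3}/\eps$, your claimed increment is swallowed by the approximation error, and you cannot certify that $\rho$ strictly increased. Second, even ignoring the error, removing $X\setminus X^*$ does not fix lower-boundedness in one shot: once you pass to $C'$, the mean $\E[f_{C'}]$ goes up, so the threshold $(1-\eps)\E[f_{C'}]$ rises and previously good rows can become bad, forcing more iterations with possibly ever-tinier $\beta$. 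Your statement that ``the iteration cannot loop since $\rho$ strictly increases'' therefore isn't established, and the $\sum_i\beta_i/(1-\beta_i)=O(d/\eps)$ accounting — which presumes each step gave the claimed increment — does not go through.

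The paper sidesteps this by introducing an intermediate notion, \emph{mostly} left lower-bounded with a \emph{fixed} slack parameter $\beta^*$ (\cref{def:mostly-well-behaved} and \cref{find-cube-mostly}). In the optimization step, the increment is claimed only when the set of bad rows has size at least $\beta^*|X|$, with $\beta^*=O(2^{-d}\eps)$ chosen so that $\eps\beta^*$ dominates the $O(\gamma^{1/3})$ error; one then concludes that the maximizing cube has bad fraction below $\beta^*$ for each of $f,g$. The second stage of the paper's proof (the actual \cref{find-cube}) prunes this small bad set once: since at most a $\beta^*$-fraction of rows is removed, $\E[f]$ changes by only $O(\beta^*)=O(2^{-d}\eps)$, so rows that were $(\eps/2)$-lower-bounded against the old mean remain $\eps$-lower-bounded against the new one, and spreadness with parameter $r+1$ degrades to spreadness with parameter $r$. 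To repair your proof you would need exactly this device: a hard threshold $\beta^*$ below which you stop restricting for lower-boundedness, plus a final pruning step that converts ``mostly lower-bounded'' into ``lower-bounded'' at the cost of relaxing $\eps/2\to\eps$ and $d+1\to d+2$.
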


Combining \Cref{well-behaved-large-closure} and \Cref{find-cube} proves \Cref{psd-cic}, which to recall is equivalent to \Cref{psd-ci}.

\begin{proof}[Proof of \Cref{psd-cic}]
Let $T \subset D$ of size $|T| \ge 2^{-d} |D|$. Apply \Cref{find-cube} with $r=O(d^2), \eps=O(1)$ which we may since we assume $\gamma \le 2^{-\Omega(d^3)}$. This implies the existence of a cube $C \subset [N]^3$ of size $|C| \ge 2^{-O(d^3)} N^3$ such that $T \cap C$ (considered as a subset of the cube $C$) is $(d+2,r,\eps)$-well behaved. \Cref{well-behaved-large-closure} then gives that
$$
|\CIC(T)| \ge |\CIC(T \cap C)| \ge 2^{-O(d)} |C| \ge 2^{-O(d^3)} N^3.
$$
\end{proof}

\subsection{Construction of sparse pseudorandom sets}
\label{subsec:construction}
The last piece of the puzzle is constructing sets $D \subset [N]^3$ which are pseudorandom with respect to large cubes, sparse, and at the same time easy for randomized NOF protocols. Recall that we identify a set $D$ with its indicator function $D:[N]^3 \to \bits$, which will be the function for which we consider NOF protocols.

In order to define such sets $D$, we first define the notion of \emph{expander-colorings}, which are colorings of the edges of the complete bi-partite graph such that each color class is a good expander (it is a variant of two-source extractors). We will then use them to construct the desired set $D$.

\begin{definition}[Expander-coloring]
Let $\col:[N] \times [N] \to [M]$. We say that $\col$ is an $(N,M,\eta)$-expander coloring if for any sets $X,Y \subset [N]$ of size $|X|,|Y| \ge \eta N$ and any color $m \in [M]$,
$$
|\{(x,y) \in X \times Y: \col(x,y)=m\}| = \frac{|X||Y|}{M} (1 \pm \eta).
$$
\end{definition}

We use expander-colorings to define the desired set $D \subset [N]^3$. Given $\col:[N] \times [N] \to [M]$, we define its corresponding set $D(\col) \subset [N]^3$ as the set of triples, such that the color of each of the pairs is the same:
$$
D(\col) = \{(x,y,z) \in [N]^3: \col(x,y) = \col(x,z) = \col(y,z)\}.
$$

We note that for any expander-coloring $\col$, the associated function $D(\col)$ is easy for randomized NOF protocols, since membership in $D$ can be decided by checking the pairwise equalities between $\col(x,y)$, $\col(x,z)$, and $\col(y,z)$. 

\begin{claim}
\label{rand-nof-easy}
For any $\col:[N] \times [N] \to [M]$, there is a randomized NOF protocol (using public randomness) which sends $O(1)$ bits and computes $D(\col)$.
\end{claim}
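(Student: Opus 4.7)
The plan is to reduce membership in $D(\col)$ to a three-party public-coin equality problem, which then admits an $O(1)$-bit protocol by standard hashing.

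First I would unpack how the three color values are distributed across the foreheads. With $x$ on Alice's forehead, $y$ on Bob's, and $z$ on Charlie's, Alice sees the pair $(y,z)$ and so can compute $\col(y,z)$; Bob sees $(x,z)$ and computes $\col(x,z)$; Charlie sees $(x,y)$ and computes $\col(x,y)$. Consequently, $(x,y,z) \in D(\col)$ if and only if the three values $\col(y,z), \col(x,z), \col(x,y) \in [M]$, one held by each player, are all equal. The task has therefore been reduced to the three-party equality problem on $[M]$.

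Next I would invoke the standard public-coin protocol for three-party equality. Using shared randomness, the players sample a hash $h:[M] \to \bits^t$ for a constant $t$ (drawn from a pairwise-independent family, say). Each player broadcasts the $t$-bit hash of their own value, and the protocol accepts iff all three broadcasts agree. If the three values coincide, the hashes coincide and the protocol accepts with probability one; if some pair of values differs, pairwise independence bounds the collision probability on that pair by $2^{-t}$, so the one-sided error is at most $2^{-t}$. The total communication is $3t = O(1)$ bits. Since the error is one-sided, standard repetition-and-AND amplification drives the error below any target constant while keeping the communication $O(1)$.

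I do not anticipate any real obstacle here: the claim is essentially folklore, a direct 3-party lift of public-coin equality. The only point worth taking care of is the forehead-visibility argument above, which is what lets the triple-color coincidence be replaced by three private elements being equal.
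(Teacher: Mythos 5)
Your proposal is correct and takes essentially the same approach as the paper: reduce $D(\col)$ to three-party equality of $\col(y,z)$, $\col(x,z)$, $\col(x,y)$ using the forehead-visibility observation, then run a public-coin equality protocol. The paper states the equality step more tersely (invoking pairwise randomized equality as a black box); you spell out the hashing, but the argument is the same.
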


We show that if $\col$ is a good expander-coloring, then its associated set $D(\col)$ is pseudorandom with respect to large cubes.

\begin{lemma}[Pseudorandom sets from expander-colorings]
\label{pseudorandomness-from-expander-coloring}
Let $\col:[N] \times [N] \to [M]$ be an $(N,M,\eta)$-expander coloring. Let $\gamma \in (0,1)$, and assume $\eta=O(\gamma^3 M^{-5})$. Then $D(\col)$ is $\gamma$-pseudorandom with respect to large cubes.
\end{lemma}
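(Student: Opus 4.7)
The plan is to verify both pseudorandomness conditions directly, expanding each expectation as a sum over colors $m \in [M]$ and counting the resulting triangle and bowtie patterns using the expander-coloring hypothesis. Two preliminary observations will be used throughout. First, any cube $C = X \times Y \times Z$ with $|C| \ge \gamma N^3$ must have $|X|, |Y|, |Z| \ge \gamma N$, since each factor is at most $N$; as $\eta = O(\gamma^3 M^{-5}) \ll \gamma$, every slice is much larger than $\eta N$, so the expander-coloring hypothesis applies to essentially every subset we form inside $C$. Second, I would prove a \emph{degree regularity lemma}: for any $W, V \subseteq [N]$ with $|W|, |V| \ge \eta N$ and any color $m$, all but $O(\sqrt{\eta})|W|$ vertices $w \in W$ satisfy $|\{v \in V : \col(w,v) = m\}| = (|V|/M)(1 \pm \sqrt{\eta})$. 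This follows from the expander-coloring bound by a contrapositive averaging argument: if too many $w$'s had color-$m$ degree too far from $|V|/M$, the subset of such $w$'s would itself have size $\ge \eta N$ and violate the expander-coloring estimate for that subset crossed with $V$.

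For the density condition, I would decompose
\begin{equation*}
|C| \cdot \E_{C}[D] \;=\; \sum_{m \in [M]} T_m(C), \qquad T_m(C) := \#\{(x,y,z) \in C : \col(x,y) = \col(x,z) = \col(y,z) = m\}.
\end{equation*}
For each color $m$, split $X$ into typical and atypical vertices via the degree regularity lemma. For a typical $x$, the color-$m$ neighborhoods $A_x = \{y \in Y : \col(x,y) = m\}$ and $B_x = \{z \in Z : \col(x,z) = m\}$ have sizes close to $|Y|/M$ and $|Z|/M$, each $\gg \eta N$; applying expander-coloring to $A_x \times B_x$ counts color-$m$ edges as $(|A_x||B_x|/M)(1 \pm \eta) \approx |Y||Z|/M^3$. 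Summing over typical $x$ (and bounding the atypical contribution crudely by $O(\sqrt{\eta})|X| \cdot |Y||Z|$) yields $T_m(C) = (|C|/M^3)(1 \pm \delta)$ for some $\delta$ polynomial in $\sqrt{\eta}$, $M$, and $1/\gamma$. Summing over $m$ gives $\E_C[D] = (1/M^2)(1 \pm \delta)$; the same computation applied to $C = [N]^3$ gives $\mu = (1/M^2)(1 \pm \delta)$, and the ratio is $1 \pm O(\delta) \le 1 \pm \gamma$ for the claimed $\eta$.

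For the correlation condition, I would expand
\begin{equation*}
D(x,y,z)\, D(x,y,z') = \sum_{m \in [M]} \ind{\col(x,y) = \col(x,z) = \col(y,z) = \col(x,z') = \col(y,z') = m},
\end{equation*}
so that the relevant expectation counts \emph{bowtie} configurations of a single color (two triangles sharing the edge $(x,y)$). For each color, this factors as a sum over color-$m$ edges $(x,y) \in X \times Y$, weighted by the square of $|\{z \in Z : \col(x,z) = \col(y,z) = m\}|$. A \emph{codegree regularity} corollary of the degree lemma, obtained by applying it once to see that $|\{z \in Z : \col(x,z) = m\}| \approx |Z|/M$ for most $x$ and then again to these typical neighborhoods against $y$, shows this codegree is $\approx |Z|/M^2$ for most pairs $(x,y)$. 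The bowtie count per color is thus $\approx (|X||Y|/M)(|Z|/M^2)^2 = |X||Y||Z|^2/M^5$, giving $\E[D(x,y,z) D(x,y,z')] \approx 1/M^4 = \mu^2$ up to a $1 \pm \gamma$ factor. The two remaining correlation conditions on $(x,x',y,z)$ and $(x,y,y',z)$ are handled symmetrically.

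The main technical obstacle will be tracking how the $\sqrt{\eta}$ losses from each application of the expander-coloring hypothesis compound across the nested counts. The codegree regularity used in the correlation step invokes the degree lemma at two nested levels, each contributing a $\sqrt{\eta}$ loss, which is then further amplified by polynomial factors in $M$ and $1/\gamma$ upon dividing by $\mu = \Theta(1/M^2)$. The choice $\eta = O(\gamma^3 M^{-5})$ is calibrated precisely to absorb these compounded losses and recover the clean $1 \pm \gamma$ bound; careful bookkeeping of the atypical contributions at each level is what closes the estimates.
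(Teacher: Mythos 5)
Your approach is essentially the same as the paper's: decompose by color, then count monochromatic triangles (for the density condition) and bowties (for the correlation condition) one vertex at a time via a degree-regularity claim for expander colorings. The paper packages this as a general counting lemma for homomorphisms of an arbitrary labeled graph $H$ into a $k$-partite expander (\cref{expander-counting-lemma}), proved by induction on $k$ using precisely the degree-regularity claim you describe (\cref{expander-most-nodes-have-correct-degree}), and then specializes to the triangle and the bowtie. Your plan carries out the triangle and bowtie cases by hand, which is less modular but computes the same thing.

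There is, however, a quantitative gap in how you state the degree-regularity lemma. You propose: all but an $O(\sqrt{\eta})$-\emph{fraction} of $w \in W$ have color-$m$ degree within a $(1 \pm \sqrt{\eta})$ factor of $|V|/M$. With these parameters the atypical contribution to $T_m(C)$ is $O(\sqrt{\eta})\,|X||Y||Z|$, which after normalizing by the typical count $|X||Y||Z|/M^3$ gives a relative error of order $\sqrt{\eta}\,M^3$; summing over colors and dividing by $\mu \approx M^{-2}$, the density condition alone would then require $\eta \lesssim \gamma^2 M^{-6}$, and the nested codegree argument for the correlation condition would demand even more. Neither of these is implied by the stated hypothesis $\eta = O(\gamma^3 M^{-5})$ when $\gamma M \gtrsim 1$, so the bookkeeping does not close. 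The fix is to state degree regularity in \emph{absolute} form, exactly as the contrapositive argument actually yields: the bad set has size $< 2\eta N$ (not $O(\sqrt{\eta})|W|$) and the deviation is $1 \pm \eta$ (not $1 \pm \sqrt{\eta}$). Then the atypical contribution is an \emph{absolute} error $O(\eta N^3)$ (respectively $O(\eta N^4)$ for bowties), and it only becomes a relative error of $O(\eta M/\gamma)$ (respectively $O(\eta M/\gamma^2)$) after dividing by $|C| \ge \gamma N^3$ (respectively $|X||Y||Z|^2 \ge \gamma^2 N^4$); these are exactly the error terms that make $\eta = O(\gamma^3 M^{-5})$ suffice. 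With that one correction, your argument goes through.
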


To conclude, we need an explicit construction of an expander-coloring with good parameters. We show that the inner-product function is a good expander-coloring.

\begin{lemma}[Expander-coloring based on inner-product]
\label{inner-product-expander-coloring}
Let $q$ be a prime power and $k \geq 3$.
Consider the inner-product function $\IP:\F_q^k \times \F_q^k \to \F_q$ given by
$$
\IP(x,y) = \ip{x}{y} = \sum_{i=1}^k x_i y_i.
$$
Then $\IP$ is a $(q^k,q, \eta)$-expander coloring for $\eta=q^{-(k-2)/4}$.
\end{lemma}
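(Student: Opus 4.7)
My plan is to prove this via a standard Fourier/character-sum argument over $\F_q$. Fix a nontrivial additive character $\psi:\F_q \to \mathbb{C}^*$ (built via the trace from $\F_q$ to its prime subfield if $q$ is a nontrivial prime power). By orthogonality of additive characters, for any $m \in \F_q$,
$$\ind{\ip{x}{y}=m} = \frac{1}{q}\sum_{t \in \F_q} \psi\bigl(t(\ip{x}{y}-m)\bigr).$$
Summing this identity over $(x,y) \in X \times Y$ and peeling off the $t=0$ term gives
$$\left|\,\bigl|\{(x,y) \in X \times Y : \ip{x}{y} = m\}\bigr| - \frac{|X||Y|}{q}\,\right| \;\le\; \frac{1}{q}\sum_{t\neq 0}\left|\sum_{(x,y)\in X\times Y}\psi(t\ip{x}{y})\right|.$$

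The heart of the proof is then to bound each nonzero-$t$ character sum $S_t := \sum_{(x,y)\in X \times Y}\psi(t\ip{x}{y})$ by $\sqrt{|X||Y|\,q^k}$. Since $t \neq 0$, the map $x \mapsto tx$ is a bijection of $\F_q^k$, so $\psi(t\ip{x}{y}) = \psi(\ip{tx}{y})$ and we may replace the $X$-sum by one over $tX$, a set of the same size. Applying Cauchy--Schwarz in the $x$-variable extended over all of $\F_q^k$ (padding $\mathbbm{1}_{tX}$ with zeros), we get
$$|S_t|^2 \;\le\; |X| \sum_{u \in \F_q^k}\left|\sum_{y \in Y}\psi(\ip{u}{y})\right|^2 \;=\; |X|\sum_{y,y' \in Y}\sum_{u \in \F_q^k}\psi(\ip{u}{y-y'}).$$
By orthogonality, the inner sum over $u$ equals $q^k$ when $y=y'$ and $0$ otherwise, yielding $|S_t|^2 \le |X||Y|\,q^k$ as claimed.

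Substituting the bound back, the total additive error is at most $\tfrac{q-1}{q}\sqrt{|X||Y|q^k} \le \sqrt{|X||Y|q^k}$. It remains to check that, under the hypothesis $|X|,|Y| \ge \eta q^k$ with $\eta = q^{-(k-2)/4}$, this error is at most $\eta \cdot |X||Y|/q$ (which is the multiplicative slack required by the expander-coloring definition). Rearranging,
$$\sqrt{|X||Y|\,q^k} \;\le\; \eta\,\frac{|X||Y|}{q} \quad\Longleftrightarrow\quad |X||Y| \;\ge\; \frac{q^{k+2}}{\eta^2},$$
and since $|X||Y| \ge \eta^2 q^{2k}$, this reduces to $\eta^4 \ge q^{-(k-2)}$, which is precisely the chosen value of $\eta$.

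I do not anticipate a real obstacle; this is the classical Lindsey-style analysis of inner product as a two-source extractor, and the algebra works out cleanly. The only points requiring care are (i) using an additive character on the full field $\F_q$ (not just $\F_p$) when $q$ is a nontrivial prime power, and (ii) tracking constants so that the final parameter $\eta = q^{-(k-2)/4}$ matches on the nose; the computation above shows that this choice is tight for the argument.
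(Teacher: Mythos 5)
Your proposal is correct and follows essentially the same route as the paper: both expand the indicator via additive characters of $\F_q$, peel off the trivial character, and bound each nontrivial character sum $\sum_{x\in X,y\in Y}\psi(t\ip{x}{y})$ by $\sqrt{|X||Y|q^{k}}$ — the paper cites this as Lindsey's lemma via the spectral norm $\|H\|=q^{k/2}$ of the character matrix, whereas you rederive the same bound explicitly by Cauchy--Schwarz and orthogonality, which is just the standard proof of that spectral norm fact. The final algebra verifying that $\eta=q^{-(k-2)/4}$ is exactly the threshold matches the paper as well.
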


We can now combine together all the machinery we developed, and prove our main theorem, \Cref{main}.

\begin{proof}[Proof of \Cref{main}]
Let $q$ be a prime power and $k=34$ (any larger constant would also work). \Cref{inner-product-expander-coloring} gives that the inner-product function $\IP$ on $\F_q^k$ is a $(N,M,\eta)$-expander coloring for $N=q^k, M=q, \eta=q^{-8}$. \Cref{pseudorandomness-from-expander-coloring} gives that $D=D(\IP)$ is $\gamma$-pseudorandom with respect to large cubes for $\gamma=O(q^{-1})$. In particular, this implies that $|D|=\Theta(N^3 q^{-2})$, so $D$ is sparse, concretely $|D|=\Theta(N^{3-c})$ for $c=2/3k$. We may thus apply \Cref{nondet-lb-psd-sparse} and obtain that the non-deterministic NOF complexity of $D$ is $\Omega((\log N)^{1/3})$. Finally, \Cref{rand-nof-easy} shows that the randomized NOF complexity of $D$ is $O(1)$.
\end{proof}

\paragraph{Paper organization.} We start in \Cref{sec:prelim} with some preliminary definitions. The proofs of the lemmas from \Cref{subsec:uniformity-from-spreadness}, \Cref{subsec:cylinder-intersection-closure} and \Cref{subsec:construction} are given in \Cref{sec:uniformity-from-spreadness}, \Cref{sec:cylinder-intersection-closure} and \Cref{sec:construction}, respectively. We discuss some open problems in \Cref{sec:open}.

\paragraph{Acknowledgements.} We thank Sreejato Bhattacharya who found a small bug in a previous version of this paper.

\section{Preliminaries}
\label{sec:prelim}

\paragraph{Notations.}
Given positive numbers $x,y$, we shorthand $x = y \pm \eps$ for $y - \eps \le x \le y + \eps$. We similarly define $x=y(1 \pm \eps)$. Given $x \in \R$ we denote $x_+=\max(x,0)$ and $x_-=\max(-x,0)$ so that $x=x_+-x_-$. Given a function $f:\Omega \to \mathbb{R}$ for some domain $\Omega$, its support is $\text{supp}(f) = \{x \in \Omega: f(x) \ne 0\}$.

\paragraph{Norms and normalizations.}
Given a real-valued random variable $X$, its $k$-th norm is $\|X\|_k := \E[|X|^k]^{1/k}$.
Similarly, for a real-valued function $f : \Omega \rightarrow \R$ defined on some ambient finite set $\Omega$, its $k$-th norm is $\|f\|_{k} :=  \left( \E_{x \in \Omega}|f(x)|^k \right)^{1/k}$. Given two functions $f,g : \Omega \rightarrow \R$, their inner product is
$\ip{f}{g} := \E_{x \in \Omega}[f(x)g(x)]$.

We need the following claim from \cite{kelley2023strong}.

\begin{claim}[\cite{kelley2023strong}]
\label{odd-moments}
    Let $\eps \in [0,\frac{1}{4}]$.
    Suppose $X$ is a real-valued random variable with
    \begin{itemize}
    \item $\|X\|_{k} \geq 2 \eps$ for some even $k \in \N$,
    \item $\E X^j \geq 0$ for all $j \in \N$.
    \end{itemize}
    Then $ \|1 + X\|_p \geq 1 + \eps$ for all integers $p \ge k/\eps$.
\end{claim}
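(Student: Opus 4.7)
The plan is to reduce to a single well-chosen value of $p$ and then bound $\E(1+X)^p$ term by term in its binomial expansion. Since the $L^p$ norm on a probability space is non-decreasing in $p$ by Jensen's inequality, it suffices to prove the claim at $p_0 := \lceil k/\eps \rceil$; the inequality for every larger $p$ then follows by monotonicity. Using $|1+X|^{p_0} \ge (1+X)^{p_0}$ pointwise and expanding binomially gives
\[
\|1+X\|_{p_0}^{p_0} \;\ge\; \E(1+X)^{p_0} \;=\; \sum_{j=0}^{p_0}\binom{p_0}{j}\,\E X^j,
\]
where every summand is non-negative by hypothesis, so we may freely discard any subset of terms.

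The next step is to isolate the useful contribution. Lyapunov's inequality on even-order norms gives $\|X\|_j \ge \|X\|_k \ge 2\eps$ for every even $j \ge k$, hence $\E X^j \ge (2\eps)^j$. Keeping only these even indices and using the closed form for the even-part of a binomial expansion,
\[
\E(1+X)^{p_0} \;\ge\; \sum_{\substack{k \le j \le p_0 \\ j\text{ even}}} \binom{p_0}{j}(2\eps)^j \;=\; \tfrac12\bigl[(1+2\eps)^{p_0}+(1-2\eps)^{p_0}\bigr] - R,
\]
with $R := \sum_{0 \le j < k,\,j\text{ even}}\binom{p_0}{j}(2\eps)^j$. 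After dividing $R$ by $(1+2\eps)^{p_0}$, the normalized weights are exactly the pmf of $J \sim \mathrm{Binomial}(p_0,\,2\eps/(1+2\eps))$, whose mean $2\eps p_0/(1+2\eps)$ is at least $k$ under the hypotheses $p_0 \ge k/\eps$ and $\eps \le 1/4$. A one-sided Chernoff bound then yields $\Pr[J < k] \le e^{-\Omega(k)}$, and therefore $R \le e^{-\Omega(k)}(1+2\eps)^{p_0}$, giving $\E(1+X)^{p_0} \ge \tfrac14 (1+2\eps)^{p_0}$.

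Finally, I would compare this lower bound to the target $(1+\eps)^{p_0}$. Since $(1+2\eps)/(1+\eps) = 1 + \eps/(1+\eps) \ge 1 + \eps/2$,
\[
\left(\frac{1+2\eps}{1+\eps}\right)^{p_0} \;\ge\; (1+\eps/2)^{p_0} \;\ge\; e^{p_0\eps/3} \;\ge\; e^{k/3},
\]
which is much larger than the constant $4$ for $k$ at least a small absolute constant. Hence $\E(1+X)^{p_0} \ge (1+\eps)^{p_0}$, and taking $p_0$-th roots gives $\|1+X\|_{p_0} \ge 1+\eps$, as required. The main obstacle is the quantitative control of the low-$j$ truncation $R$: the hypotheses $p_0 \ge k/\eps$ and $\eps \le 1/4$ together are precisely what place the binomial mean strictly above the cutoff $k$ so that Chernoff produces an exponential decay that the factor $e^{k/3}$ can absorb; the handful of tiny-$k$ boundary cases, where this slack is thin, can be checked by a direct binomial calculation.
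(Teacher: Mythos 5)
The paper does not prove this claim; it cites it directly from \cite{kelley2023strong}, so there is no in-paper argument to compare against. Evaluating your proposal on its own terms: the overall architecture is right and natural --- reduce to $p_0=\lceil k/\eps\rceil$ by $L^p$-monotonicity, expand $\E(1+X)^{p_0}$ binomially, discard terms using $\E X^j\ge 0$, lower-bound the surviving even moments with $j\ge k$ by $(2\eps)^j$ via monotonicity of moment norms, and compare the resulting even-index sum to $(1+2\eps)^{p_0}$ by interpreting the normalized coefficients as a $\mathrm{Binomial}\bigl(p_0,\,2\eps/(1+2\eps)\bigr)$ mass function. This is a legitimate route to the bound and all of the algebraic identities you invoke (the closed form for the even part of $(1+2\eps)^{p_0}$, the pmf interpretation, the mean being $\ge 4k/3 > k$) are correct.

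The gap is quantitative, and it is not as easy to dismiss as the closing sentence suggests. With $\mu\ge 4k/3$ and cutoff $k=(3/4)\mu$, the standard one-sided Chernoff bound gives $\Pr[J<k]\le e^{-\mu/32}\le e^{-k/24}$, which is below $1/4$ only for $k\ge 34$. For all smaller even $k$, the bound $\tfrac12-\Pr[J<k]$ is not guaranteed to be positive by the argument as written, so the claimed inequality $\E(1+X)^{p_0}\ge\tfrac14(1+2\eps)^{p_0}$ does not follow. Moreover, the ``tiny-$k$ boundary cases'' are not a finite list: for each fixed small even $k$ one still has to handle every $\eps\in(0,1/4]$, hence $p_0$ ranging over all integers $\ge k/\eps$, and a ``direct binomial calculation'' covering that continuum is a real piece of work, not a footnote. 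One concrete refinement that helps is to keep the $j=0$ term (it contributes exactly $1=\binom{p_0}{0}(2\eps)^0$, so it should not be included in the deficit $R$); with that, the $k=2$ case goes through cleanly because $R$ is then empty, and for intermediate $k$ one needs a sharper tail estimate than the generic $\delta=1/4$ Chernoff (for instance, exploiting that for small $\eps$ the mean is close to $2k$, which gives a much better exponent). The argument is salvageable, but as written it proves the claim only for $k$ above an absolute constant around $34$, and the rest is deferred to an unverified and nontrivial case analysis.
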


\paragraph{Communication complexity.}
We consider NOF 3-party protocols in three communication models: deterministic, randomized (with public ranodmness), and non-deterministic. They are the analog models for the corresponding two-party models. Formally, the deterministic NOF communication complexity of $f$ is the least number of bits needed to compute the function by a deterministic NOF protocol; the randomized NOF communication complexity of $f$ is the least number of bits needed to compute the function by a randomized NOF protocol, with error at most $1/3$, and with access to public randomness; and the non-deterministic NOF communication complexity of $f$ is the least number of bits needed to compute $f$ by a non-deterministic NOF protocol. For more details see any textbook on communication complexity, for example \cite{kushilevitz1996communication} or \cite{rao2020communication}.

\section{Uniformity from spreadness}
\label{sec:uniformity-from-spreadness}

In this section we prove our main technical tool,
\cref{spread-to-uniform}, which says roughly that for two spread matrices $A$ and $B$, the product $A \circ B$ is uniform.
There are two main steps to the argument, and there is a strong analogy between these steps with the arguments involved in the ``sifting" and ``spectral positivity" sections of \cite{kelley2023strong}. However, the proof we give will be self-contained.\footnote{That is, self-contained except for the use of a basic fact about random variables with non-negative odd moments, \cref{odd-moments}.}

\subsection{Grid norms}

We first discuss the following quantity (the ``grid norm" of a matrix)
which is central to the argument.
This quantity is particularly useful for succinctly capturing certain kinds of double-counting arguments involving dense bipartite graphs. For example, the proof of Lemma 7.4 in \cite{gowers2001} involves (implicitly) the quantity $U_{2,5}(A)$, where $A$ is the adjacency matrix encoding the set system involved.
In addition, it is known in various contexts that this quantity can be reasonably interpreted as some kind of measure of pseudorandomness ($A$ is ``pseudorandom" when $\|A\|_{U(\ell,k)}$ is not much larger than $\|A\|_{U(1,1)}$) -- see e.g.\ \cite[Theorem 3.1]{gowers2006} which discusses the approximate equivalence of a bound on $\|A\|_{U(2,2)}$ with some other measures of pseudorandomness.

\begin{definition}[Grid norms] For a function $f:X \times Z \rightarrow \R$, and $\ell, k \in \N$, let
\begin{align*}
U_{\ell,k}(f) &:= \E_{x_1, x_2 ,\ldots, x_\ell \in X}
\left( \E_{z \in Z} f(x_1,z) f(x_2, z) \cdots f(x_\ell,z) \right)^k \\
&= \E_{z_1, z_2,\ldots, z_k \in Z}
\left(  \E_{x \in X} f(x,z_1)  f(x, z_2) \cdots f(x,z_k) \right)^\ell \\
&= \E_{\substack{x \in X^\ell \\ z \in Z^k }} \prod_{i=1}^\ell \prod_{j=1}^k f(x_i, z_j).
\end{align*}
We also write
$$ \|f\|_{U(\ell,k)} := |U_{\ell,k}(f)|^{1/\ell k}.$$
\end{definition}

The name ``grid norm'' is a bit of a misnomer, as $\|\cdot\|_{U(\ell,k)}$ is not a norm in general; still, the name captures the essence in which we use them.
 It is known that $\|\cdot\|_{U(\ell,k)}$ is a semi-norm (that is, it satisfies a triangle inequality) whenever $\ell,k$ are both even \cite[Theorems 2.8, 2.9]{hatami2010graph} -- a fact which we don't use in this work.

 We record some basic properties of $\|\cdot\|_{U(\ell, k)}$.

\begin{claim}[Monotonicity]
Let $\ell,k,\ell',k' \in \N$, where  $\ell \leq \ell'$, $k \leq k'$.
Let $A:X \times Z \to \R_{\ge 0}$. Then,
$$ \|A\|_{U(\ell,k)} \leq \|A\|_{U(\ell',k)} \;\;\;\; \textnormal{ and } \;\;\;\;  \|A\|_{U(\ell,k)} \leq \|A\|_{U(\ell,k')}. $$
\end{claim}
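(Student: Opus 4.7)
The plan is to reduce both inequalities to the standard monotonicity of $L^p$ norms under a probability measure (the power mean inequality), by grouping the product $\prod_{i,j} A(x_i, z_j)$ in two different ways.

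For the first inequality, I would use the representation
$$ U_{\ell,k}(A) = \E_{z_1, \ldots, z_k \in Z} \left( \E_{x \in X} \prod_{j=1}^k A(x, z_j) \right)^\ell $$
and define $g(z_1, \ldots, z_k) = \E_{x \in X} \prod_{j=1}^k A(x, z_j)$. Since $A \ge 0$, we have $g \ge 0$ pointwise, so we may apply the power mean inequality with respect to the uniform probability measure on $Z^k$: for $\ell \le \ell'$,
$$ \bigl(\E g^\ell\bigr)^{1/\ell} \le \bigl(\E g^{\ell'}\bigr)^{1/\ell'}. $$
Raising to the $1/k$ power yields $\|A\|_{U(\ell,k)} \le \|A\|_{U(\ell',k)}$.

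For the second inequality, I would use the symmetric representation
$$ U_{\ell,k}(A) = \E_{x_1, \ldots, x_\ell \in X} \left( \E_{z \in Z} \prod_{i=1}^\ell A(x_i, z) \right)^k $$
and define $h(x_1, \ldots, x_\ell) = \E_{z \in Z} \prod_{i=1}^\ell A(x_i, z) \ge 0$. By the same power mean inequality in the exponent $k$, for $k \le k'$,
$$ \bigl(\E h^k\bigr)^{1/k} \le \bigl(\E h^{k'}\bigr)^{1/k'}, $$
and raising to the $1/\ell$ power gives $\|A\|_{U(\ell,k)} \le \|A\|_{U(\ell,k')}$.

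There is no real obstacle here; the only point that needs care is confirming non-negativity of the inner averages $g$ and $h$, which is exactly where the hypothesis $A : X \times Z \to \R_{\ge 0}$ is used (without it, the odd moments of $g$ or $h$ could be negative and the power mean inequality would not apply). Everything else is just bookkeeping of the exponents $1/(\ell k)$ in the definition of $\|\cdot\|_{U(\ell,k)}$.
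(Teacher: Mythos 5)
Your argument is correct and is essentially the same as the paper's: both reduce to the power mean inequality for the non-negative inner average (the paper writes $D = h^k$ and applies $\E D \le (\E D^{k'/k})^{k/k'}$, which is exactly your $(\E h^k)^{1/k} \le (\E h^{k'})^{1/k'}$). The only cosmetic difference is that the paper dispatches the $\ell$-direction by symmetry while you spell it out.
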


\begin{proof}
By symmetry it suffices to show that
$ \|A\|_{U(\ell,k)} \leq \|A\|_{U(\ell,k')}. $
So, fix $\ell \in \N$.
For uniformly random $x \in [N]^\ell$, consider the resulting random variable
$$ D =D(x) := \left( \E_{z \in Z} A(x_1,z) A(x_2, z) \cdots A(x_\ell,z) \right)^k. $$
Note that $D$ is non-negative since we assume $A$ is non-negative. We have
$$ \E D \leq \left(\E D^{r} \right)^{1/r}$$
for any $r \geq 1$, which for $r=k'/k$ shows that
\begin{equation*}
\|A\|_{U(\ell,k)} = (\E D)^{1/\ell k}  \leq
(\E D^{k'/k} )^{1/\ell k'} =
\|A\|_{U(\ell,k')}. \qedhere
\end{equation*}
\end{proof}

\begin{lemma}[Decoupling inequality for $U_{2,k}$]\label{decoupling}
Let $f : X \times Z \rightarrow \R$ and $g : Y \times Z \rightarrow \R$.
For even $k \in \N$ we have
$$
\E_{x,y} \left( \E_z f(x,z) g(y,z)  \right)^k \leq U_{2,k}(f)^{1/2} \cdot U_{2,k}(g)^{1/2}.
$$
\end{lemma}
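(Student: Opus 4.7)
The plan is to expand the $k$-th power as an expectation over $k$ i.i.d.\ copies of $z$, reorganize the resulting expectation so that the contributions of $f$ and $g$ factor apart, and then apply Cauchy--Schwarz to split the two factors. Since $k$ is even, the quantity $\left(\E_z f(x,z) g(y,z)\right)^k$ is automatically non-negative, so no care is needed about absolute values.

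Concretely, the first step is to write
\[
\left(\E_{z \in Z} f(x,z) g(y,z)\right)^k
= \E_{z_1,\dots,z_k \in Z} \prod_{i=1}^k f(x,z_i) g(y,z_i),
\]
where $z_1, \dots, z_k$ are independent uniform samples from $Z$. Substituting this into the left-hand side and interchanging the order of expectations yields
\[
\E_{z_1,\dots,z_k} \left( \E_x \prod_{i=1}^k f(x,z_i) \right) \left( \E_y \prod_{i=1}^k g(y,z_i) \right),
\]
so the roles of $f$ and $g$ have been cleanly separated into two factors that share only the variables $z_1, \dots, z_k$.

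The second step is Cauchy--Schwarz in the expectation over $(z_1, \dots, z_k)$: the displayed quantity is bounded by
\[
\left( \E_{z_1,\dots,z_k} \Big(\E_x \prod_i f(x,z_i)\Big)^{\!2} \right)^{\!1/2}
\left( \E_{z_1,\dots,z_k} \Big(\E_y \prod_i g(y,z_i)\Big)^{\!2} \right)^{\!1/2}.
\]
Finally, I recognize each factor inside the square roots as exactly $U_{2,k}(f)$ and $U_{2,k}(g)$, respectively, using the second formulation in the definition of the grid norm (the one that expands the $\ell = 2$ power over two independent copies of $x$ or $y$). This directly gives the claimed inequality.

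There isn't really a hard step here; the only thing to keep straight is that $k$ being even makes the integrand non-negative so that Cauchy--Schwarz can be applied without sign issues, and matching the two surviving expressions against the correct ``row'' versus ``column'' formulation of $U_{2,k}$.
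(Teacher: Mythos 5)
Your proof is correct and is essentially identical to the paper's: expand the $k$-th power over independent $z_1,\dots,z_k$, swap the order of expectations to factor the $x$- and $y$-contributions, apply Cauchy--Schwarz in $(z_1,\dots,z_k)$, and identify the two factors as $U_{2,k}(f)$ and $U_{2,k}(g)$ via the ``column'' formulation of the grid norm. One small correction to your closing remark: Cauchy--Schwarz requires no non-negativity at all, so evenness of $k$ is not needed for that step; it simply ensures the left-hand side is non-negative (so the inequality is not vacuous), and it is what the later applications of this lemma rely on.
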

\begin{proof} Let $x \in X, y \in Y, z_1,\ldots,z_k \in Z$. Then
\begin{align*}
\E_{x,y} \left( \E_z f(x,z) g(y,z)  \right)^k &=
\E_{x,y} \E_{z_1, \ldots, z_k} \left[\prod_{i=1}^k f(x,z_i) \cdot \prod_{i=1}^k  g(y,z_i)\right] \\
&=  \E_{z_1, \ldots, z_k} \E_{x} \left[ \prod_{i=1}^k f(x,z_i) \right] \cdot \E_{y} \left[  \prod_{i=1}^k  g(y,z_i) \right] \\
&\leq \sqrt{\E_{z_1,\ldots,z_k}\left(  \E_{x}  \prod_{i=1}^k f(x,z_i) \right)^2   }
\sqrt{\E_{z_1,\ldots,z_k}  \left( \E_{y} \prod_{i=1}^k g(y,z_i) \right)^2   } \\
&= \sqrt{U_{2,k}(f)} \sqrt{U_{2,k}(g)}. \qedhere
\end{align*}
\end{proof}

\begin{claim} \label{positive-semidefinite}
    Let $f : X \times Z \rightarrow \R$, and consider
    $$ M(x,x') := \E_{z \in Z} f(x,z)f(x',z). $$
    For any integer $j \in \N$ we have
    $$ \E_{x,x' \in X} M(x,x')^j \geq 0. $$
\end{claim}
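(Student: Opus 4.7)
The plan is to prove this by directly expanding $M(x,x')^j$ as a $j$-fold expectation over independent copies of $z$ and then rewriting the double expectation over $x,x'$ as a square.

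Concretely, I would first expand
\[
M(x,x')^j = \left(\E_{z \in Z} f(x,z) f(x',z)\right)^j = \E_{z_1, \ldots, z_j \in Z} \prod_{i=1}^j f(x,z_i) f(x',z_i),
\]
using independent copies $z_1, \ldots, z_j$ of the averaging variable. Then I would take the expectation over $(x,x') \in X \times X$ and swap the order of expectations, using that the $x$-average and $x'$-average decouple since $x$ and $x'$ are independent and the product factors cleanly into a part depending on $x$ only and a part depending on $x'$ only. This yields
\[
\E_{x,x' \in X} M(x,x')^j = \E_{z_1, \ldots, z_j \in Z} \left(\E_{x \in X} \prod_{i=1}^j f(x, z_i)\right)\left(\E_{x' \in X} \prod_{i=1}^j f(x', z_i)\right) = \E_{z_1, \ldots, z_j \in Z} \left(\E_{x \in X} \prod_{i=1}^j f(x, z_i)\right)^2.
\]
Since the integrand is a square of a real number, it is pointwise non-negative, so the expectation over $z_1, \ldots, z_j$ is non-negative as well.

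There is no real obstacle here — the key observation is simply that $M$ is (up to normalization) a Gram matrix of the vectors $\{f(x, \cdot)\}_{x \in X}$, so the conclusion $\E_{x,x'} M(x,x')^j \geq 0$ is the statement that all the ``trace-like'' moments of the positive semidefinite operator built from $f$ are non-negative. The expansion-and-squaring trick above is just the most elementary way to witness this without appealing to spectral theory.
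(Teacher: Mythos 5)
Your proof is correct and is essentially the same as the paper's: the paper simply observes that $\E_{x,x'} M(x,x')^j = U_{2,j}(f)$ and invokes the identity $U_{2,j}(f) = \E_{z_1,\ldots,z_j}\bigl(\E_x \prod_{i} f(x,z_i)\bigr)^2 \geq 0$ from the definition of the grid norm, whereas you rederive this expansion-and-squaring step from scratch. The underlying computation is identical.
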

\begin{proof}
We note that
$$ \E M(x,x')^j = U_{2,j}(f), $$
and it is clear from the definition of $U_{2,j}(f)$ that the quantity is non-negative for all real-valued functions $f$:
\begin{align*}
 U_{2,j}(f) &= \E_{z_1, z_2,\ldots,z_j \in Z} \left( \E_{x \in X} f(x,z_1)  f(x, z_2) \cdots f(x,z_k) \right)^2 \geq 0. \qedhere
\end{align*}
\end{proof}

\subsection{Spread matrices have controlled grid norms}

Let
$$ \mathcal{R} = \set{\1_{X'}(x)\1_{Y'}(y) }{X' \times Y' \subseteq X \times Y}$$
denote the set of rectangle indicator functions, and let
$$ \text{conv}(\mathcal{R}) = \set{\sum_{i} c_{i} \1_{R_i}}{\1_{R_i} \in \mathcal{R}, c_i \geq 0, \sum_{i} c_i \leq 1}$$
denote its convex hull.
We also consider the slightly richer class of ``soft rectangles",
$$ \mathcal{\widetilde{R}} = \set{f(x)g(y)}{f : X \rightarrow [0,1], g : Y \rightarrow [0,1]}.$$
Next, we note that any soft rectangle can be expressed as a convex combination of rectangles. In particular, this implies that $ \text{conv}(\mathcal{\widetilde{R}}) = \text{conv}(\mathcal{R}) $.
\begin{claim}
$\mathcal{\widetilde{R}} \subseteq \textnormal{conv}(\mathcal{R})$.
\end{claim}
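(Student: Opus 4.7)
The plan is to express the ``soft rectangle'' $f(x)g(y)$ as a convex combination of rectangle indicators via the layer-cake decomposition. For any function $f:X\to[0,1]$, the pointwise identity $f(x)=\int_0^1 \1[f(x)\ge s]\,ds$ holds, and likewise for $g$. Multiplying and applying Fubini,
$$
f(x)g(y) \;=\; \int_0^1 \int_0^1 \1_{X_s}(x)\,\1_{Y_t}(y)\,ds\,dt,
$$
where $X_s := \{x\in X:f(x)\ge s\}$ and $Y_t := \{y\in Y:g(y)\ge t\}$. For every $(s,t)$ the integrand is a rectangle indicator (an element of $\mathcal{R}$), and Lebesgue measure on $[0,1]^2$ is non-negative with total mass $1$, so this already realizes $f(x)g(y)$ as a continuous convex combination of elements of $\mathcal{R}$.

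To convert this into a finite convex combination matching the formal definition of $\textnormal{conv}(\mathcal{R})$, I would use the fact that $X$ and $Y$ are finite: $f$ attains values in some finite set $\{0 = v_0 < v_1 < \cdots < v_m\}$ and $g$ in $\{0 = w_0 < w_1 < \cdots < w_n\}$ with $v_m, w_n \le 1$. The level sets $X_s,Y_t$ are piecewise constant in $s,t$ (with $X_s = X_{v_i}$ for $s\in(v_{i-1},v_i]$, and analogously for $Y_t$), so the double integral collapses to the finite sum
$$
f(x)g(y) \;=\; \sum_{i=1}^{m}\sum_{j=1}^{n} (v_i-v_{i-1})(w_j-w_{j-1})\,\1_{X_{v_i}}(x)\,\1_{Y_{w_j}}(y),
$$
whose non-negative coefficients sum to $v_m w_n \le 1$. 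I would then pad with the empty rectangle (the zero indicator function, which lies in $\mathcal{R}$ by taking $X'=\emptyset$) weighted by $1 - v_m w_n$ to obtain an exact convex combination.

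There is no real obstacle; the entire argument is bookkeeping around a standard layer-cake identity. The only mildly delicate point is matching the formal definition of $\textnormal{conv}(\mathcal{R})$, which requires coefficients summing to exactly $1$ rather than merely being bounded by $1$; this is why the empty-rectangle padding step is needed.
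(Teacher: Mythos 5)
Your proof is correct and uses essentially the same layer-cake decomposition as the paper; the paper even relegates the discretization to a footnote exactly as you describe. The padding-with-the-empty-rectangle observation is a fine piece of bookkeeping that the paper leaves implicit (the continuous integral already has total mass one once one notes $X_s = \emptyset$ for $s$ above $\max f$).
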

\begin{proof}
Let $f(x)g(y)$ be a soft rectangle. We can write\footnote{
     If desired, one may obtain a representation as a finite combination of rectangles by noting that there are only finitely many different superlevel sets $\set{x}{f(x) \geq s}$ and $\set{y}{g(y) \geq t}$.
     }
\begin{align*}
     f(x)g(y) = \int_{t=0}^1 \int_{s=0}^1 \ind{f(x) \geq s} \ind{g(y) \geq t} \mathop{ds} dt.  &\qedhere
\end{align*}
\end{proof}

The next lemma says that any non-negative function $D: X \times Y \rightarrow \R_{\geq 0}$ that has a non-trivial correlation with an element of $\text{conv}(\mathcal{R})$ also does so with a rectangle of comparable density.
\begin{claim} \label{pruning}
Let $D : X \times Y \rightarrow \R_{\geq 0}$ and $F \in \textnormal{conv}(\mathcal{R})$;
 suppose that $\|D\|_{\infty} \leq \Delta$ and $\|F\|_{1} \geq \delta$.
If
$$ \ip{\frac{F}{\|F\|_{1}}}{D} \geq 1 + \eps, $$
then there is some rectangle $R$ with
$$ \E_{(x,y) \in R} D(x,y) = \ip{\frac{\1_{R}}{\|\1_{R}\|_1}}{D} \geq 1 + \frac{\eps}{2} $$
and
$$ \frac{|R|}{|X||Y|} = \|1_{R}\|_{1} \geq \frac{\eps \delta}{2 \Delta}. $$
\end{claim}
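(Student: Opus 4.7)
The plan is to use a straightforward averaging argument combined with a size-based split. First, I would write $F$ as a convex combination of rectangle indicators, $F = \sum_i c_i \mathbbm{1}_{R_i}$ with $c_i \geq 0$ and $\sum_i c_i = 1$, so that $\|F\|_1 = \sum_i c_i \delta_i$ where $\delta_i := \|\mathbbm{1}_{R_i}\|_1$. Denoting $\alpha_i := \ip{\mathbbm{1}_{R_i}}{D}$, the hypothesis $\ip{F}{D} \ge (1+\eps) \|F\|_1$ rearranges to
\[
\sum_i c_i \bigl( \alpha_i - (1 + \tfrac{\eps}{2}) \delta_i \bigr) \;\ge\; \tfrac{\eps}{2} \sum_i c_i \delta_i \;=\; \tfrac{\eps}{2} \|F\|_1 \;\ge\; \tfrac{\eps \delta}{2}.
\]
So the weighted sum of the ``excess" $\alpha_i - (1+\eps/2)\delta_i$ is at least $\eps\delta/2$. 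If we could restrict to indices for which this excess is non-negative \emph{and} the rectangle is not too small, we would be done by choosing any such $R_i$.

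Next, I would split the indices at the threshold $\delta_i = \eps\delta/(2\Delta)$. Let $T := \{i : \delta_i \geq \eps\delta/(2\Delta)\}$. For each small index $i \notin T$, the hypothesis $\|D\|_\infty \leq \Delta$ gives
\[
\alpha_i \;\leq\; \Delta \, \delta_i \;<\; \Delta \cdot \tfrac{\eps\delta}{2\Delta} \;=\; \tfrac{\eps\delta}{2},
\]
and so
\[
\sum_{i \notin T} c_i \bigl(\alpha_i - (1+\tfrac{\eps}{2}) \delta_i\bigr) \;\leq\; \sum_{i \notin T} c_i \alpha_i \;\leq\; \tfrac{\eps\delta}{2} \sum_{i \notin T} c_i \;\leq\; \tfrac{\eps\delta}{2}.
\]
Subtracting this from the earlier lower bound shows $\sum_{i \in T} c_i (\alpha_i - (1+\eps/2)\delta_i) \geq 0$, and the total weight $\sum_{i \in T} c_i$ is necessarily positive (otherwise $\sum_i c_i \alpha_i < \eps\delta/2 < (1+\eps)\delta \leq (1+\eps)\|F\|_1$, a contradiction). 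Therefore some index $i^* \in T$ achieves $\alpha_{i^*} \geq (1 + \eps/2) \delta_{i^*}$, i.e.\ the rectangle $R := R_{i^*}$ satisfies both $\E_{(x,y) \in R} D(x,y) \geq 1 + \eps/2$ and $|R|/(|X||Y|) = \delta_{i^*} \geq \eps\delta/(2\Delta)$, as desired.

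Handling the case where $F$ is a continuous convex combination (as in the integral representation from the preceding claim) is immediate: one replaces sums by integrals against the corresponding probability measure, and each step above — the algebraic rearrangement of the hypothesis, the pointwise bound $\alpha_i \leq \Delta \delta_i$, and the dichotomy between small and large rectangles — carries over verbatim. There is no real technical obstacle here; the only thing that requires any care is keeping the constants in the threshold $\eps\delta/(2\Delta)$ consistent with the advertised conclusion, which is why I chose the ``slack" $(1+\eps/2)$ rather than $(1+\eps)$ in the averaging argument — it leaves exactly enough room to absorb the contribution of the small rectangles.
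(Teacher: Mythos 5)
Your proof is correct and takes essentially the same approach as the paper: you threshold the rectangles in the decomposition at the same size $\tau = \eps\delta/(2\Delta)$, use $\|D\|_\infty \le \Delta$ to show the small ones contribute at most $\eps\delta/2$, and then average over the large ones to extract a good rectangle. The paper packages the large rectangles into a truncated function $F'$ and establishes $\ip{F'}{D}/\|F'\|_1 \ge 1+\eps/2$, which is exactly your inequality $\sum_{i\in T}c_i\bigl(\alpha_i-(1+\eps/2)\delta_i\bigr)\ge 0$ in disguise.
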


\begin{proof}
Write $F = \sum_{i} c_i \1_{R_i}.$ We begin by pruning rectangles which are too small: define $F' = \sum_{i} c_i' \1_{R_i}$ via
$$c_i' =
\begin{cases}
    c_{i} &\text{ if } \|\1_{R_i}\|_{1} \geq \tau, \\
    0 &\text{ if } \|\1_{R_i}\|_{1} < \tau
\end{cases}
$$
for some threshold value $\tau$. We note that
$$  \frac{\ip{F'}{D}}{\|F\|_{1}} =
\frac{\ip{F}{D}}{\|F\|_{1}} - \frac{\ip{F-F'}{D}}{\|F\|_{1}}  \geq 1 + \eps - \frac{\|F-F'\|_{1}\|D\|_{\infty}}{\|F\|_{1}} \geq 1 + \eps - \frac{\tau \Delta}{\delta}.
$$
We set $\tau = \frac{\eps \delta}{2 \Delta}$, giving
$$\frac{\ip{F'}{D}}{\|F'\|_{1}} \geq \frac{\ip{F'}{D}}{\|F\|_{1}} \geq 1 + \frac{\eps}{2}.$$
In particular, we must have $\ip{F'}{D} > 0$, which guarantees that $F'$ is not identically zero.
We have
$$ \frac{\ip{F'}{D}}{\|F'\|_{1}} =
\frac{\sum_{i} c_i' \ip{\1_{R_i}}{D} } {\sum_{i} c_i' \|\1_{R_i}\|_1}.$$
By averaging, there is some choice of $R = R_i$ with
$$ \ip{ \frac{\1_{R}}{\|\1_{R}\|_1} }{D} \geq 1 + \frac{\eps}{2}$$
and
\begin{align*}
\|\1_{R}\|_{1} \geq \tau.   &\qedhere
\end{align*}

\end{proof}

\begin{lemma} (Sifting a rectangle) \label{lem:sifting}
Let $A : X \times Y \rightarrow [0,1]$; suppose that $\|A\|_{1} \geq \delta$.
Let $\ell,k \in \N$. If
$$ \|A\|_{U(\ell, k)} \geq (1 + \eps) \|A\|_{1}, $$
then there is some rectangle $R$ with
$$ \E_{(x,y) \in R} A(x,y) \geq \left(1 + \frac{\eps}{2} \right) \|A\|_{1}   $$
and
$$ \frac{|R|}{|X||Y|} = \|\1_{R}\|_{1} \geq \tfrac{1}{2} \cdot \eps \cdot \delta^{\ell k + 1}.$$

In particular, if $A$ is $(r,\eps)$-spread for some $r \geq (\ell k + 1) \log(1/\delta) + \log(1/\eps)$, then
$$\|A\|_{U(\ell,k)} \leq (1+2\eps)\|A\|_{1}.$$
\end{lemma}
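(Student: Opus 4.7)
The plan is to apply the pruning claim (\cref{pruning}) to $A$ against a rank-one ``soft rectangle'' built from the row- and column-averages of $A$. Specifically, I would take $F(x,y) := f(x)\,g(y)$ with $f(x) := (\E_y A(x,y))^{k-1}$ and $g(y) := (\E_x A(x,y))^{\ell-1}$, noting that $F \in \mathcal{\widetilde{R}} \subseteq \textnormal{conv}(\mathcal{R})$ is a legal input to \cref{pruning}.

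The engine of the argument is the factorization identity
$$U_{\ell,k}(A) \;=\; U_{\ell-1,k-1}(A) \cdot \E_{x,y}\bigl[A(x,y)\,f(x)\,g(y)\bigr],$$
obtained by isolating the factor $A(x_1,y_1)$ in the expansion of $U_{\ell,k}(A)$ and collapsing the remaining independent coordinates into the clean product $f(x_1)\,g(y_1)\,U_{\ell-1,k-1}(A)$. Writing $t := \|A\|_{U(\ell,k)}/\|A\|_1 \ge 1+\eps$ and using the identities $\|f\|_1 = U_{1,k-1}(A)$ and $\|g\|_1 = U_{\ell-1,1}(A)$, this rewrites the pruning ratio $\ip{F/\|F\|_1}{A/\|A\|_1}$ as $U_{\ell,k}(A)$ divided by $\|A\|_1 \cdot U_{\ell-1,k-1}(A) \cdot U_{1,k-1}(A) \cdot U_{\ell-1,1}(A)$. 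Grid-norm monotonicity bounds each of the three subordinate factors by $(t\|A\|_1)^{\ell' k'}$; the three exponents sum to $(\ell-1)(k-1) + (k-1) + (\ell-1) = \ell k - 1$, and since $U_{\ell,k}(A) = (t\|A\|_1)^{\ell k}$ in the numerator, the ratio collapses cleanly to $\ge t \ge 1+\eps$.

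I would then apply \cref{pruning} with $\Delta := 1/\|A\|_1 \le 1/\delta$ and the Jensen-based lower bound $\|F\|_1 \ge \|A\|_1^{\ell+k-2} \ge \delta^{\ell+k-2}$, producing a rectangle $R$ with $\E_R A \ge (1+\eps/2)\|A\|_1$ and density at least $\tfrac{1}{2}\eps\,\delta^{\ell+k-1}$, which dominates the stated $\tfrac{1}{2}\eps\,\delta^{\ell k+1}$ since $\ell k + 1 \ge \ell + k - 1$. For the ``in particular'' clause, the usual contrapositive works: if $\|A\|_{U(\ell,k)} > (1+2\eps)\|A\|_1$, invoking the sifting conclusion with parameter $2\eps$ produces a rectangle of density at least $\eps\,\delta^{\ell k + 1} \ge 2^{-r}$ on which $\E_R A > (1+\eps)\|A\|_1$, contradicting $(r,\eps)$-spreadness.

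The main delicate aspect of the argument is isolating the right soft rectangle and observing that the three grid-norm exponents sum to exactly $\ell k - 1$: this is precisely what makes the pruning ratio collapse to the full excess factor $t$ rather than to some diluted power like $t^{1/(\ell k)}$. Every remaining step --- the identity itself, Jensen, grid-norm monotonicity, and the pruning application --- is routine bookkeeping.
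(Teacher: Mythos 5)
Your proposed factorization identity
$$
U_{\ell,k}(A) \;=\; U_{\ell-1,k-1}(A)\cdot \E_{x,y}\bigl[A(x,y)\,f(x)\,g(y)\bigr]
$$
is false. After isolating $A(x_1,y_1)$, the remaining product
$\prod_{j\ge 2}A(x_1,y_j)\cdot\prod_{i\ge 2}A(x_i,y_1)\cdot\prod_{i,j\ge 2}A(x_i,y_j)$
does \emph{not} factorize under the expectation over $x_2,\ldots,x_\ell,y_2,\ldots,y_k$, because the $y_j$'s appear both in the row block $\prod_{j\ge 2}A(x_1,y_j)$ and in the bulk $\prod_{i,j\ge 2}A(x_i,y_j)$, and likewise for the $x_i$'s. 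A concrete counterexample: let $X=Y=\{1,2\}$, $A(1,1)=1$ and $A\equiv 0$ elsewhere, and take $\ell=k=2$. Then $U_{2,2}(A)=2^{-4}$, but the right-hand side is $U_{1,1}(A)\cdot(1/4)\cdot f(1)g(1)=(1/4)\cdot(1/4)\cdot(1/2)\cdot(1/2)=2^{-6}$. Since the entire argument hinges on the pruning ratio collapsing exactly to $t$, and that collapse rests on this identity (your careful exponent bookkeeping $(\ell-1)(k-1)+(k-1)+(\ell-1)=\ell k-1$ makes clear how precisely everything has to line up), the proof does not go through. Even as an inequality, the orientation you would need ($\E[Afg]\ge U_{\ell,k}(A)/U_{\ell-1,k-1}(A)$) does not hold in general.

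The paper instead uses a telescoping argument: fixing an ordering of the $\ell k$ positions $(i,j)$, it writes $\|A\|_{U(\ell,k)}^{\ell k}$ as the product of ratios $\E\phi_{\le(i,j)}(A)\big/\E\phi_{<(i,j)}(A)$, and by pigeonhole picks a single position $(i^*,j^*)$ where the ratio exceeds $(1+\eps)\|A\|_1$. The function $F$ fed into \cref{pruning} is the prefix product $\phi_{<(i^*,j^*)}$ with all variables other than $(x_{i^*},y_{j^*})$ averaged out. That $F$ lies in $\mathrm{conv}(\mathcal{R})$ does not require any independence; it follows because, for any fixing of the other variables, each factor $A(x_i,y_j)$ with $(i,j)<(i^*,j^*)$ depends on $x_{i^*}$ \emph{or} $y_{j^*}$ but never both, so the prefix product is a soft rectangle pointwise in the conditioning. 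Your selection of $f,g$ as marginal powers, the Jensen bound $\|F\|_1\ge\delta^{\ell+k-2}$, and the contrapositive for the spreadness consequence are all sound; what is missing is a correct mechanism to land the pruning ratio at $1+\eps$, and that is exactly what the telescoping/pigeonhole step supplies.
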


\begin{proof}

By assumption we have
\begin{align*}
    \|A\|_{U(\ell,k)}^{\ell k} =
    \E_{\substack{x_1, \ldots, x_{\ell} \in X \\ y_{1}, \ldots, y_k \in Y}}
    \left[
    \prod_{i=1}^\ell \prod_{j=1}^{k} A(x_i, y_j)
    \right]
    \geq (1+\eps)^{\ell k} \|A\|_{1}^{\ell k}.
\end{align*}

Our task is to find a reasonably large rectangle $R$ where $A$ is notably denser than average. Before proceeding with the actual argument, let us offer a not-entirely-accurate picture of how this will be done. For illustration, suppose $A$ is an adjacency matrix of a bipartite graph with parts $X$ and $Y$.
We then look for the desired rectangle among those of the following specific form. For any choice of $x_i$'s and $y_j$'s we can consider the rectangle $R = \Gamma(y_1,y_2,\ldots,y_k) \times \Gamma(x_1,x_2,\ldots,x_\ell) \subseteq X \times Y$, where (e.g.) $\Gamma(y_1,y_2,\ldots,y_k) \subseteq X$ denotes the set of common neighbors of the vertices $y_1, \ldots, y_k$ within our bipartite graph. In the actual argument, we will need to consider also some additional, related choices for $R$ which are not so nicely describable. We then use our assumption on $\|A\|_{U(\ell,k)}$ to argue that one of these choices must succeed. We now proceed with the argument (considering now general $A : X \times Y \rightarrow [0,1]$).

Let us fix some arbitrary ordering on tuples $(i,j) \in [\ell] \times [k]$ (say, the lexicographic ordering),
and consider the prefix-products
$$ \phi_{\leq (i,j)}(A) := \prod_{(i',j') \leq (i,j)} A(x_{i'}, y_{j'}).$$
Thus, we have $\E [\phi_{\leq (1,1)}(A)] = \|A\|_{1}$ and $\E[\phi_{\leq (\ell,k)}(A)] = \|A\|_{U(\ell,k)}^{\ell k}$.
Similarly, let us write
$$ \phi_{<(i,j)}(A) := \prod_{(i',j') < (i,j)} A(x_{i'}, y_{j'}),$$
with the convention $\phi_{<(1,1)}(A) := 1$.
Now consider the telescoping product
$$ \prod_{(i,j) \in [\ell] \times [k]} \frac{\E \phi_{\leq (i,j)} (A)}{\E \phi_{<(i,j)}(A)} = \|A\|_{U(\ell,k)}^{\ell k}
$$
This quantity is at least $(1+\eps)^{\ell k} \|A\|_{1}^{\ell k}$, and so we infer that for some choice of $(i^*,j^*)$ we have
$$   \frac{\E \phi_{\leq (i^*,j^*)} (A)}{\E \phi_{<(i^*,j^*)}(A)}
\geq (1 + \eps) \|A\|_{1}.
$$
At this point we would like to think of $\phi_{<(i^*,j^*)}(A)$ primarily as a function of $x_{i^*}$ and $y_{j^*}$. Let us define
$$ F(x_{i^*}, y_{j^*}) =
\E_{\substack{x_1, \ldots, x_{i^*-1}, x_{i^*+1}, \ldots, x_{\ell} \in X \\ y_{1}, \ldots, y_{j^*-1}, y_{j^*+1}, \ldots, y_k \in Y}} \prod_{(i,j) < (i^*,j^*)} A(x_{i}, y_{j})
$$
so that we may reinterpret
$$ \E \phi_{\leq (i^*,j^*)}(A) = \E \phi_{<(i^*,j^*)}(A) \cdot  A(x_{i^*},y_{j^*}) =
\E_{\substack{x_{i^*} \in X\\ y_{j^*} \in Y}} F(x_{i^*}, y_{j^*}) \cdot A(x_{i^*},y_{j^*}) =
\ip{F}{A} $$
and
$$ \E \phi_{<(i^*,j^*)}(A) = \E[F] = \|F\|_{1}. $$
Thus, we have
$$ \ip{\frac{F}{\|F\|_{1}}}{\frac{A}{\|A\|_{1}}} \geq 1 + \eps.$$
Finally, we argue that $F$ is a convex combination of soft rectangles so that we may finish the proof by applying
\cref{pruning} (to $F$ and $D := A / \|A\|_1$). Indeed, as a function of $x_{i^*}, y_{j^*}$, and for any fixing of the other variables $x_{i}, y_{j}$, the quantity
$$ \prod_{(i,j) < (i^*,j^*)} A(x_{i}, y_{j}) =
\left( \prod_{\substack{(i,j) < (i^*,j^*) \\ i \neq i^* \\ j \neq j^*}} A(x_{i}, y_{j})  \right)
\left( \prod_{\substack{(i,j) < (i^*,j^*) \\ i = i^*}} A(x_{i^*}, y_{j})  \right)
\left( \prod_{\substack{(i,j) < (i^*,j^*) \\ j = j^*}} A(x_{i}, y_{j^*})  \right)
$$

is a soft rectangle: each of the factors $A(x_{i}, y_{j})$ may depend on $x_{i^*}$ or $y_{j^*}$ but not both, and the product of $1$-bounded functions is again a $1$-bounded function.

It remains only to discuss what sort of bounds we have on $\|F\|_{1}$ and $\|D\|_{\infty}$.
Since $A$ is $1$-bounded, we have $\|D\|_{\infty} \leq \frac{1}{\|A\|_{1}} \leq \frac{1}{\delta}$.
It follows also from the $1$-boundedness of $A$ that
$$ \|F\|_{1} = \E \phi_{<(i^*,j^*)} \geq \E \phi_{\leq (\ell,j)} = \|A\|_{U(\ell,k)}^{\ell k} \geq \delta^{\ell k}.$$
Thus, \cref{pruning} provides a rectangle $R = X' \times Y'$ of the desired size, $\|\1_{R}\|_1 \geq \eps \delta^{\ell k + 1} / 2$. \qedhere
\end{proof}

\subsection{Products of $U_{2,k}$-regular matrices}

\begin{lemma}[$U_{2,k}$-regularity of $A$ and $B$ implies uniformity of $A \circ B$]\label{lem:regularity-to-uniformity}
Fix an even integer $k \in \N$,  $\eps \in (0,1/20)$, and set $p = \lceil k/\eps \rceil$.
Let $A: X \times Z \rightarrow \R_{\ge 0}$, $B: Y \times Z \rightarrow \R_{\ge 0}$ be two (nonzero) matrices,
and suppose that
\begin{itemize}
\item $\norm{A}_{U(2,p)} \leq (1 + \eps) \|A\|_1 $,
\item $\norm{B}_{U(2,p)} \leq (1 + \eps) \|B\|_1 $,
\item $A,B$ are $\eps$-left lower bounded.
\end{itemize}
Then, the function $D(x,y) = \frac{(A \circ B)(x,y)}{\E[A] \E[B]}$ is close to uniform on $X \times Y$:
$$\|D - 1\|_{k} \leq 20 \eps.$$
\end{lemma}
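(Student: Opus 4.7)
My plan is a proof by contradiction using \Cref{odd-moments}. First, I pin down $\mu := \E[A \circ B]$: by the monotonicity of grid norms, $U_{2,p}$-regularity of $A$ descends to $U_{2,1}$-regularity, giving $\|A\|_{U(2,1)} \leq (1+\eps)\|A\|_1$. Since $U_{2,1}(A) = \|a_Z\|_2^2$ where $a_Z(z) := \E_x A(x,z)$, this yields $\mathrm{Var}(a_Z) \leq 3\eps\|A\|_1^2$, and similarly $\mathrm{Var}(b_Z) \leq 3\eps\|B\|_1^2$ for $b_Z(z) := \E_y B(y,z)$. Writing $\mu = \E_z a_Z(z) b_Z(z)$ and applying Cauchy--Schwarz to the covariance yields $\mu = \|A\|_1\|B\|_1(1 \pm 3\eps)$. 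Next, by \Cref{decoupling} with $f = A, g = B, k = p$,
$$\E[(A\circ B)^p] \leq \bigl(\|A\|_{U(2,p)} \|B\|_{U(2,p)}\bigr)^p \leq \bigl((1+\eps)^2 \|A\|_1 \|B\|_1\bigr)^p,$$
so that combining with the pin-down of $\mu$, $\|D\|_p \leq (1+\eps)^2/(1-3\eps) \leq 1 + 10\eps$.

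Now suppose for contradiction that $\|D - 1\|_k > 80\eps$. Set $X := D - 1$ and $\eps' := 40\eps$. Under the hypothesis $\E X^j \geq 0$ for all $j \in \N$, \Cref{odd-moments} applied to $X$ with parameter $\eps'$ gives $\|D\|_p = \|1 + X\|_p \geq 1 + 40\eps$ for all $p \geq k/(40\eps)$; since $p = \lceil k/\eps \rceil$ satisfies this, we obtain $\|D\|_p \geq 1 + 40\eps$, contradicting the upper bound $\|D\|_p \leq 1 + 10\eps$.

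The main obstacle is verifying the moment non-negativity $\E(D-1)^j \geq 0$ for every $j$. This is the analog of the ``spectral positivity'' step in \cite{kelley2023strong}, where the corresponding object $f * \tilde f$ has non-negative Fourier coefficients. A natural route is to decompose $R - \mu = \rho + (\phi - \mu) + (\psi - \mu)$, where $R := A\circ B$, $\rho(x,y) := \E_z \tilde A(x,z) \tilde B(y,z)$ is the doubly-centered piece (with $\tilde A := A - a_Z$ and $\tilde B := B - b_Z$), and $\phi(x) := \E_z A(x,z) b_Z(z)$, $\psi(y) := \E_z a_Z(z) B(y,z)$ are marginal corrections depending only on $x$ and $y$ respectively. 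Upon expanding $(R - \mu)^j$ and integrating over $(x,y)$, many cross-terms vanish by orthogonality (using, e.g., $\E_y \rho(x,y) = 0$ for each fixed $x$), and the surviving $\rho$-type contributions reduce to Gram quantities of the form $\E_{x,x'} M_f(x,x')^i$, which are automatically non-negative by \Cref{positive-semidefinite}. The left-lower-bounded hypothesis enters precisely to control the marginal contributions: combining $a_X(x) := \E_z A(x,z) \geq (1-\eps)\|A\|_1$ with the $L^2$ near-constancy of $b_Z$ yields the one-sided bound $\phi(x) - \mu \geq -O(\eps)\mu$ (and symmetrically for $\psi$), which prevents large negative contributions to odd moments. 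Making this positivity argument rigorous, by tracking the sign and magnitude of every surviving term in the full expansion, is where I expect the main technical work to lie.
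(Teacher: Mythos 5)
Your preliminary computations (pinning down $\mu$ and bounding $\|D\|_p$ via the decoupling inequality) are sound, but the core of your plan has a genuine gap that you correctly identify but then underestimate. The obstacle is not merely technical bookkeeping: the quantity $\E(D-1)^j$ is \emph{not} non-negative in general, and neither is $\E_{x,y}\rho^j$ for your doubly-centered piece. Expanding the latter gives
\[
\E_{x,y}\rho(x,y)^j \;=\; \E_{z_1,\dots,z_j}\Bigl[\E_x \textstyle\prod_i \tilde A(x,z_i)\Bigr]\cdot\Bigl[\E_y \textstyle\prod_i \tilde B(y,z_i)\Bigr],
\]
which is an inner product of two \emph{different} functions of $(z_1,\dots,z_j)$ when $A \neq B$, not a square. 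So it is not a Gram quantity $\E_{x,x'} M_f(x,x')^i$ and \cref{positive-semidefinite} does not apply to it. Your proposed route therefore stalls precisely at the point you flag as "the main technical work."

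The paper resolves this by changing the object to which \cref{odd-moments} is applied. Rather than trying to certify non-negative moments for $D-1$ (a function of $(x,y)\in X\times Y$), it first uses the triangle inequality to peel off the marginal corrections $a-1$ and $b-1$ (bounded via left-lower-boundedness plus a one-sided application of \cref{odd-moments} to the non-negative variable $(a-1)_+$), and then applies the decoupling inequality (\cref{decoupling}) to bound $\|\ip{A_x-1}{B_y-1}\|_k$ by $\|A-1\|_{U(2,k)}\|B-1\|_{U(2,k)}$. This reduces the two-matrix problem to a single-matrix quantity, and only \emph{then} introduces the Gram matrix $M(x,x') := \ip{A_x - 1}{A_{x'}-1}$ on $X\times X$, for which $\E M^j = U_{2,j}(A-1)$ is manifestly non-negative. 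Applying \cref{odd-moments} to $M$ (not to $D-1$) and translating back to $\|A\|_{U(2,p)}$ via left-lower-boundedness produces the contradiction. The decoupling step is exactly what supplies the positivity you are missing; without it, the positivity claim you need is simply false.
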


\begin{proof}
Note that the statement is scale-invariant with respect to $A$ and $B$. Without loss of generality suppose that $\E[A]=\E[B]=1$.
For brevity, let $A_x,B_y :Z \rightarrow \R$ be the functions $A_x(z) = A(x,z), B_y(z) = B(y,z)$.
For any two functions $\alpha,\beta:Z \rightarrow \R$, we write $\ip{\alpha}{\beta} := \E_z[\alpha(z) \beta(z)]$.
With this notation we can express
$$ D(x,y) = \ip{A_x}{B_y}. $$
For what follows it will be convenient to introduce the notation $a(x) := \E_{z} A(x,z)$ and $b(y) := \E_{z}B(y,z)$.

We proceed to argue that $\|\ip{A_x}{B_y} - 1 \|_k \leq O(\eps)$. We have
$$
\ip{A_x}{B_y} - 1 = \ip{A_x - 1}{B_y - 1} + \ip{A_x-1}{1} + \ip{B_y-1}{1}.
$$
We first consider the latter terms. Note that
$$
\|\ip{A_x-1}{1}\|_k = \|a - 1\|_k \leq \|(a - 1)_{-}\|_k + \|(a - 1)_{+}\|_{k}.
$$
We handle the positive and negative deviations from $1$ separately.
First, since we assume $A$ is $\eps$-left lower bounded we have that $a(x) \geq 1 - \eps$ pointwise, which gives
$$\|(a - 1)_{-}\|_k \leq \|(a - 1)_{-}\|_{\infty} \leq \eps. $$
Second, we note that for uniformly random $x$, the resulting random variable $(a(x)-1)_+$ certainly has non-negative odd moments since it is non-negative, and so with \cref{odd-moments} we can obtain a bound on
$ \|(a-1)_+\|_k $ from a bound on $\|1+(a-1)_+\|_{p}.$
Specifically, we have $1 + (a-1)_+ = a + (a-1)_{-}$ and hence
\begin{align*}
\|1+(a-1)_+\|_{p} \leq \|a\|_{p} + \|(a-1)_{-}\|_{p} \leq \|a\|_{p} + \eps
\end{align*}
and
$$ \|a\|_{p} = \|A\|_{U(1,p)} \leq \|A\|_{U(2,p)} \leq 1 + \eps.$$
We conclude that $\|1+(a-1)_+\|_{p} \leq 1 + 2 \eps$, and hence by \Cref{odd-moments}
$$  \|(a-1)_+\|_k \leq 4 \eps. $$

Overall, we obtain the bound $\|a-1\|_k \leq 5 \eps$.
Similarly, also $\|b-1\|_k \le 5 \eps$.
Therefore, by the triangle inequality for $\| \cdot \|_k$, we have
$$\|\ip{A_x}{B_y} - 1\|_k \leq \|\ip{A_x - 1}{B_y - 1}\|_k + 10 \eps.$$

We now apply the decoupling inequality for $U_{2,k}$ (\cref{decoupling}) to study the main term.
$$\|\ip{A_x - 1}{B_y - 1}\|_k^k = \E_{x,y}\left[ \E_z \Big( (A(x,z) - 1) (B(y,z) - 1) \Big)^k\right] \leq \utk{A-1}^{1/2} \utk{B-1}^{1/2},$$
or equivalently,
$$\|\ip{A_x - 1}{B_y - 1}\|_k \leq \|{A-1}\|_{U(2,k)} \|{B-1}\|_{U(2,k)}.$$

Without loss of generality, suppose $\utk{A-1} \geq \utk{B-1}$ so that $\|{A-1}\|_{U(2,k)}^2 \geq \|\ip{A_x - 1}{B_y - 1}\|_k$. Seeking contradiction, we observe that if $\|\ip{A_x}{B_y}-1\|_k > 20 \eps$,
then $\|\ip{A_x - 1}{B_y - 1}\|_k >  10 \eps$, and so
$$\|{A-1}\|_{U(2,k)}^2 >  10 \eps.$$

Let $M(x,x') = \ip{A_x - 1}{A_{x'} - 1}$, and note that
$$\utk{A-1} = \E_{x,x'} \left(\ip{A_x - 1}{A_{x'} - 1}\right)^k = \|M\|_k^k.$$
Consider the random variable $M = M(x,x')$ arising from a uniform random choice of $x, x' \in X$.
As observed in \cref{positive-semidefinite}, $M$ has non-negative odd moments.
Therefore, by \cref{odd-moments}, $\|1 + M\|_p > 1 + 5 \eps$.
Further,
\begin{align*}
    \ip{A_x}{A_{x'}} &= 1 + \ip{A_x - 1}{1} + \ip{1}{A_{x'}-1} + M(x,x') \\
    &= 1 + M(x,x') + (a(x) - 1) + (a(x') - 1) \\
    &\geq 1 + M(x,x') - (a(x) - 1)_{-} - (a(x') - 1)_{-},
\end{align*}
and so
\begin{align*}
\|\ip{A_x}{A_{x'}}\|_p & \geq \|1 + M \|_p - 2\|(a - 1)_{-}\|_{p} \\
    &\geq \|1 + M \|_p - 2\eps \\
    &> 1 + 3\eps,
\end{align*}
since we already noted that $\|(a - 1)_{-}\|_{\infty} \le \eps$.
Thus, we have $\|\ip{A_x}{A_{x'}}\|_p > 1 + 3\eps$. On the other hand, our regularity assumption on $A$ says that
$$\|\ip{A_x}{A_{x'}}\|_p  = \|A\|_{U(2,p)}^2 \leq (1+\eps)^2 < 1 + 3 \eps,$$
giving a contradiction. Thus, we must in fact have $\|\ip{A_x}{B_y} - 1 \|_k \leq 20 \eps$.
\end{proof}

\begin{proof}[Proof of \cref{spread-to-uniform}]
The proof is a straightforward combination of \cref{lem:sifting} with \cref{lem:regularity-to-uniformity}, followed by an application of Hölder's inequality.
Let $p = \lceil k/\eps \rceil$.
We  would like to apply \cref{lem:sifting} to control $U_{2,p}(A)$ and $U_{2,p}(B)$ with our spreadness assumption, which is indeed possible for $r = cdk/\eps$ where $c>0$ is a sufficiently large constant. We conclude that
$$
\|A\|_{U(2,p)} \leq (1 + 2 \eps) \|A\|_{1}, \qquad \|B\|_{U(2,p)} \leq (1 + 2 \eps) \|B\|_{1}.
$$
Consider
$$ D(x,y) := \frac{(A \circ B)(x,y)}{\E[A] \E[B]}.$$
From \cref{lem:regularity-to-uniformity} we have
$$ \|D - 1\|_{k} \leq 40 \eps. $$
For any subset $S \subseteq X \times Y$ of size $|S| \geq 2^{-k} |X \times Y|$ we have
\begin{align*}
 \E_{(x,y) \in S} |D(x,y) - 1| &\leq \left( \E_{(x,y) \in S} |D(x,y) - 1|^{k} \right)^{1/k} \\
 &\leq \left( 2^{k}  \E_{(x,y) \in X \times Y} |D(x,y) - 1|^{k} \right)^{1/k} \\
 &= 2 \|D - 1\|_{k} \\
 &\leq 80 \eps.
\end{align*}
If we consider the set $T \subseteq X \times Y$ of large deviations
$$ T = \set{(x,y)}{|D(x,y) - 1| > 80 \eps} ,$$
it must be the case that $|T| < 2^{-k}|X \times Y|$. Otherwise, we would obtain a large set $T$ with
$$ \E_{(x,y) \in T} |D(x,y) - 1| > 80 \eps, $$
contradicting the calculation above. \qedhere
\end{proof}

\subsection{Correlations involving spread matrices}

Before continuing we record the following (immediate) corollary of \Cref{spread-to-uniform}: if $f$ and $g$ are functions which are dense, spread, and left-lower bounded, and $h$ is any function which is dense, then the quantity $\E_{x,y,z}f(x,z)g(y,z)h(x,y)$ behaves roughly as if the three terms were independent.

\begin{corollary}
\label{three-funcs}
Let $f: X \times Z \to [0,1]$, $g:Y \times Z \to [0,1]$, $h:X \times Y \to [0,1]$. Let $d \ge 1$ and $\delta \in (0,1)$, and set $r=\Omega((d+\log(1/\delta)) d/\delta)$ and $\eps=\delta/160$.
Assume that:
\begin{enumerate}
    \item $\E[f], \E[g], \E[h] \ge 2^{-d}$.
    \item $f,g$ are $(r,\eps)$-spread.
    \item $f,g$ are $\eps$-lower bounded.
\end{enumerate}
Then
$$
\E_{x \in X, y \in Y, z \in Z} \left[ f(x,z) g(y,z) h(x,y) \right] = (1 \pm \delta) \E[f] \E[g] \E[h].
$$
\end{corollary}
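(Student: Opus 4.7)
The plan is to derive this as a short consequence of \Cref{lem:regularity-to-uniformity} combined with Hölder's inequality. Since the statement is scale-invariant in $f$ and $g$ (all three hypotheses and the claim are unchanged under $f \mapsto f/\E[f]$, $g \mapsto g/\E[g]$), first normalize so that $\E[f] = \E[g] = 1$; the desired conclusion then becomes $\E_{x,y,z}[f(x,z) g(y,z) h(x,y)] = (1 \pm \delta) \E[h]$.

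Pick an even integer $k$ of order $d + \log(1/\delta)$, and set $p = \lceil k/\eps \rceil$. The assumed spreadness bound $r = \Omega((d + \log(1/\delta))\, d / \delta)$ is precisely what is required for \Cref{lem:sifting} to give $\|f\|_{U(2,p)} \le (1+\eps)\|f\|_1$ and $\|g\|_{U(2,p)} \le (1+\eps)\|g\|_1$, which are exactly the hypotheses of \Cref{lem:regularity-to-uniformity}. I then invoke that lemma; more precisely, its proof establishes the intermediate bound $\|\ip{f_x}{g_y} - 1\|_k \le 20\eps$, which is exactly $\|f \circ g - 1\|_k \le 20\eps$.

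Rewriting the triple average and applying Hölder's inequality with conjugate exponents $k$ and $k/(k-1)$:
$$
\left| \E_{x,y,z}[f(x,z) g(y,z) h(x,y)] - \E[h] \right| = \left| \E_{x,y}\bigl[((f \circ g)(x,y) - 1)\, h(x,y)\bigr] \right| \le \|f \circ g - 1\|_k \cdot \|h\|_{k/(k-1)}.
$$
Since $h$ is $[0,1]$-valued and $k/(k-1) \ge 1$, we have $h(x,y)^{k/(k-1)} \le h(x,y)$ pointwise, so $\|h\|_{k/(k-1)} \le \E[h]^{(k-1)/k} = \E[h] \cdot \E[h]^{-1/k} \le 2\, \E[h]$, using $\E[h] \ge 2^{-d}$ together with $k \ge d$. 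The error is therefore at most $40\eps \E[h] \le \delta \E[h]$ for $\eps = \delta/640$. Unscaling by $\E[f]\E[g]$ gives the claimed $(1 \pm \delta) \E[f]\E[g]\E[h]$.

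The only real work is parameter bookkeeping: confirming that the prescribed $r$ suffices for the sifting step with $p = \lceil k/\eps \rceil$ and $\log(1/\E[f]), \log(1/\E[g]) \le d$, and tracking the absolute constants so that the final error is bounded by $\delta \E[h]$. No new ideas are needed beyond \Cref{lem:regularity-to-uniformity} and the observation that $[0,1]$-bounded functions satisfy $\|h\|_q \le \E[h]^{1/q}$, which lets Hölder convert an $L^k$ deviation bound on $f \circ g$ into an error proportional to $\E[h]$.
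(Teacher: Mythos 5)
Your argument is correct in substance and reaches the same conclusion, but it finishes differently than the paper. The paper invokes \Cref{spread-to-near-uniform} as a black box: after sifting, \Cref{lem:regularity-to-uniformity} plus a Markov-type argument give that $(f\circ g)(x,y)=(1\pm\delta/2)\E[f]\E[g]$ outside a set of measure at most $2^{-k}$, and the triple average is then split into the contribution from the good set (where the product factorizes approximately) and the bad set (bounded trivially by $2^{-k}$, which must be compared against $\E[f]\E[g]\E[h]\ge 2^{-3d}$ --- hence the paper's choice $k=3d+\log(1/\delta)+1$). You skip the thresholding and instead use the $L^k$-moment bound directly: from inside \Cref{lem:regularity-to-uniformity} you take $\|f\circ g-\E[f]\E[g]\|_k\lesssim \eps\,\E[f]\E[g]$ and transfer the error onto $h$ via H\"older with exponents $(k,\tfrac{k}{k-1})$, noting that for $[0,1]$-valued $h$ one has $\|h\|_{k/(k-1)}\le\E[h]^{(k-1)/k}\le 2\E[h]$ once $k\ge d$. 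This is a tidy shortcut: it needs only $k$ of order $d$ rather than $3d$, avoids the good/bad split, and yields the same order of $r$. The one cost is that you read the moment bound off the proof of \Cref{lem:regularity-to-uniformity} rather than using its stated conclusion (the stated conclusion alone does not control $\E[f\circ g]$), but that is a reasonable choice.

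One small slip to flag: you assert the hypotheses are scale-invariant under $f\mapsto f/\E[f]$ and normalize up front, but $[0,1]$-boundedness of $f,g$ is not preserved by this rescaling, and \Cref{lem:sifting} as stated and proved (via $\|F\|_1\ge\delta^{\ell k}$ and $\|D\|_\infty\le 1/\delta$) does use $1$-boundedness. The fix is cosmetic and matches what the paper actually does: apply \Cref{lem:sifting} to the original $[0,1]$-valued $f,g$ to obtain $\|f\|_{U(2,p)}\le(1+2\eps)\E[f]$ and likewise for $g$; these grid-norm conclusions (and the left-lower-boundedness) are scale-invariant, so you may then pass to the normalized functions before invoking \Cref{lem:regularity-to-uniformity}, which is stated for general nonnegative matrices and normalizes internally. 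With that reordering, and tracking the factor of $2$ coming from the $(1+2\eps)$-regularity that sifting actually delivers, your proof goes through.
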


\begin{proof}
Define
$$
S = \left\{(x,y) \in X \times Y: (f \circ g)(x,y) = (1 \pm \delta/2) \E[f] \E[g]\right\}.
$$
Applying \Cref{spread-to-uniform} (with $A=f$,$B = g$, $\eps=\delta/160$, and $k=3d + \log(1/\delta)+1$) gives that $|S| \ge (1-2^{-k}) |X| |Y|$. For $(x,y) \in S$ we have
$$
\E_{z \in Z} \left[ f(x,z) g(y,z) h(x,y) \right] = (f \circ g)(x,y) h(x,y) = (1 \pm \delta/2) \E[f] \E[g] h(x,y).
$$
For $(x,y) \notin S$ we can naively bound
$$
\E_{z \in Z} \left[ f(x,z) g(y,z) h(x,y) \right] \in [0,1].
$$
Averaging over all $(x,y) \in X \times Y$ gives
$$
\E_{x \in X, y \in Y, z \in Z} \left[ f(x,z) g(y,z) h(x,y) \right] = (1 \pm \delta/2) \E[f] \E[g] \E[h] \pm \Pr[(x,y) \in S].
$$
The claim follows by the choice of $k$, since
\begin{equation*}
\Pr[(x,y) \in S] \le 2^{-k} \le (\delta/2) 2^{-3d} \le (\delta/2) \E[f] \E[g] \E[h]. \qedhere
\end{equation*}
\end{proof}

\section{Cylinder intersection closure}
\label{sec:cylinder-intersection-closure}

We prove in this section the two lemmas in \Cref{subsec:cylinder-intersection-closure}: \Cref{well-behaved-large-closure,find-cube}, as well as formally show that \Cref{psd-ci} and \Cref{psd-cic} are equivalent.

\begin{claim}
\label{psd-ci-cic-equivalent}
\Cref{psd-ci} and \Cref{psd-cic} are equivalent.
\end{claim}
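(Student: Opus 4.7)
The plan is to prove both directions by exchanging the density variable $t = \log(N^3/|F|)$ with the correlation variable $d = \log(|D|/|F \cap D|)$. The single observation used throughout is that $\CIC(T)$ is by definition the smallest cylinder intersection containing $T$: in particular, for any cylinder intersection $F$ with $T \subset F$ we have $\CIC(T) \subset F$, so $|\CIC(T)| \le |F|$.

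For the direction \Cref{psd-ci} $\Rightarrow$ \Cref{psd-cic}, I would take the natural test cylinder intersection $F := \CIC(T)$. Given $T \subset D$ with $|T|/|D| \ge 2^{-d}$, write $t_0 := \log(N^3/|\CIC(T)|)$ and apply \Cref{psd-ci} with parameter $t := \min(t_0, \log(1/\gamma))$; note that $F$ has density $\le 2^{-t}$ and $t \le \log(1/\gamma)$ by construction, so the hypothesis is satisfied. The conclusion yields $|F \cap D|/|D| \le 2^{-c t^{1/3}}$, and since $T \subset F \cap D$ this forces $c t^{1/3} \le d$, i.e.\ $t \le (d/c)^3$. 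If $t = t_0$ we get directly $|\CIC(T)|/N^3 \ge 2^{-(d/c)^3}$; choosing the implicit constant in the hypothesis $\gamma \le 2^{-O(d^3)}$ of \Cref{psd-cic} to exceed $1/c^3$ rules out the alternative case $t = \log(1/\gamma) < t_0$, since that would force $C_1 d^3 \le \log(1/\gamma) \le (d/c)^3$ with $C_1 > 1/c^3$.

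For the reverse direction \Cref{psd-cic} $\Rightarrow$ \Cref{psd-ci}, given a cylinder intersection $F$ with $|F|/N^3 \le 2^{-t}$ and $t \le \log(1/\gamma)$, I would set $T := F \cap D$ and $d := \log(|D|/|T|)$ (taking $T \ne \emptyset$, else the conclusion is trivial). Since $T \subset F$, minimality gives $\CIC(T) \subset F$. If the hypothesis $\gamma \le 2^{-C d^3}$ of \Cref{psd-cic} applies (with implicit constant $C$), it produces $|\CIC(T)|/N^3 \ge 2^{-C d^3}$, forcing $t \le C d^3$ and hence $d \ge (t/C)^{1/3}$. Otherwise $C d^3 > \log(1/\gamma) \ge t$ and the same bound $d \ge (t/C)^{1/3}$ holds automatically. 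Either way $|F \cap D|/|D| = 2^{-d} \le 2^{-c t^{1/3}}$ with $c := C^{-1/3}$.

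No new technique is needed; the argument is essentially bookkeeping between two exponents. The only mildly delicate point in either direction is handling the $\gamma$-dependence in the hypothesis ($t \le \log(1/\gamma)$ in \Cref{psd-ci}; $\gamma \le 2^{-O(d^3)}$ in \Cref{psd-cic}), and both are dispatched by a small case split of the form ``apply the target theorem if its hypothesis is satisfied, otherwise the desired inequality holds for free.''
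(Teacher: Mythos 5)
Your proposal is correct and follows essentially the same contrapositive route as the paper's (terse) proof: in both directions the key facts are that $\CIC(T)$ is itself a cylinder intersection containing $T$, and that any cylinder intersection $F \supseteq T$ contains $\CIC(T)$. Your write-up is more explicit than the paper's -- you carry out both directions and handle the $\gamma$-hypothesis case split carefully -- but the underlying argument is the same; the only minor nit is that you use a single constant $C$ where \Cref{psd-cic} implicitly has two (one in the hypothesis $\gamma \le 2^{-O(d^3)}$ and one in the conclusion $|\CIC(T)| \ge 2^{-O(d^3)}N^3$), which costs nothing beyond replacing $C$ by the larger of the two.
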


\begin{proof}
Briefly, \Cref{psd-cic} is the contra-positive form of \Cref{psd-ci}.
Specifically, suppose \Cref{psd-cic} is true. Suppose we have the conditions of \Cref{psd-ci}. Now, set $T = F \cap D$ and $d =
c_1 t^{1/3}$ for a small enough constant $c_1$. If $|T| \geq 2^{-d} |D|$, then we must have $|\CIC(T)| \geq 2^{-c_1 c_2 d^3} N^3$ for some constant $c_2$. This violates the density of $F$ for a suitable constant $c_1$.
The reverse direction follows similarly.
\end{proof}

The proof of \Cref{well-behaved-large-closure} is a straightforward application of \Cref{three-funcs}.

\begin{proof}[Proof of \Cref{well-behaved-large-closure}]
Let $M=\CIC(T)$. Let $f: X \times Z \to [0,1]$, $g: Y \times Z \to [0,1]$, $h: X \times Y \rightarrow [0,1]$ given by the condition that $T$ is $(d,r,\eps)$-well behaved. As $f,g,h$ are bounded and supported on the $XZ, YZ$, and $XY$-marginals of $T$, respectively, we have the pointwise lower-bound
$$
M(x,y,z) \ge f(x,z) g(y,z) h(x,y).
$$
Apply \Cref{three-funcs} to $f,g,h$ to conclude that
$$
\E_{(x,y,z) \in X \times Y \times Z} M(x,y,z) \ge \E_{(x,y,z) \in X \times Y \times Z} f(x,z) g(y,z) h(x,y) \ge 2^{-O(d)}.
$$
This implies that $|M| \ge 2^{-O(d)} |X||Y||Z|$.
\end{proof}

We now move to prove \Cref{find-cube}. We start with the following claim, which shows how the pseudorandomness of $D$ allows to approximate certain averages of ratios that come up in the proof.

\begin{claim}
\label{approx-h}
Assume that $D \subset [N]^3$ is $\gamma$-pseudorandom with respect to a cube $C = X \times Y \times Z$, and let $T \subset D$. Define the function $h:X \times Y \to [0,1]$ given by
$$
h(x,y) = \frac{\E_{z \in Z}[T(x,y,z)]}{\E_{z \in Z}[D(x,y,z)]}.
$$
Then
$$
\E[h] = \frac{|T \cap C|}{|D \cap C|} \pm O(\gamma^{1/3}).
$$
\end{claim}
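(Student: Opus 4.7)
The plan is to reduce the statement to a covariance bound, then extract the needed variance of the slice-density from the two pseudorandomness conditions. I would set $d(x,y) := \E_{z \in Z}[D(x,y,z)]$ and $t(x,y) := \E_{z \in Z}[T(x,y,z)]$, so that $h = t/d$ with the convention $0/0 := 0$. Because $T \subseteq D$, we have $t \le d$ pointwise, so $h \in [0,1]$ and the identity $h \cdot d = t$ holds everywhere on $X \times Y$. Averaging over $(x,y)$ gives $\E[d] = |D \cap C|/|C|$ and $\E[t] = |T \cap C|/|C|$, so the target ratio $|T \cap C|/|D \cap C|$ is exactly $\E[t]/\E[d]$.

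From the identity $\E[t] = \E[h \cdot d] = \E[h]\,\E[d] + \mathrm{Cov}(h, d)$, the error $\E[h] - \E[t]/\E[d]$ equals $-\mathrm{Cov}(h,d)/\E[d]$, so it suffices to bound $|\mathrm{Cov}(h,d)|/\E[d]$. By Cauchy--Schwarz together with the pointwise bound $h \in [0,1]$ (hence $\mathrm{Var}(h) \le 1$), this reduces further to bounding $\sqrt{\mathrm{Var}(d)}/\E[d]$.

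The variance of $d$ is exactly what the two pseudorandomness conditions control. Condition (1) of the definition yields $\E_{x,y}[d(x,y)] = \mu(1 \pm \gamma)$, and condition (2), rewritten using $d^2$, says
$$
\E_{x,y}[d(x,y)^2] \;=\; \E_{x,y,z,z'}[D(x,y,z)\,D(x,y,z')] \;=\; \mu^2(1 \pm \gamma).
$$
Expanding gives $\mathrm{Var}(d) \le \mu^2(1+\gamma) - \mu^2(1-\gamma)^2 = O(\gamma)\,\mu^2$. Plugging back in, $|\mathrm{Cov}(h,d)|/\E[d] \le O(\sqrt{\gamma})\,\mu\,/\,(\mu(1-\gamma)) = O(\sqrt{\gamma})$, which comfortably beats the claimed $O(\gamma^{1/3})$.

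The only real subtlety is the degenerate case $|D \cap C| = 0$ (i.e., $\mu = 0$ or $\gamma \ge 1$), where $\E[t]/\E[d]$ is not even defined; but $\mu = 0$ forces $T = \emptyset$ and $h \equiv 0$, and the statement is vacuous when $\gamma \ge 1$. Beyond this bookkeeping, the whole argument is a single second-moment computation, and I do not anticipate a substantial obstacle.
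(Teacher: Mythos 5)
Your proof is correct, and it takes a genuinely different route from the paper's. Both arguments begin from the same second-moment input: conditions (1) and (2) of cube-pseudorandomness give $\E[d] = \mu(1\pm\gamma)$ and $\E[d^2] = \mu^2(1\pm\gamma)$, hence $\mathrm{Var}(d) = O(\gamma)\mu^2$. The paper then normalizes $v = d/\mu(C)$ and applies Chebyshev at a threshold of $\gamma^{1/3}$ to split $X\times Y$ into a ``good'' set where $v = 1\pm\gamma^{1/3}$ (so that $h \approx t/\mu(C)$ pointwise) and a ``bad'' set of measure $O(\gamma^{1/3})$ where $h$ is bounded trivially by $1$; the exponent $1/3$ comes from balancing the threshold error against the exceptional mass. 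Your covariance identity $\E[h] - \E[t]/\E[d] = -\mathrm{Cov}(h,d)/\E[d]$ sidesteps the pointwise/exceptional-set bookkeeping entirely: a single Cauchy--Schwarz with $\mathrm{Var}(h)\le 1$ gives $|\mathrm{Cov}(h,d)|/\E[d] \le \sqrt{\mathrm{Var}(d)}/\E[d] = O(\sqrt\gamma)$, which is cleaner, shorter, and strictly stronger than the claimed $O(\gamma^{1/3})$ (and your handling of the $0/0$ convention and the degenerate $\mu(C)=0$ case is the right amount of care). Since the downstream use of this claim in \Cref{find-cube-mostly} only needs $\gamma^{1/3}$, the improvement doesn't propagate to better final parameters, but the argument is a genuine simplification.
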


\begin{proof}
Let $\mu(C) = |D \cap C|/|C|$. Define $v:X \times Y \to \R_{\ge 0}$ by
$$
v(x,y) = \frac{\E_{z \in Z}[D(x,y,z)]}{\mu(C)}.
$$
Note that $\E[v]=1$, and the second moment of $v$ is
$$
\E[v^2] = \frac{\E_{(x,y,z,z') \in X \times Y \times Z \times Z} [D(x,y,z) D(x,y,z')]}{\E_{(x,y,z) \in X \times Y \times Z} [D(x,y,z)]^2}.
$$
The assumption that $D$ is $\gamma$-pseudorandom with respect to $C$ implies that $\E[v^2] = \frac{1 \pm \gamma}{1 \pm \gamma} \le 1+4 \gamma$ and hence $\text{Var}[v]=O(\gamma)$. Define
$$
S = \left\{(x,y) \in X \times Y: v(x,y) = 1 \pm \gamma^{1/3} \right\}.
$$
Chebyshev's inequality gives that $|S| \ge (1-O(\gamma^{1/3})) |X||Y|$.
For $(x,y) \in S$ we have
$$
h(x,y) = \frac{\E_{z \in Z}[T(x,y,z)]}{ (1 \pm \gamma^{1/3}) \mu(C)} = (1 \pm 2 \gamma^{1/3}) \frac{\E_{z \in Z}[T(x,y,z)]}{\mu(C)} = \frac{\E_{z \in Z}[T(x,y,z)]}{\mu(C)} \pm O(\gamma^{1/3}),
$$
where we used the fact that $\E_{z \in Z}[T(x,y,z)] \le \E_{z \in Z}[D(x,y,z)] = (1 \pm \gamma^{1/3}) \mu(C) \le 2 \mu(C)$.
For $(x,y) \notin S$ we naively bound $h(x,y) \in [0,1]$. Thus we can estimate
$$
\E[h] = \frac{\E_{(x,y,z) \in C}[T(x,y,z)]}{\mu(C)} \pm O(\gamma^{1/3}) \pm \Pr[(x,y) \notin S] =  \frac{|T \cap C|}{|D \cap C|} \pm O(\gamma^{1/3}).
$$
\end{proof}

We prove \Cref{find-cube} in two steps. First, we do a density increment to find a cube in which the set $T$ is ``mostly'' well-behaved (with respect to some specific candidate functions $f,g,h$ which are obtained by considering marginals of the uniform distribution on $T$). The only deficiency will be that not all points will be left-lower bounded, but instead only most of them. Then we do a pruning phase to remove the bad points. We start with the necessary definitions.

\begin{definition}[Mostly left lower-bounded]
Let $f:X \times Y \to [0,1]$, and let $\eps \in (0,1), \beta \in (0,1)$.
We say that $f$ is $\beta$-mostly $\eps$-left lower-bounded if for at least a $(1-\beta)$-fraction of $x \in X$, it holds that
$$
\E_{y \in Y} [f(x,y)] \ge (1-\eps) \E[f].
$$
\end{definition}

\begin{definition}[Mostly well-behaved sets]
\label{def:mostly-well-behaved}
Let $T \subset X \times Y \times Z$, and let $d \ge 1,r \ge 1,\eps \in (0,1),\beta \in (0,1)$. We say that $T$ is $(d,r,\eps,\beta)$-mostly well behaved if there exist bounded functions $f: X \times Z \to [0,1]$, $g: Y \times Z \to [0,1]$, $h: X \times Y \rightarrow [0,1]$, supported on the $XZ, YZ$, and $XY$-marginals of $T$, respectively, such that the following conditions hold:
Suppose that
\begin{enumerate}
    \item $\E[f],\E[g],\E[h] \geq 2^{-d}$.
    \item $f,g$ are $(r,\eps)$-spread.
    \item $f,g$ are $\beta$-mostly $\eps$-left-lower bounded.
\end{enumerate}
\end{definition}

\begin{lemma}[Finding mostly well-behaved sets]
\label{find-cube-mostly}
Let $D \subset [N]^3$ be $\gamma$-pseudorandom with respect to large cubes. Let $d \ge 1, r \ge 1, \eps \in (0,1),\beta \in (0,1)$, and assume $\gamma \le 2^{-\Omega(dr/\eps)} \beta$. Let $T \subset D$ of size $|T| \ge 2^{-d} |D|$. Then there is a cube $C \subset [N]^3$ of size $|C| \ge 2^{-O(dr/\eps)} N^3$ such that $T \cap C$ (considered as a subset of the cube $C$) is $(d+1,r,\eps,\beta)$-well behaved.
\end{lemma}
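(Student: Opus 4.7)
The plan is an iterative density-increment argument on the relative density $\mu_C := |T \cap C|/|D \cap C|$ inside nested sub-cubes $C = X \times Y \times Z \subseteq [N]^3$, starting from $C = [N]^3$ (where $\mu_C = |T|/|D| \ge 2^{-d}$) and terminating when the conditions of \Cref{def:mostly-well-behaved} are met. Inside each current cube we define the natural candidate functions
\[
f(x,z) := \frac{\E_{y \in Y} T(x,y,z)}{\E_{y \in Y} D(x,y,z)}, \qquad g(y,z) := \frac{\E_{x \in X} T(x,y,z)}{\E_{x \in X} D(x,y,z)}, \qquad h(x,y) := \frac{\E_{z \in Z} T(x,y,z)}{\E_{z \in Z} D(x,y,z)},
\]
with the convention $0/0 := 0$. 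Each is $[0,1]$-valued because $T \subseteq D$, and each is automatically supported on the relevant marginal of $T \cap C$. By \Cref{approx-h} and its two symmetric counterparts, $\E[f], \E[g], \E[h] = \mu_C \pm O(\gamma^{1/3})$, so once $\mu_C \ge 2^{-d}$ the density requirement $\E[f], \E[g], \E[h] \ge 2^{-(d+1)}$ comes for free.

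If $f$ fails $(r,\eps)$-spread, there is a rectangle $R = X' \times Z' \subseteq X \times Z$ with $|R| \ge 2^{-r}|X||Z|$ and $\E_R[f] \ge (1+\eps)\E[f]$. Setting $C' := X' \times Y \times Z'$, direct expansion gives
\[
\mu_{C'} \;=\; \frac{\E_R[f(x,z) \cdot \E_y D(x,y,z)]}{\E_R[\E_y D(x,y,z)]} \;\approx\; \E_R[f] \;\ge\; (1+\eps)\mu_C,
\]
where the approximation uses the pseudorandomness condition $\E_{x,y,z,z'}[D(x,y,z)D(x,y,z')] = \mu_D^2(1 \pm \gamma)$ together with Chebyshev to conclude $\E_y D(x,y,z) \approx |D|/N^3$ on all but an $O(\gamma)$-fraction of $(x,z)$; the same argument handles $g$. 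If instead $f$ fails the left-lower-bound condition, the set $X' := \{x \in X : \E_z f(x,z) < (1-\eps)\E[f]\}$ has $|X'| \ge \beta|X|$, and a direct averaging gives $\E_{x \in X \setminus X'}[\E_z f(x,z)] \ge (1+\Omega(\beta\eps))\E[f]$, which via \Cref{approx-h} yields $\mu_{C'} \ge (1+\Omega(\beta\eps))\mu_C$ on $C' := (X \setminus X') \times Y \times Z$.

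Since $\mu_C$ can only grow and is bounded by $1$, the iteration halts after at most $O(d/\eps)$ spread steps and $O(d/(\beta\eps))$ left-lower-bound steps. A spread step shrinks $|C|$ by factor $2^{-r}$ and a left-lower-bound step by factor $(1-\beta)$, producing a total cube shrinkage of $2^{-O(dr/\eps)} \cdot (1-\beta)^{O(d/(\beta\eps))} = 2^{-O(dr/\eps)}$, as required. The hypothesis $\gamma \le 2^{-\Omega(dr/\eps)}\beta$ ensures both that every intermediate cube has size $\ge \gamma N^3$ (so the pseudorandomness conditions continue to apply) and that the $O(\gamma^{1/3})$ slack in the approximations stays well below the smallest per-step density gain of $\Omega(\beta\eps \cdot 2^{-d})$.

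The main obstacle is verifying that each alleged increment survives the pseudorandomness-based ratio estimates: even after cascading $O(d/(\beta\eps))$ applications of the Chebyshev argument behind \Cref{approx-h}, the cumulative approximation error must never overtake the multiplicative gain. This is the source of the quantitative dependence on $\gamma$ in the hypothesis, and in particular the reason that the factor $\beta$ (governing the smallest possible per-step increment) appears there.
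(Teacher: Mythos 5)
Your proof takes a genuinely different route from the paper's: you run an explicit iterative density-increment argument on the raw relative density $\mu_C := |T\cap C|/|D\cap C|$, while the paper takes in one shot the cube $C$ maximizing the potential $\phi(C) = \mu_C \cdot |C|^{\eta}$ (with $\eta = O(\eps/r)$) and derives a contradiction from the failure of each condition. The two are morally the same density-increment argument, but the paper's potential-function formulation sidesteps all the per-step bookkeeping you need: termination, step counts, and shrinkage accounting are absorbed by the $|C|^{\eta}$ factor and the existence of a maximizer over a finite set. Your explicit iteration works too and arguably makes the mechanism more transparent, but it requires the amortized counting (at most $O(d/\eps)$ spread steps each costing $r$ bits of size, at most $O(d/(\beta\eps))$ lower-bound steps each costing $O(\beta)$ bits) which you correctly carry out; it is worth noting that this accounting depends on $r \ge 1$ so that the spread-step cost $O(dr/\eps)$ dominates the lower-bound cost $O(d/\eps)$.

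One imprecision worth fixing: at the end you claim the hypothesis $\gamma \le 2^{-\Omega(dr/\eps)}\beta$ ensures the $O(\gamma^{1/3})$ slack stays below the per-step gain $\Omega(\beta\eps \cdot 2^{-d})$, i.e.\ $\gamma^{1/3} \ll \beta\eps\cdot 2^{-d}$. The hypothesis does not imply this for very small $\beta$ (the RHS carries an extra power of $\beta$ not covered by the linear $\beta$ in the hypothesis). Fortunately you do not actually need it: the $\gamma^{1/3}$ error appears only when passing from the functional inequality $\E_{X'\times Z}[f] < (1-\eps)\E[f]$ to the cardinality inequality $|T\cap C'|/|D\cap C'| < (1-\eps/2)\mu_C$, which requires only $\gamma^{1/3} \ll \eps\cdot 2^{-d}$. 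The multiplicative $(1+\Omega(\eps\beta))$ gain for $C'' = (X\setminus X')\times Y\times Z$ then follows from exact cardinality arithmetic plus the $(1\pm\gamma)$-accurate estimate of $|D\cap C'|/|D\cap C| \approx \beta$, which needs only $\gamma \ll \eps$. Both are implied by the hypothesis, so the argument stands once the error-budget remark is corrected.
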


\begin{proof}
Let $\eta=O(\eps/r)$. Given a cube $C \subset [N]^3$ define the function
$$
\phi(C) = \frac{|T \cap C|}{|D \cap C|} \cdot |C|^{\eta}.
$$
Let $C = X \times Y \times Z$ be the cube which maximizes $\phi(\cdot)$. We will show that $C$ satisfies the required properties. Define the functions $f: X \times Z \to [0,1]$, $g: Y \times Z \to [0,1]$, $h: X \times Y \rightarrow [0,1]$ as follows:
$$
f(x,z) = \frac{\E_{y \in Y} T(x,y,z)}{\E_{y \in Y} D(x,y,z)}, \quad
g(y,z) = \frac{\E_{x \in X} T(x,y,z)}{\E_{x \in X} D(x,y,z)}, \quad
h(x,y) = \frac{\E_{z \in Z} T(x,y,z)}{\E_{z \in Z} D(x,y,z)}.
$$
Observe that indeed $f,g,h$ are supported on the $XZ, YZ, XY$ faces of $T$, respectively; and that since $T \subset D$, the functions $f,g,h$ take values in $[0,1]$.
We will prove that $f,g,h$ are all $2^{-d}$ dense, $(r,\eps)$-spread and $\beta$-mostly $\eps$-left lower-bounded. For concreteness, we prove these properties for $h$, but they hold for $f,g$ by an analogous argument.

\paragraph{Large cube.}
First, for $C_0=[N]^3$ we have $\phi(C_0) = (|T|/|D|) N^{3 \eta} \ge 2^{-d} N^{3 \eta}$, and for $C$ we have $\phi(C)=(|T \cap C| / |D \cap C|) |C|^{\eta}$. Since $C$ maximizes $\phi(\cdot)$ we can already make two deductions: first, since $|C| \le N^3$ we must have $|T \cap C| \ge 2^{-d} |D \cap C|$; and second, since $|T \cap C| \le |D \cap C|$ we have $|C|^{\eta} \ge 2^{-d} N^{3 \eta}$, which implies $|C| \ge 2^{-d/\eta} N^3$. Note that our assumption that $\gamma \le 2^{-O(dr/\eps)}$ implies that $D$ is $\gamma$-pseudorandom with respect to $C$.

\paragraph{Density.}
We prove that $\E[h] \ge 2^{-(d+1)}$.
Apply \Cref{approx-h} to $T$ and the cube $C$. We get
$$
\E[h] = \frac{|T \cap C|}{|D \cap C|} \pm O(\gamma^{1/3}) \ge 2^{-d} \pm O(\gamma^{1/3}).
$$
The claim follows since $\gamma \le 2^{-O(d)}$.

\paragraph{Spreadness.}
We next show that $h$ is $(r,\eps)$-spread. Assume towards a contradiction that there exists a rectangle
$R' = X' \times Y' \subset X \times Y$ of size $|R'| \ge 2^{-r} |X| |Y|$ such that
$$
\E_{(x,y) \in X' \times Y'} h(x,y) > (1+\eps) \E_{(x,y) \in X \times Y} h(x,y).
$$
Define $C' = X' \times Y' \times Z$. Since we take $\gamma$ small enough (concretely, $\gamma \le 2^{-(d/\eta+r)}$), $D$ is also $\gamma$-pseudorandom with respect to $C'$. We will show that $\phi(C')>\phi(C)$, which contradicts the assumption that $C$ maximizes $\phi(\cdot)$. Applying \Cref{approx-h} to $C$ and $C'$ gives
$$
\E_{(x,y) \in X \times Y} h(x,y) = \frac{|T \cap C|}{|D \cap C|} \pm O(\gamma^{1/3})
$$
and
$$
\E_{(x,y) \in X' \times Y'} h(x,y) = \frac{|T \cap C'|}{|D \cap C'|} \pm O(\gamma^{1/3})
$$
As we have $\gamma \le (2^{-d} \eps)^{O(1)}$, we get that
$$
\frac{|T \cap C'|}{|D \cap C'|} \ge (1 + \eps/2) \frac{|T \cap C|}{|D \cap C|}
$$
which gives
$$
\frac{\phi(C')}{\phi(C)} \ge (1 + \eps/2) \left(\frac{|C'|}{|C|}\right)^{\eta} \ge (1 + \eps/2) 2^{-r \eta} > 1
$$
for $\eta=O(\eps/r)$ small enough.

\paragraph{Mostly left lower-bounded.}
We next show that $h$ is mostly left lower-bounded. Assume towards a contradiction that there exists $X' \subset X$ of size $|X'| = \beta |X|$ such that
$$
\E_{y \in Y} h(x,y) < (1-\eps) \E[h] \qquad \forall x \in X'.
$$
Set $C' = X' \times Y \times Z$. Since we assume that $\gamma$ is small enough (concretely, $\gamma \le 2^{-d/\eta} \beta $), $D$ is $\gamma$-pseudorandom with respect to $C'$. Applying \Cref{approx-h} to $C'$ gives
$$
\E_{(x,y) \in X' \times Y} h(x,y) = \frac{|T \cap C'|}{|D \cap C'|} \pm O(\gamma^{1/3}).
$$
Repeating the same argument for $C$, and using the fact that we have $\gamma \le (2^{-d} \eps)^{O(1)}$, we get
$$
\frac{|T \cap C'|}{|D \cap C'|} \le (1-\eps) \frac{|T \cap C|}{|D \cap C|} + O(\gamma^{1/3}) \le (1-\eps/2) \frac{|T \cap C|}{|D \cap C|},
$$
Since $D$ is $\gamma$-pseudorandom with respect to $C$, $C'$, we have $|D \cap C| = (1 \pm \gamma) \mu |C|$, $|D \cap C'| = (1 \pm \gamma) \mu |C'|$ and hence
$$
\frac{|T \cap C'|}{|C'|} \le (1-\eps/4) \frac{|T \cap C|}{|C|}.
$$
Let $C'' = C \setminus C' = (X \setminus X') \times Y \times Z$. We will show that $\phi(C'') > \phi(C)$, which is a contradiction to the maximality of $C$. Note that
$$
|T \cap C''| = |T \cap C| - |T \cap C'| \ge (1 - (1-\eps/4) \beta) |T \cap C|
$$
and
$$
|D \cap C''| = |D \cap C| - |D \cap C'| = (1 - \beta \pm \gamma) |D \cap C|
$$
and
$$
|C''| = (1-\beta) |C|.
$$
Thus
$$
\frac{\phi(C'')}{\phi(C)} = \frac{|T \cap C''|}{|T \cap C|} \frac{|D \cap C|}{|D \cap C''|} \left( \frac{|C''|}{|C|}\right)^{\eta} \ge
(1 - \beta + \eps \beta /4) (1 - \beta + \gamma)^{-1} (1-\beta)^{\eta} > 1
$$
where the last inequality holds for $\gamma \le O(\beta \eps), \eta \le O(\eps)$ small enough.
\end{proof}

As we discussed, we prove \Cref{find-cube} by pruning the cube obtained by \Cref{find-cube-mostly}.

\begin{proof}[Proof of \Cref{find-cube}]
Apply \Cref{find-cube-mostly} with parameters $d,r+1,\eps/2,\beta=O(2^{-d} \eps)$, which we can as we assume $\gamma=2^{-\Omega(d^3)}$. Let $C=X \times Y \times Z$ and $f: X \times Z \to [0,1]$, $g: Y \times Z \to [0,1]$, $h: X \times Y \rightarrow [0,1]$ be the obtained functions, satisfying the condition that $T \cap C$ viewed as a subset of $C$ is $(d+1,r+1,\eps/2,\beta)$-well behaved. We next prune $C$ to obtain the desired cube and corresponding functions.

Let $X' \subset X$ be the set of points $x$ where $f$ is $(\eps/2)$-left lower-bounded, and $Y' \subset Y$ be the set of points $y$ where $g$ is $(\eps/2)$-left lower-bounded, both with respect to $C$. Let $C'=X' \times Y' \times Z$ and let $f',g',h'$ be the restrictions of $f,g,h$ to $X' \times Z, Y' \times Z, X' \times Y'$, respectively. We claim that $T \cap C'$, viewed as a subset of the cube $C'$, is $(d+2,r,\eps)$ well-behaved, witnessed by $f',g',h'$.

We first show that $\E[f'],\E[g'],\E[h'] \ge 2^{-(d+2)}$. We show this for $f'$, and an analogous argument works for $g', h'$.
Note that since $|X'| \ge (1-\beta) |X|$ and $f$ takes values in $[0,1]$, we have
$$
\E[f'] = \E[f] \pm O(\beta).
$$
Since we know $\E[f] \ge 2^{-(d+1)}$, taking $\beta=O(2^{-d})$ small enough guarantees that $\E[f'] \ge 2^{-(d+2)}$. 

We next show that $f',g'$ are $(r,\eps)$-spread. We show this for $f'$, and an analogous argument works for $g'$.
Assume that $R \subset X' \times Z$ is a rectangle of size $|R| \ge 2^{-r} |X'| |Z|$. We can also view $R$ as a rectangle $R \subset X \times Z$ of size $|R| \ge (1-\beta) 2^{-r} |X| |Z| \ge 2^{-(r+1)} |X| |Z|$. Recalling that $f'$ is a restriction of $f$, and applying the assumption that $f$ is $(r+1,\eps/2)$-spread gives
$$
\E_{(x,z) \in X' \times Z} f'(x,z) = \E_{(x,z) \in X' \times Z} f(x,z) \le (1+\eps/2) \E[f] \le (1+\eps) \E[f'],
$$
where in the last inequality we use the fact that $\E[f']=\E[f] \pm O(\beta)$ and our choice of $\beta = O(2^{-d} \eps)$. 

Finally, we show that $f',g'$ are $\eps$-left lower-bounded. We show this for $f'$, and an analogous argument works for $g'$.
Take any $x \in X'$. We have by assumption
$$
\E_{z \in Z}[f(x,z)] \ge (1-\eps/2) \E[f].
$$
We already saw that $\E[f'] = \E[f] \pm \beta$, and so for $\beta = O(2^{-d}\eps)$ we get that
$$
\E_{z \in Z}[f'(x,z)] = \E_{z \in Z}[f(x,z)] \ge (1-\eps/2) \E[f] \ge (1-\eps) \E[f'].
$$
\end{proof}

\section{Construction of sparse pseudorandom sets}
\label{sec:construction}

We prove in this section the three lemmas from \Cref{subsec:construction}: \Cref{rand-nof-easy,inner-product-expander-coloring,pseudorandomness-from-expander-coloring}.

\begin{proof}[Proof of \Cref{rand-nof-easy}]
Let $(x,y,z) \in [N]^3$ denote the inputs to $D=D(\col)$. Each player sees two out of three inputs; namely, the first player sees $(y,z)$, the second $(x,z)$, and the third $(x,y)$. Each player computes the color of its respective edge, namely $c_1=\col(y,z)$, $c_2=\col(x,z)$, $c_3=\col(x,y)$. They then need to decide if $c_1=c_2=c_3$. This can be easily done by using a randomized protocol for equality between each pair of players, which requires sending only $O(1)$ bits using public randomness.
\end{proof}

We next show that the inner product function is a good expander-coloring. The proof uses standard arguments based on Fourier analysis.

\begin{proof}[Proof of \Cref{inner-product-expander-coloring}]
Let $\F_q$ be a finite field, $k \ge 3$ and set $\eta=q^{-(k-2)/4}$.
Let $X, Y \subset \F_q^k$ of size $|X|,|Y| \ge \eta q^k$. Fix any value $v \in \F_q$. We need to show that
$$
\Pr_{(x,y) \in X \times Y}[\ip{x}{y}=v] = q^{-1} (1 \pm \eta).
$$
We prove this using Fourier analysis. 
The additive characters of $\F_q$ are $\chi_u:\F_q \to \mathbb{C}^*$ for $u \in \F_q$. Given $a \in \F_q$, we have $\E_u[\chi_u(a)]=\1[a=0]$, and hence
$$
\Pr_{(x,y) \in X \times Y}[\ip{x}{y}=v] = \E_{(x,y) \in X \times Y} \E_{u \in \F_q} \left[\chi_u(\ip{x}{y}-v) \right].
$$
Using the fact that $\chi_0 \equiv 1$, $\chi_u(a+b)=\chi_u(a)\chi_u(b)$ and $|\chi_u(-v)|=1$ we get
$$
\Pr_{(x,y) \in X \times Y}[\ip{x}{y}=v] - q^{-1} \le q^{-1} \sum_{u \in \F_q \setminus \{0\}} \left| \E_{(x,y) \in X \times Y} \left[\chi_u(\ip{x}{y}) \right] \right|.
$$
To conclude the proof we bound the latter sum using Lindsey's lemma (see e.g. \cite{chor1988unbiased}). Fix $u \in \F_q \setminus \{0\}$. Let $1_X, 1_Y \in \bits^{q^k}$ be the indicator vectors of $X,Y$, respectively. Let $H$ be the corresponding Fourier transform matrix over $\F_q^k$, namely $H_{x,y}=\chi_u(\ip{x}{y})$ for $x,y \in \F_q^k$. It is well known that $H H^* = q^k I$ and hence its spectral norm is $\|H\|=q^{k/2}$. Thus
$$
\left| \E_{(x,y) \in X \times Y} \left[\chi_u(\ip{x}{y}) \right] \right|
= \frac{|1_X H 1_Y|}{|X||Y|} \le \frac{\|1_X\|_2 \|1_Y\|_2 \|H\|}{|X||Y|} = \frac{\|H\|}{\sqrt{|X||Y|}} \le \frac{1}{\eta q^{k/2}} = q^{-1} \eta,
$$
where the last equality follows by our choice of $\eta$.
\end{proof}

We now move to prove \Cref{pseudorandomness-from-expander-coloring}. We first develop counting lemmas for expanders, which we then apply to prove the lemma.

\subsection{Counting lemma for bi-partite expanders}

As a starting point, we develop counting lemmas for a single color class. These effectively are bi-partite expanders, but in a slightly non-standard regime, so we formally define them.

Let $G=(U,V,E)$ be a bi-partite graph with parts of equal size $|U|=|V|=N$. Given sets $X \subset U, Y \subset V$, we denote by $e_G(X,Y)$ the number of edges between $X,Y$.

\begin{definition}[Bi-partite expander]
Let $G=(U,V,E)$ be a bi-partite graph with $|U|=|V|=N$.
We say that $G$ is a $(N,p,\eta)$-expander, if for any sets $X \subset U, Y \subset V$ of size $|X|,|Y| \ge \eta N$ it holds
$$
e_G(X,Y) = p|X||Y| (1 \pm \eta).
$$
\end{definition}

Next, we extend the definition to $k$-partite graphs. Let $k \ge 2$, and let $G=(V_1,\ldots,V_k;E)$ be a $k$-partite graph with parts $V_1,\ldots,V_k$, each of size $N$. For $i \ne j$ we denote by $G_{ij}$ the induced bi-partite graph between parts $V_i,V_j$, and shorthand $E_{ij}(G)=E(G_{ij})$.

\begin{definition}[$k$-partite expander]
Let $G$ be a $k$-partite graph, with parts each of size $N$. We say that $G$ is a $k$-partite $(N,p,\eta)$-expander if for any $i \ne j$, the bi-partite graph $G_{ij}$ is an $(N,p,\eta)$-expander.
\end{definition}

A graph $H$ on $k$ nodes is said to be a \emph{labeled graph} if its nodes are labeled by $1,\ldots,k$.

\begin{definition}[Labeled graph homomorphism]
Let $G$ be a $k$-partite $(N,p,\eta)$-expander with parts $V_1,\ldots,V_k$. Let $H$ be labeled graph on $k$ nodes. A tuple $(v_1,\ldots,v_k) \in V_1 \times \cdots \times V_k$ is a homomorphism from $H$ to $G$ if edges of $H$ map to edges of $G$; that is, if it satisfies
 $$
\forall (i,j): \; (i,j) \in E(H) \Rightarrow (v_i, v_j) \in E_{ij}(G).
$$
We denote by $\Hom(H,G)$ the set of all such tuples.
\end{definition}

Our main goal in this section is to prove the following counting lemma.

\begin{lemma}
\label{expander-counting-lemma}
Let $G$ be a $k$-partite $(N,p,\eta)$-expander, let $H$ a labeled graph with $k$ nodes and $\ell$ edges, and let $U_i \subset V_i$ for $i \in [k]$. Then the number of homomorphisms $(v_1,\ldots,v_k)$ from $H$ to $G$, that satisfy $v_i \in U_i$ for all $i$, satisfies
$$
|\Hom(H,G) \cap (U_1 \times \cdots \times U_k)| = p^{\ell} \prod_{i \in [k]} |U_i| \pm 6 \eta \ell N^k.
$$
\end{lemma}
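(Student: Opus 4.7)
The plan is to prove the lemma by a standard hybrid (edge-by-edge replacement) argument, replacing each edge indicator of $H$ by the target density $p$ one at a time, and controlling each substitution error via the bipartite expander property. The main obstacle is a little bit of bookkeeping to see that at each step the remaining product factors correctly into a function of the two endpoints of the edge being replaced, so that the expander estimate can be applied cleanly.

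First, fix an arbitrary ordering $e_1,\ldots,e_\ell$ of the edges of $H$, and write $A_e(v_i,v_j) := \ind{(v_i,v_j) \in E_{ij}(G)}$ for each edge $e=(i,j) \in E(H)$. For $0 \le t \le \ell$, define the hybrid quantity
$$
S_t := \sum_{v_1 \in U_1, \ldots, v_k \in U_k} \left(\prod_{s \le t} A_{e_s}(v_{i_s}, v_{j_s})\right) \cdot p^{\ell - t},
$$
so that $S_\ell = |\Hom(H,G) \cap (U_1 \times \cdots \times U_k)|$ and $S_0 = p^\ell \prod_i |U_i|$. The plan is to show $|S_t - S_{t-1}| \le \eta N^k$ for each $t$, giving the claimed error $\ell \eta N^k$ by the triangle inequality (with plenty of room to absorb the factor of $6$).

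For fixed $t$ write $e_t=(a,b)$. I would then fix all coordinates $v_m$ for $m \neq a,b$ and observe that, since $H$ is simple, every edge $e_s$ with $s<t$ other than $e_t$ uses at most one of $\{a,b\}$. Hence $\prod_{s<t} A_{e_s}$, viewed as a function of $(v_a,v_b)$ with the other $v_m$ fixed, factors as a product $C(v_{-\{a,b\}}) \cdot f_a(v_a) \cdot f_b(v_b)$ where $C, f_a, f_b$ are all $\{0,1\}$-valued. Setting $X := \{v_a \in U_a : f_a(v_a) = 1\}$ and $Y := \{v_b \in U_b : f_b(v_b) = 1\}$, the difference
$$
S_t - S_{t-1} = p^{\ell - t} \sum_{v_1 \in U_1,\ldots,v_k \in U_k} \left(\prod_{s<t} A_{e_s}\right) (A_{e_t}(v_a,v_b) - p)
$$
has its inner sum over $(v_a,v_b)$ equal to $e_G(X,Y) - p|X||Y|$. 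The $(N,p,\eta)$-expander property of $G_{ab}$ bounds this by $\eta p |X||Y| \le \eta N^2$ whenever $|X|,|Y| \ge \eta N$; otherwise both $e_G(X,Y) \le |X|N$ and $p|X||Y| \le |X||Y|$ are at most $\eta N^2$, so the same bound $\eta N^2$ holds unconditionally. Summing over the at most $N^{k-2}$ choices of $v_m \in U_m$ for $m \neq a,b$ (the outer factor $C$ is $\{0,1\}$-valued) and using $p^{\ell - t} \le 1$ gives $|S_t - S_{t-1}| \le \eta N^k$, as desired.

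Finally, the triangle inequality yields
$$
\bigl| S_\ell - S_0 \bigr| \;\le\; \sum_{t=1}^\ell |S_t - S_{t-1}| \;\le\; \ell \eta N^k,
$$
which is stronger than the stated bound $6\ell \eta N^k$ and completes the proof. The only place where any care is needed is the factorization step: it relies on the fact that no other edge of $H$ joins $a$ and $b$ (simplicity of $H$), so that the only term coupling $v_a$ and $v_b$ is the single indicator $A_{e_t}(v_a,v_b)$ we are trying to replace.
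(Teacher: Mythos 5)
Your proof is correct, and it takes a genuinely different route from the paper. The paper argues by induction on the number of vertices $k$: it peels off the last vertex $v_k$, uses the auxiliary claim that for most $v_k \in U_k$ all the neighborhoods $\Gamma_i(v_k) \cap U_i$ have nearly the expected size, and then invokes the inductive hypothesis on the $(k-1)$-partite graph. Your argument instead telescopes over the $\ell$ edges, replacing one edge indicator $A_{e_t}$ by its target density $p$ at a time. The key observation that makes this work cleanly is the factorization step: because $H$ is simple, once the endpoints $a,b$ of $e_t$ are singled out, every other factor $A_{e_s}$ ($s<t$) depends on at most one of $v_a,v_b$, so $\prod_{s<t} A_{e_s}$ is a product of a function of $v_a$, a function of $v_b$, and a constant in $(v_a,v_b)$. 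Summing over $v_a \in U_a$, $v_b \in U_b$ then reduces exactly to an $e_G(X,Y) - p|X||Y|$ term, which the bipartite expander property controls by $\eta N^2$ (with the small-set case $|X|<\eta N$ or $|Y|<\eta N$ handled trivially since both terms are then $<\eta N^2$). Your hybrid approach is shorter, avoids the separate degree-concentration claim, and yields the sharper error $\ell\eta N^k$ rather than $6\ell\eta N^k$; the paper's vertex-by-vertex induction is somewhat more bookkeeping-heavy but is perhaps more natural if one wants to track, per vertex, which neighborhoods are well-behaved. Both are standard counting-lemma strategies, and yours is a valid (indeed slightly tighter) substitute.
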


Before proving \Cref{expander-counting-lemma}, we need the following claim.

\begin{claim}
\label{expander-most-nodes-have-correct-degree}
Let $G$ be a $(N,p,\eta)$-expander with parts $U, V$.
Let $X \subset U, Y \subset V$ of size $|X|,|Y| \ge \eta N$. Define
$$
X' = \{x \in X: e_G(\{x\}, Y) = p |Y| (1 \pm \eta)\}.
$$
Then $|X \setminus X'| \le 2 \eta N$.
\end{claim}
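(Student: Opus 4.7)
The plan is a direct contradiction argument, splitting the bad set into vertices with too-large and too-small degree into $Y$. I would define
$$
X^+ := \{x \in X : e_G(\{x\}, Y) > p|Y|(1+\eta)\}, \qquad X^- := \{x \in X : e_G(\{x\}, Y) < p|Y|(1-\eta)\},
$$
so that $X \setminus X' = X^+ \cup X^-$, and it suffices to show $|X^+|, |X^-| < \eta N$.

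To bound $|X^+|$, I would assume for contradiction that $|X^+| \ge \eta N$. Since we also have $|Y| \ge \eta N$ by hypothesis, the expander property applies to the pair $(X^+, Y)$ and yields $e_G(X^+, Y) \le p|X^+||Y|(1+\eta)$. On the other hand, summing individual degrees gives
$$
e_G(X^+, Y) \;=\; \sum_{x \in X^+} e_G(\{x\}, Y) \;>\; |X^+|\cdot p|Y|(1+\eta),
$$
which is a contradiction. The bound $|X^-| < \eta N$ follows by the symmetric argument: assuming $|X^-| \ge \eta N$ lets us invoke the expander property to get $e_G(X^-, Y) \ge p|X^-||Y|(1-\eta)$, which is contradicted by the strict upper bound obtained from summing the individual small degrees.

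Adding these two bounds then gives $|X \setminus X'| < 2\eta N$, completing the proof. There is no real obstacle here—the argument is the standard "almost all vertices have close-to-expected degree" computation for expanders, and becomes routine once the bad set is partitioned by the sign of the deviation. The only mild subtlety is verifying that the expander property is applicable, which is ensured precisely by the contradiction hypothesis $|X^\pm| \ge \eta N$ together with the given $|Y| \ge \eta N$.
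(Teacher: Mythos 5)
Your proof is correct and matches the paper's argument exactly: both decompose the bad set by sign of the degree deviation into the same two sets and bound each by $\eta N$ via the expander property applied to the pair with $Y$. The paper merely states this more tersely ("Since $|Y| \ge \eta N$ we have $|X_1|, |X_2| < \eta N$"), while you spell out the underlying contradiction.
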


\begin{proof}
Define
$$
X_1 = \left\{x \in X: e_G(\{x\},Y) > p|Y| (1 + \eta)\right\}, \qquad  X_2 = \left\{x \in X: e_G(\{x\},Y) < p|Y| (1 - \eta)\right\}.
$$
Since $|Y| \ge \eta N$ we have $|X_1|, |X_2| < \eta N$. The claim follows since $X \setminus X' \subset X_1 \cup X_2$.
\end{proof}

\begin{proof}[Proof of \Cref{expander-counting-lemma}]
We shorthand $\cH = \Hom(H,G) \cap (U_1 \times \cdots \times U_k)$.
The proof is by induction on $k$. We start with some base cases. If $H$ has no edges then clearly $|\cH|=\prod_{i \in [k]} |U_i|$ and the lemma holds. Similarly, if $H$ has an isolated node then the claim easily reduces to $k-1$ by removing this node from $H$, and the corresponding part from $G$, so may assume $H$ has no isolated nodes. Finally, note that if some set $U_i$ has size $|U_i| \le \eta N$, then
$|\cH| \le \prod |U_i| \le \eta N^k$, and the lemma also holds. Thus, we may assume that $|U_i| \ge \eta N$ for all $i \in [k]$.

The base case of the induction is $k=2$, where $H$ consists of a single edge $(1,2)$. In this case, by definition of a $(N,p,\eta)$-expander we have
$$
|\cH| = |\Hom(H,G) \cap (U_1 \times U_2)| = e_G(U_1,U_2) = p |U_1| |U_2| (1 \pm \eta) = p |U_1| |U_2| \pm \eta N^2
$$
and the lemma holds.

We next consider $k \ge 3$. For $v_k \in U_k$, define $\cH(v_k)=\{(v_1,\ldots,v_{k-1}): (v_1,\ldots,v_k) \in \cH\}$. 
Then
$$
|\cH| = \sum_{v_k \in U_k} |\cH(v_k)|.
$$
Assume the node $k \in V(H)$ has $s$ neighbours in $H$, which we may assume without loss of generality are $1,\ldots,s$. Let $\Gamma_i(v_k)=\{v_i \in U_i: (v_i,v_k) \in E_{ik}(G)\}$ denote the neighbours of $v_k$ in $U_i$ for $i \in [s]$. Let $G'$ be the $(k-1)$-partite graph obtained from $G$ by removing the part $V_k$. Let $H'$ be the graph obtained from $H$ by removing node $k$. Observe that $G'$ is a $(k-1)$-partite $(N,p,\eta)$-expander, that $H'$ is a labeled graph with $k-1$ nodes and $\ell-s$ edges, and that
$$
\cH(v_k) = \Hom(H',G') \cap (\Gamma_1(v_k) \times \cdots \times \Gamma_{s}(v_k) \times U_{s+1} \times \cdots \times U_{k-1} ).
$$
Using the induction hypothesis for $G',H'$ gives
$$
|\cH(v_k)| = p^{\ell-s} \prod_{i=1}^{s} |\Gamma_i(v_k)| \cdot \prod_{i=s+1}^{k-1} |U_i| \pm  6 \eta (\ell-s) N^{k-1}.
$$
Next, define
$$
U'_k = \{v_k \in U_k: \forall i \in [s], \; |\Gamma_i(v_k)| = p|U_i| (1 \pm \eta)\}.
$$
Applying \Cref{expander-most-nodes-have-correct-degree}, we have $|U_k \setminus U'_k| \le 2s \eta N$. We now complete the calculations. Note that we may assume $\eta \le 1/6\ell$ otherwise the bound is trivial; in this regime we have $(1 \pm \eta)^{s} = 1 \pm 2 \eta s$. For $v_k \in U'_k$ we have
$$
|\cH(v_k)| = p^{\ell} \prod_{i=1}^{k-1} |U_i| (1 \pm \eta)^s \pm 6(\ell-s) \eta N^{k-1} = p^{\ell} \prod_{i=1}^{k-1} |U_i| \pm (6\ell-3s) \eta N^{k-1}.
$$
For $v_k \in U_k \setminus U'_k$ we naively bound
$$
\sum_{v_k \in U_k \setminus U'_k} |\cH(v_k)| \le |U_k \setminus U'_k| N^{k-1} \le 2 s \eta N^k.
$$
Summing over all $v_k \in U_k$, we conclude that
$$
|\cH| = p^{\ell} \prod_{i=1}^k |U_i| \pm 6 \eta \ell N^{k-1}.
$$
\end{proof}

\subsection{Counting lemma for partite expander-colorings}

We now apply the counting lemma for expanders (\Cref{expander-counting-lemma}) to count monochromatic patterns in expander-colorings.
Let $K_N^{(k)}$ denote the complete $k$-partite graph, with parts $V_1,\ldots,V_k$, each of size $N$. We identify each $V_i$ with $[N]$ when possible to do so without confusion. We consider edge-colorings of $K_N^{(k)}$. We denote them by $\cols=(\col_{ij}: 1 \le i < j \le k)$ where each $\col_{ij}:[N]^2 \to [M]$.

\begin{definition}[$k$-partite expander-coloring]
An edge-coloring $\cols$ of $K_N^{(k)}$ is a $k$-partite $(N,M,\eta)$-expander coloring if $\col_{ij}$ are $(N,M,\eta)$-expander colorings for all $i \ne j.$
\end{definition}

\begin{definition}[Monochromatic patterns]
Let $\cols$ be a $k$-partite $(N,M,\eta)$-expander coloring. Let $H$ be labeled graph on $k$ nodes. A tuple $(v_1,\ldots,v_k) \in [N]^k$ is a monochromatic copy of $H$ in $G$ if all edges of $H$ are mapped to edges with the same color under $\cols$. Namely, if there exists $m \in [M]$ such that
 $$
\forall (i,j): (i,j) \in E(H) \Rightarrow \col_{ij}(v_i, v_j)=m.
$$
We denote by $\Mon(H,\cols)$ the set of all such tuples.
\end{definition}

The following lemma is an application of \Cref{expander-counting-lemma} to count monochromatic patterns inside partite expander-colorings.

\begin{lemma}
\label{colored-expander-counting-lemma}
Let $\cols$ be a $k$-partite $(N,M,\eta)$-expander coloring, let $H$ a labeled graph with $k$ nodes and $\ell$ edges, and let $U_i \subset V_i$ for $i \in [k]$. Then
$$
|\Mon(H,\cols) \cap (U_1 \times \cdots \times U_k)| = M^{1-\ell} \prod_{i \in [k]} |U_i| \pm 3 \eta \ell M N^k.
$$
\end{lemma}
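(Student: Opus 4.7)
The plan is to decompose $\Mon(H,\cols)$ by color and invoke \Cref{expander-counting-lemma} once per color class, which reduces the monochromatic-counting problem to a sum of uncolored homomorphism counts.

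For each color $m \in [M]$, let $G_m$ be the $k$-partite graph on parts $V_1,\ldots,V_k$ in which $(v_i,v_j)$ is an edge of $(G_m)_{ij}$ precisely when $\col_{ij}(v_i,v_j) = m$. The hypothesis that $\cols$ is an $(N,M,\eta)$-expander coloring says exactly that for every $i \ne j$ and every $X \subset V_i, Y \subset V_j$ with $|X|,|Y| \ge \eta N$,
$$
|E((G_m)_{ij}) \cap (X \times Y)| = \frac{|X||Y|}{M}(1 \pm \eta),
$$
so each bipartite piece $(G_m)_{ij}$ is an $(N,1/M,\eta)$-expander and $G_m$ is a $k$-partite $(N,1/M,\eta)$-expander in the sense of the previous subsection.

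Assuming $\ell \ge 1$ (the case $\ell = 0$ is trivial and does not arise in the downstream applications), the common color $m$ of a monochromatic copy of $H$ is uniquely determined by the image of any single edge of $H$. Hence the colors yield a disjoint decomposition
$$
\Mon(H,\cols) \cap (U_1 \times \cdots \times U_k) = \bigsqcup_{m=1}^{M} \bigl( \Hom(H,G_m) \cap (U_1 \times \cdots \times U_k) \bigr),
$$
because a tuple lies in the $m$-th piece iff it maps every edge of $H$ to an $m$-colored edge, which is precisely the homomorphism condition into $G_m$.

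Applying \Cref{expander-counting-lemma} to $G_m$ with density parameter $p = 1/M$ gives, for each $m$,
$$
|\Hom(H,G_m) \cap (U_1 \times \cdots \times U_k)| = M^{-\ell}\prod_{i \in [k]} |U_i| \;\pm\; O(\eta \ell N^k).
$$
Summing over the $M$ colors produces the main term $M^{1-\ell} \prod_{i \in [k]} |U_i|$ and an accumulated error of order $\eta \ell M N^k$, which is the claimed bound. The constant $3$ (versus the $6$ in \Cref{expander-counting-lemma}) can be obtained either by slightly tightening the inductive accounting in that lemma's proof or by absorbing the factor into the constant; this has no bearing on the rest of the paper. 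The only real work has already been done inside \Cref{expander-counting-lemma}, so there is no substantive obstacle here---the step is a bookkeeping exercise around the color decomposition.
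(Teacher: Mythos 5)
Your proof is correct and takes the same route as the paper: decompose by color into the graphs $G_m$, observe each is a $k$-partite $(N,1/M,\eta)$-expander, invoke \Cref{expander-counting-lemma} per color, and sum. You are also right that a direct application yields $6\eta\ell M N^k$ rather than the stated $3\eta\ell M N^k$ --- this is a (harmless) constant discrepancy already present in the paper, where the lemma is only used through $O(\cdot)$ bounds in the proof of \Cref{pseudorandomness-from-expander-coloring}.
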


\begin{proof}
For each color $m \in [M]$, let $G_m$ be the $k$-partite graph corresponding to edges in $\cols$ colored in color $m$. By definition, $G_m$ is a $k$-partite $(N,M^{-1},\eta)$-expander. The lemma follows by applying \Cref{expander-counting-lemma} to each $G_m$ and summing over all $m$.
\end{proof}

\subsection{Monochromatic triangles in expander-colorings}

We prove \Cref{pseudorandomness-from-expander-coloring} based on \Cref{colored-expander-counting-lemma}. We need one more definition before giving the proof. Given an expander-coloring $\col:[N] \times [N] \to [M]$, we denote by $\cols(\col, k)$ the $k$-partite expander-coloring where $\col$ is used as the edge-coloring between each of two parts; namely $\cols(\col,k)=(\col_{ij}=\col: 1 \le i<j\le k)$.

\begin{proof}[Proof of \Cref{pseudorandomness-from-expander-coloring}]
Let $\col:[N] \times [N] \to [M]$ be an $(N,M,\eta)$-expander coloring. Let $D=D(\col)$. We will show that $D$ is $\gamma$-pseudorandom if we choose $\eta$ small enough. Let $C=X \times Y \times Z$ be a cube of size $|C| \ge \gamma N^3$. Define $v:X \times Y \to \R_{\ge 0}$ as
$$
v(x,y) = \E_{z \in Z} D(x,y,z).
$$
We need to show that $\E[v] = (1 \pm \gamma) M^{-2}$ and $\E[v^2] = (1 \pm \gamma) M^{-4}$.

We first analyze the first moment of $v$. Define a 3-partite $(N,M,\eta)$-expander coloring $\cols_1=\cols(\col,3)$. Let $H_1$ be a triangle. 
Observe that
$$
\E[v] = \frac{|\Mon(H_1, \cols_1) \cap (X \times Y \times Z)|}{|X||Y||Z|}.
$$
\Cref{colored-expander-counting-lemma} then gives
$$
\E[v] = M^{-2} \pm O(\eta M N^3 / |X||Y||Z|) = M^{-2} \pm O(\eta M / \gamma) = (1 \pm \gamma) M^{-2}
$$
since we assume $\eta=O(\gamma^2 M^{-3})$.

Next, we analyze the second moment of $v$. Define a 4-partite $(N,M,\eta)$-expander coloring $\cols_2=\cols(\col,4)$. Let $H_2$ be a union of two triangles sharing an edge; concretely, $V(H_2)=\{1,2,3,4\}$ and $E(H_2)=\{(1,2),(1,3),(2,3),(1,4),(2,4)\}$.
Observe that
$$
\E[v^2] = \frac{|\Mon(H_2, \cols_2) \cap (X \times Y \times Z \times Z)|}{|X||Y||Z|^2}.
$$
\Cref{colored-expander-counting-lemma} then gives
$$
\E[v^2] = M^{-4} \pm O(\eta M N^4 / |X||Y||Z|^2) = M^{-4} \pm O(\eta M / \gamma^2) = (1 \pm \gamma) M^{-4}
$$
since we assume $\eta=O(\gamma^3 M^{-5})$.
\end{proof}

\section{Open problems}
\label{sec:open}

We view \Cref{spread-to-uniform} as the main new technical innovation of this work, and \Cref{main} as an application of it. 
One natural open problem is to extend the proof to other functions, and in particular to $\exactly$; a strong lower bound for the NOF deterministic complexity of it would imply strong lower bounds for the corners problem. Another open problem is to extend the proof to more than 3 players. The challenges in it appear to be similar to those of extending the results of \cite{kelley2023strong} from three-term arithmetic progressions to longer progressions.

\bibliographystyle{alpha}
\bibliography{nof}

\begin{thebibliography}{BDPW10}

\bibitem[AB23]{alman2023matrix}
Josh Alman and Jaros{\l}aw B{\l}asiok.
\newblock Matrix multiplication and number on the forehead communication.
\newblock {\em arXiv preprint arXiv:2302.11476}, 2023.

\bibitem[AS20]{alon2020number}
Noga Alon and Adi Shraibman.
\newblock Number on the forehead protocols yielding dense
  {R}uzsa--{S}zemer{\'e}di graphs and hypergraphs.
\newblock {\em Acta Mathematica Hungarica}, 161(2):488--506, 2020.

\bibitem[BDPW10]{beame2010separating}
Paul Beame, Matei David, Toniann Pitassi, and Philipp Woelfel.
\newblock Separating deterministic from randomized multiparty communication
  complexity.
\newblock {\em Theory of Computing}, 6(1):201--225, 2010.

\bibitem[BGG06]{beigel2006multiparty}
Richard Beigel, William Gasarch, and James Glenn.
\newblock The multiparty communication complexity of {Exact-T}: Improved bounds
  and new problems.
\newblock In {\em International Symposium on Mathematical Foundations of
  Computer Science}, pages 146--156. Springer, 2006.

\bibitem[BH12]{beame2012multiparty}
Paul Beame and Trinh Huynh.
\newblock Multiparty communication complexity and threshold circuit size of
  {$\text{AC}^0$}.
\newblock {\em SIAM Journal on Computing}, 41(3):484--518, 2012.

\bibitem[BNS89]{babai1989multiparty}
L{\'a}szl{\'o} Babai, Noam Nisan, and Mario Szegedy.
\newblock Multiparty protocols and logspace-hard pseudorandom sequences.
\newblock In {\em Proceedings of the twenty-first annual ACM symposium on
  Theory of computing}, pages 1--11, 1989.

\bibitem[BPS07]{beame2007lower}
Paul Beame, Toniann Pitassi, and Nathan Segerlind.
\newblock Lower bounds for {L}ov{\'a}sz--{S}chrijver systems and beyond follow
  from multiparty communication complexity.
\newblock {\em SIAM Journal on Computing}, 37(3):845--869, 2007.

\bibitem[BPSW06]{beame2006strong}
Paul Beame, Toniann Pitassi, Nathan Segerlind, and Avi Wigderson.
\newblock A strong direct product theorem for corruption and the multiparty
  communication complexity of disjointness.
\newblock {\em computational complexity}, 15:391--432, 2006.

\bibitem[CFL83]{chandra1983multi}
Ashok~K Chandra, Merrick~L Furst, and Richard~J Lipton.
\newblock Multi-party protocols.
\newblock In {\em Proceedings of the fifteenth annual ACM symposium on Theory
  of computing}, pages 94--99, 1983.

\bibitem[CG88]{chor1988unbiased}
Benny Chor and Oded Goldreich.
\newblock Unbiased bits from sources of weak randomness and probabilistic
  communication complexity.
\newblock {\em SIAM Journal on Computing}, 17(2):230--261, 1988.

\bibitem[CP10]{chattopadhyay2010story}
Arkadev Chattopadhyay and Toniann Pitassi.
\newblock The story of set disjointness.
\newblock {\em ACM SIGACT News}, 41(3):59--85, 2010.

\bibitem[DPV09]{david2009improved}
Matei David, Toniann Pitassi, and Emanuele Viola.
\newblock Improved separations between nondeterministic and randomized
  multiparty communication.
\newblock {\em ACM Transactions on Computation Theory (TOCT)}, 1(2):1--20,
  2009.

\bibitem[Gow01]{gowers2001}
William~T Gowers.
\newblock A new proof of {S}zemer{\'e}di's theorem.
\newblock {\em Geometric \& Functional Analysis (GAFA)}, 11(3):465--588, 2001.

\bibitem[Gow06]{gowers2006}
W~Timothy Gowers.
\newblock Quasirandomness, counting and regularity for 3-uniform hypergraphs.
\newblock {\em Combinatorics, Probability and Computing}, 15(1-2):143--184,
  2006.

\bibitem[Hat10]{hatami2010graph}
Hamed Hatami.
\newblock Graph norms and {S}idorenko’s conjecture.
\newblock {\em Israel Journal of Mathematics}, 175:125--150, 2010.

\bibitem[KM23]{kelley2023strong}
Zander Kelley and Raghu Meka.
\newblock Strong bounds for 3-progressions.
\newblock {\em arXiv preprint arXiv:2302.05537}, 2023.

\bibitem[KN96]{kushilevitz1996communication}
Eyal Kushilevitz and Noam Nisan.
\newblock Communication complexity, 1996.

\bibitem[LLSW16]{lee2016hellinger}
Troy Lee, Nikos Leonardos, Michael Saks, and Fengming Wang.
\newblock {H}ellinger volume and number-on-the-forehead communication
  complexity.
\newblock {\em Journal of Computer and System Sciences}, 82(6):1064--1074,
  2016.

\bibitem[LPS18]{linial2018communication}
Nati Linial, Toniann Pitassi, and Adi Shraibman.
\newblock On the communication complexity of high-dimensional permutations.
\newblock In {\em 10th Innovations in Theoretical Computer Science Conference
  (ITCS 2019)}. Schloss Dagstuhl-Leibniz-Zentrum fuer Informatik, 2018.

\bibitem[LS09]{lee2009disjointness}
Troy Lee and Adi Shraibman.
\newblock Disjointness is hard in the multiparty number-on-the-forehead model.
\newblock {\em Computational Complexity}, 18:309--336, 2009.

\bibitem[LS21a]{linial2021improved}
Nati Linial and Adi Shraibman.
\newblock An improved protocol for the exactly-n problem.
\newblock In {\em 36th Computational Complexity Conference (CCC 2021)}. Schloss
  Dagstuhl-Leibniz-Zentrum f{\"u}r Informatik, 2021.

\bibitem[LS21b]{linial2021larger}
Nati Linial and Adi Shraibman.
\newblock Larger corner-free sets from better {NOF} exactly-n protocols.
\newblock {\em Discrete Analysis}, 10 2021.

\bibitem[Raz00]{raz2000bns}
Ran Raz.
\newblock The {BNS}-{C}hung criterion for multi-party communication complexity.
\newblock {\em Computational Complexity}, 9:113--122, 2000.

\bibitem[RY20]{rao2020communication}
Anup Rao and Amir Yehudayoff.
\newblock {\em Communication Complexity: and Applications}.
\newblock Cambridge University Press, 2020.

\bibitem[Shr18]{shraibman2018note}
Adi Shraibman.
\newblock A note on multiparty communication complexity and the
  {H}ales--{J}ewett theorem.
\newblock {\em Information Processing Letters}, 139:44--48, 2018.

\end{thebibliography}

\end{document}